\theoremstyle{plain}
\newtheorem{theorem}{Theorem}
\newtheorem{lemma}{Lemma}
\newtheorem{corollary}{Corollary}
\newtheorem{observation}{Observation}
\def\Tr{\operatorname{Tr}}
\def\bea{\begin{eqnarray}}
\def\eea{\end{eqnarray}}
\def\ba{\begin{array}}
\def\ea{\end{array}}
\def\beq{\begin{equation}}
\def\eeq{\end{equation}}
\let\emptyset\varnothing
\begin{document}


\title{Entanglement-Constrained Quantum Metrology: Rapid Low-Entanglement Gains, Tapered High-Level Growth}


\author{Debarupa Saha}
\author{Ujjwal Sen}
\affiliation{Harish-Chandra Research Institute, A CI of Homi Bhabha National Institute, Chhatnag Road, Jhunsi, Prayagraj - 211019, India}

\begin{abstract}

It is a specific type of quantum 
correlated state that achieves optimal precision in parameter estimation under unitary encoding. 
%
%
We consider the potential experimental limitation on probe entanglement, and find a relation between achievable precision and initial probe entanglement, in both bipartite and multipartite scenarios. 
For two-qubit probes, we analytically derive an exact relationship between the entanglement-constrained optimal quantum Fisher information and the limited initial entanglement, measured via both generalized geometric measure and entanglement entropy. We demonstrate that this fundamental relationship persists across the same range of the entanglement measures even when higher-dimensional bipartite probes are considered. Furthermore, we identify the specific states that realize maximum precision in these scenarios. Additionally, by considering the  geometric measure of entanglement, we extend our approach to multiqubit probes.
We find that in every case, the optimal quantum Fisher information exhibits a universal behavior: a steep increase in the low-entanglement regime, followed by a gradual and nearly-saturated improvement as the probe entanglement approaches values close to those required for achieving the Heisenberg limit.

\end{abstract}
\maketitle
\section{Introduction}
The central goal of quantum metrology~\cite{met1, met2, intf1, M1} is to identify optimal strategies that minimize the fluctuations arising in the estimation of parameters of quantum systems.
%
%
From a practical standpoint, such tasks have numerous important applications. These include determining optical phase shifts in interferometry~\cite{Enten,Enten2,intf1, intf2,intf3, Exp2,intf5, Rev2}, measuring transition frequencies in atomic clocks~\cite{CLock1,Clock2,Cl1}, sensing magnetic fields  (magnetometry)~\cite{Magn1,Magn2}, and probing temperature (thermometry)~\cite{Th1}.
Quantum metrology has been experimentally realized on platforms such as two-qubit NMR quantum simulators~\cite{Exp1}. For further theoretical and experimental developments, we refer the reader to Refs.~\cite{Rev10,Rev3,Rev18,rev24,Rev2,Rev1}.

A range of quantum  resources including entanglement~\cite{Entr1,pro1,Entr2,GGM3}, quantum coherence~\cite{co1,CoAberg}, non-Markovianity~\cite{NMRe,NMRe1,NMRe2,NMRe3}, and squeezing~\cite{SQ1,SQ2,SQ3} has been employed to enhance precision in quantum metrology. For instance, Refs.~\cite{met1,uni1,Enten,Enten2,intf3,Enten3,Toth,Ent5,Ent4,Ent6} demonstrate the role of entangled probes in achieving quantum-enhanced metrological precision. Additional improvements in precision have been reported in Refs.~\cite{aux1,Aux14,Auxn,Aux16,Aux23,aux5}, where entanglement between the probe and an auxiliary system is exploited.
 Ref.~\cite{Ent2} gave a sufficient criterion for states to be entangled in terms of the quantum Fisher information (QFI). Coherence-assisted precision enhancement is discussed in Ref.~\cite{cohf}. In addition, Refs.~\cite{NMR,NMarkov} investigated quantum metrology in the non-Markovian paradigm. Enhanced precision in estimation using  squeezing is observed in Refs.~\cite{SQZ1,Rev10}.

Our focus in this article will mainly hover around the utility of entanglement as a resource in quantum metrology, in the case of unitary encoding.
It is well established that a specific type of entangled states, viz. the Greenberger-Horne-Zeilinger states (also referred to as cat states)~\cite{st1,st2},
provides the best optimal precision in phase estimation. This (ideal) optimal precision is known as the Heisenberg limit~(HL). On the other hand, the best  precision with input states  having zero entanglement is know as the standard quantum limit~(SQL). 

In realistic scenarios, the generation of entanglement is often constrained by experimental limitations. Under such conditions, it becomes essential to understand the ultimate precision that can be achieved when only a limited amount of entanglement is available as a resource. While several studies such as Refs.~\cite{Toth,MPE} have explored the optimal precision attainable by specific classes of multipartite entangled states, a general and quantitative relationship between the available entanglement and the best achievable precision has remained elusive.

In this article, we aim to bridge this gap by establishing a direct connection between the optimal QFI and the entanglement content of the initial probe state. Specifically, we perform an optimization of the QFI under the constraint of a fixed amount of entanglement in the input probe, thereby shedding light on how precision scales with entanglement in practical metrological settings.  The optimization is over initial states of fixed entanglement, quantified by (i) the geometric measure (GM)~\cite{GM1,GM2,GGM1}, 
(ii) the generalized geometric measure (GGM)~\cite{PhysRevA.81.012308, GM3, GGM2},  and (iii) the von Neumann entanglement entropy~\cite{Entropy}.

We begin with bipartite probes consisting of two qubits, for which we derive an exact, closed-form, relation between the entanglement-constrained optimal QFI and the fixed initial entanglement, quantified either by the GM or by the von Neumann entanglement entropy. We then show that the same relation holds for two-qudit probes across the same range of entanglement values. In each case, we identify the probe state that maximizes the QFI for a given entanglement.
Next, we extend our analysis to multipartite probes composed of $N$ qubits, using both GGM and GM to quantify entanglement. 
We employ numerical optimization to again explore how the entanglement-constrained optimal precision varies with input entanglement. 

In both bipartite and multipartite settings, and for all the measures considered, we identify a universal behavior: with input entanglement varying between values corresponding to the SQL and the HL, precision enhances steeply in the low‑entanglement regime, but then the growth in precision slows down and nearly saturates as entanglement approaches the Heisenberg limit. This shows that  preparing states with entanglement levels near the HL threshold is enough to achieve near‑optimal precision. Our results thus pave the way for metrological protocols that deliver near-maximum performance under realistic entanglement constraints.

The rest of the paper is organized as follows. In Sec.~\ref{notation}, we introduce the notations used throughout this article. In Sec.~\ref{Prem}, we briefly review the necessary preliminary concepts. Specifically, in Sec.~\ref{EM} we discuss the entanglement measures employed, and in Sec.~\ref{Unit} we provide an short overview of quantum metrology under unitary encoding, highlighting the role of entanglement. In Sec.~\ref{Qubit}, we analyze the two-qubit case, where we derive a closed-form relation between the fixed input entanglement and the corresponding optimal QFI, and also identify the optimal probe states. In Sec.~\ref{Qudit}, we extend this analysis to bipartite probes in higher dimensions. Sec.~\ref{multi} addresses the case of multipartite probe states. Finally, we summarize our findings and conclude in Sec.~\ref{Con}.

\section{Notations}
\label{notation}
Before presenting our results and observations, we introduce the notations and conventions used throughout this work to denote various physical quantities in the joint Hilbert space $(\mathbb{C}^d)^{\otimes N}$ of the multipartite probe system. Here, $d$ denotes the dimension of each individual subsystem of the probe, and $N$ represents the number of parties. The generator of the unitary evolution acting on $(\mathbb{C}^d)^{\otimes N}$ is given by a Hermitian operator denoted as $H_{N,d} = h_{N,d} \theta'$, where $\theta = \theta' t$ is the parameter to be estimated, with $t$ being the evolution time. The Hermitian operator $h_{N,d}$ is given by 
$h_{N,d} = \sum_{k=1}^N \mathcal{Z}_d^{(k)}$, 
where $\mathcal{Z}_d^{(k)}$ represents a local Hermitian component acting on the $k^{\text{th}}$ subsystem of the multipartite probe. Each $\mathcal{Z}_d^{(k)}$ has a set of eigenvalues $\{\eta_j^d\}$ and corresponding eigenvectors $\{\ket{j_d}\}$, indexed by $j = 0, 1, \ldots, d-1$. The eigenstate of $H_{N,d}$ is denoted as $\ket{kl}_d$, where $k,l = 0, 1, \ldots, d-1$. The identity operator defined on a Hilbert space $\mathbb{C}^L$ of dimension $L$ is denoted by $\mathcal{I}_L$.

 The Initial probe state defined in $(\mathbb{C}^d)^{\otimes N}$ is denoted by $\ket{\psi^{\text{in}}_{N,d}}$, where ``in'' stands for ``initial.'' And the final encoded state after the unitary evolution is denoted by $\ket{\psi^{\text{f}}_{N,d}}$, where ``f'' stands for ``final.''

 Entanglement-constrained optimal QFI corresponding to the encoded states defined on $(\mathbb{C}^d)^{\otimes N}$, at fixed GGM $G$, von Neumann entanglement entropy $S$, and GM $\mathcal{GM}$, is denoted by $Q^G_{N,d}$, $Q^S_{N,d}$, and $Q^{\mathcal{GM}}_{N,d}$, respectively. The corresponding optimal initial probe states at fixed $G$, $S$, and $\mathcal{GM}$ are represented as $\ket{\psi^o_{N,d}(G)}$ and $\ket{\psi^o_{N,d}(S)}$,  where the superscript ``o'' stands for ``optimal.'' The set of initial probe states on $(\mathbb{C}^d)^{\otimes N}$ with fixed initial entanglement $\mathcal{E}_{X}(\ket{\psi^{\text{in}}_{N,d}})$ is denoted by $\chi^{\mathcal{E}}_{N,d}$, where $X$ can refer to the GGM ($G$), von Neumann entanglement entropy ($S$), or GM, $\mathcal{GM}$.

\section{Preliminaries}
\label{Prem}
\subsection{Entanglement Measures}
\label{EM}
\noindent
In this subsection we briefly review the entanglement measures used in this article, namely the GM~\cite{GM1,GM2,GM3}, the GGM~\cite{GGM1,GGM2,GGM3}, and the von Neumann entanglement entropy~\cite{Entropy}.\\

\noindent
The geometric measure of entanglement for a pure quantum state, denoted by the density matrix $\rho = \ketbra{\psi}{\psi}$ and acting on the joint Hilbert space $(\mathbb{C}^d)^{\otimes N}$ of $N$ parties, each with local subsystem dimension $d$, is defined as follows
$$
\mathcal{E}_\mathrm{GM} = \min_{\ket{\phi} \in \chi^N_{\mathrm{sep}}} \mathcal{D}(\ket{\psi}, \ket{\phi}),
$$
where $\chi^N_{\mathrm{sep}}$ is the set of fully separable pure product states of the form $\ket{\phi} = \ket{\phi_1} \otimes \ket{\phi_2} \otimes \ldots \otimes \ket{\phi_N}$, with each $\ket{\phi_i}$, $i = 1, 2, \ldots, N$, being a pure state defined in the local Hilbert space $\mathbb{C}^d$. The distance $\mathcal{D}(\ket{\psi}, \ket{\phi})$ is taken as
$$
\mathcal{D}(\ket{\psi}, \ket{\phi}) = 1 - |\braket{\psi | \phi}|^2.
$$
Using this form, the geometric measure simplifies to
$$
\mathcal{E}_\mathrm{GM} = \min_{\ket{\phi} \in \chi^N_{\mathrm{sep}}} \left( 1 - |\braket{\psi | \phi}|^2 \right).
$$

On the other hand, the generalized geometric measure of entanglement is defined as
$$
\mathcal{E}_\mathrm{GGM} = \min_{\ket{\bar{\phi}} \in \chi^2_{\mathrm{sep}}} \mathcal{D}(\ket{\psi}, \ket{\bar{\phi}}),
$$
where $\chi^2_{\mathrm{sep}}$ denotes the set of all pure 2-separable states~\cite{ksepb,ksepa,ksep, pro1}, $\ket{\bar{\phi}} = \ket{\bar{\phi}_A}\otimes \ket{\bar{\phi}_B}$, which are separable across some bipartition $A:B$ of the total system, with $A \cup B = \{1, 2, \ldots, N\}$ and $A \cap B=\emptyset$. Thus, the GGM quantifies genuine multipartite entanglement in $\ket{\psi}$. In a computable form, it can be written as
$$
\mathcal{E}_\mathrm{GGM} = 1 - \max_{A:B} \lambda_{A:B}^2,
$$
where the maximization is over all possible bipartitions $A:B$, and $\lambda_{A:B}$ is the largest Schmidt coefficient of $\ket{\psi}$ in the corresponding bipartition. Note that for pure bipartite states, defined on $(\mathbb{C}^d)^{\otimes 2}$ with $N = 2$, the geometric measure and the generalized geometric measure are equivalent.\\

\noindent
Another important measure is the von Neumann entanglement entropy. For a pure bipartite state $\rho_{AB}$ defined on $(\mathbb{C}^d)^{\otimes 2}$, the entanglement entropy is given by 
$\mathcal{E}_S = -\Tr[\rho_{A/B} \log_2 \rho_{A/B}]$, 
where $\rho_{A/B} = \Tr_{B/A}[\rho_{AB}]$ denotes the reduced density matrix obtained by tracing out subsystem $B$ (or $A$), respectively.\\

\noindent
We employ these three entanglement measures to quantify the entanglement present in the probe states used for parameter estimation under unitary encoding. In the following subsection, we present a brief overview of unitary encoding in quantum metrology and highlight the role of entanglement in enhancing estimation precision.

\subsection{Brief Overview of Unitary Encoding and Entanglement in Optimal Probe}
\label{Unit}
\noindent
The essential step in quantum metrology is encoding the parameter of interest into the initial probe state. Depending on the parameter, this may involve open system encoding~\cite{Dcoh,Qch,Therm,op1,op,op2}. In this article, however we restrict ourselves to closed system unitary encoding~\cite{uni1,M1,uni2,Toth,COref,uni5,uni4}, where the initial probe state $\ket{\psi^{\text{in}}_{N,d}}$ evolves unitarily into the encoded state $\ket{\psi^\text{f}_{N,d}}$ as
$\ket{\psi^\text{f}_{N,d}} = U \ket{\psi^{\text{in}}_{N,d}}$,
with $U = e^{-iH_{N,d}t}$ and $i = \sqrt{-1}$. Both $\ket{\psi^{\text{in}}_{N,d}}$ and $\ket{\psi^{\text{f}}_{N,d}}$ are defined on the Hilbert space $(\mathbb{C}^d)^{\otimes N}$.
Here, $H_{N,d}$ is a Hermitian operator defined on the Hilbert space $(\mathbb{C}^d)^{\otimes N}$ and has units of inverse time, i.e., $\tau^{-1}$, where $\tau$ is the unit of time and $t$ is the duration of evolution. We adopt natural units by setting Planck’s constant $\hbar = 1$.\\

\noindent
The generator $H_{N,d}$ can be written as $H_{N,d} = h_{N,d} \theta'$, where $h_{N,d}$ is a dimensionless Hermitian operator and $\theta'$ has units of $\tau^{-1}$. Thus, the evolution becomes
$\ket{\psi^\text{f}_{N,d}} = e^{-i h_{N,d} \theta} \ket{\psi^{\text{in}}_{N,d}}$,
where $\theta = \theta' t$ is the parameter to be estimated.\\

\noindent
After encoding, the next step is to measure the state $\ket{\psi^\text{f}_{N,d}}$ using a set of Positive‐Operator Valued Measure (POVM) operators $\{M_x\}$, with $M_x \geq 0$ and $\sum_x M_x=\mathcal{I}_{Nd}$. The probability of obtaining outcome $x$ is given by $\bra{\psi^\text{f}_{N,d}} M_x \ket{\psi^\text{f}_{N,d}}$. Based on these probabilities and an estimator $\Theta(x)$, the parameter $\theta$ is inferred.\\

\noindent
An estimator is said to be \textit{unbiased} if its expectation value over all outcomes equals the true value of the parameter, i.e.,
$\mathbb{E}(\Theta) = \sum_{x} \Theta(x) \bra{\psi^f_{N,d}} M_x \ket{\psi^\text{f}_{N,d}}=\theta$.
For unbiased estimators, the variance in estimating $\theta$ is bounded below by the quantum Cramér-Rao bound~\cite{Wooters,Braunstein1,holevo,Rev2,review1} as
$$\Delta^2 \theta \geq \frac{1}{m Q(\ket{\psi^\text{f}_{N,d}})},$$
where $m$ is the number of times the measurements have been repeated with same initial state, $\ket{\psi^\text{in}_{N,d}}$ and same measurement setting, $\{M_x\}$ and $Q(\ket{\psi^\text{f}_{N,d}})$ is the QFI for the state $\ket{\psi^\text{f}_{N,d}}$. For pure states and unitary encoding, the QFI takes the simple form
$Q(\ket{\psi^\text{f}_{N,d}}) = 4 \Delta^2h_{N,d}^{\text{in}}$,
where $\Delta^2h_{N,d}^{\text{in}}$ is the variance of the operator $h_{N,d}$ with respect to the initial state $\ket{\psi^{\text{in}}_{N,d}}$.\\

\noindent
To achieve minimum variance, one must further optimize $Q(\ket{\psi^\text{f}_{N,d}}$ over all possible input states. As shown in Ref.~\cite{M1}, when the QFI is maximized over all pure product states in $(\mathbb{C}^d)^{\otimes N}$, it scales linearly with $N$. This limit is known as the SQL. On the other hand, optimization over all pure quantum states yields a quadratic scaling of QFI with $N$, i.e., $Q(\ket{\psi^\text{f}_{N,d}}) \sim N^2$, achievable with special genuinely multipartite entangled states known as cat states. This defines the HL.\\

\noindent
Thus, entanglement plays a crucial role in enhancing precision under unitary encoding. Several studies, including Refs.~\cite{Toth,MPE}, have investigated this further. In particular, they derived bounds on the QFI by optimizing over states with fixed entanglement depth (i.e., $k$-separability or $k$-producibility)~\cite{ksep, ksepa, pro1}.\\

\noindent
A pure state $\ket{\psi^{\text{in}}_{N,d}}$ in $(\mathbb{C}^d)^{\otimes N}$ is said to be $k$-producible if it can be written as
$\ket{\psi^{\text{in}}_{N,d}} = \ket{\phi_{A_1}} \otimes \ket{\phi_{A_2}} \otimes \cdots \otimes \ket{\phi_{A_m}}$,
such that $A_1 \cup A_2 \cup \cdots \cup A_m = \{1,2,\dots,N\}$ and each $\ket{\phi_{A_i}}$ is a state defined on at most $k$ subsystems. It is important to note that all $k$-producible states have zero generalized geometric measure.\\

\noindent
Other works (e.g., Ref.~\cite{Ent2}) have established necessary and sufficient conditions for a state to be entangled in terms of its QFI. Ref.~\cite{Ent3} demonstrated that even positive partial transpose (PPT) states can be useful in quantum metrology. For additional insights into the connection between entanglement and meteorological advantage under unitary encoding, we refer the reader to Refs.~\cite{met1,uni1,Enten,Enten2,intf3,Enten3,Ent5,Ent4,Ent6}.\\

\noindent
In summary, these studies underscore the necessity of entanglement for achieving quantum-enhanced metrology. However, a precise and quantitative relationship between the amount of entanglement and the achievable precision remains largely unexplored. In this work, we address this gap by examining how the entanglement-constrained optimal QFI correlates with the amount of available entanglement. In particular, we optimize the QFI over states with fixed input entanglement, quantified using the entanglement measures introduced in Sec.~\ref{EM}. \\

\noindent
We begin our analysis in the next section by investigating the variation of entanglement-constrained QFI for two-qubit pure bipartite probe states.

\section{Entanglement vs QFI: The case of two qubits}
\label{Qubit}
In this section, we analytically derive an exact relation between the entanglement-constrained QFI and the initial entanglement of two-qubit probe states.

We consider a bipartite probe ($N = 2$), with each party having local dimension $d = 2$. Following the notation introduced in Sec.~\ref{notation}, and without loss of generality, we take the generator of the unitary evolution to be the Hamiltonian $H_{2,2} = h_{2,2} \theta'$, where
\begin{equation}
h_{2,2} = \sum_{k=1}^2 \sigma_z^{k}.
\label{h}
\end{equation}
Note according to the notation used in Sec.~\ref{notation}. Here we have put $\mathcal{Z}_2=\sigma_z$. Where $\sigma_z$ denotes the usual Pauli matrix and the superscript $k$ indicates that $\sigma_z$ acts on the $k^{\text{th}}$ qubit. The unitary evolution $U = e^{-iH_{2,2}t}$ transforms an arbitrary initial state $\ket{\psi^{\text{in}}_{2,2}}$ into a final state $\ket{\psi^{\text{f}}_{2,2}}$, such that
\begin{equation}
\ket{\psi^{\text{f}}_{2,2}} = e^{-i h_{2,2} \theta} \ket{\psi^{\text{in}}_{2,2}}.
\label{en}
\end{equation}
The parameter $\theta=\theta't$ is thus encoded in the evolved state $\ket{\psi^{\text{f}}_{2,2}}$.

To estimate $\theta$, one performs a measurement on $\ket{\psi^{\text{f}}_{2,2}}$ and infers its value from the resulting probability distribution. As discussed in Sec.~\ref{Unit}, for an unbiased estimator, the variance $\Delta^2\theta$ is bounded below by
\begin{equation*}
\Delta^2\theta \geq \frac{1}{m Q(\ket{\psi^{\text{f}}_{2,2}})},
\end{equation*}
where $m$ is the number of repetitions of the measurement. Since we focus on single-shot estimation, we set $m = 1$. (Note that QFI is additive under repetition.)

For unitary encoding with pure input states, the QFI is given by
\begin{equation}
Q(\ket{\psi^{\text{f}}_{2,2}}) = 4 \Delta^2 h_{2,2}^{\text{in}},
\label{QFI}
\end{equation}
where $\Delta^2 h_{2,2}^{\text{in}} = \langle h_{2,2}^2 \rangle - \langle h_{2,2} \rangle^2$ is the variance in the generator evaluated in the initial state $\ket{\psi^{\text{in}}_{2,2}}$.

To determine the maximal achievable precision under fixed entanglement, we optimize $Q(\ket{\psi^{\text{f}}_{2,2}})$ over all input states with a given entanglement. Let $\chi_{2,2}^{\mathcal{E}}$ denote the set of all pure states in $\mathbb{C}^2 \otimes \mathbb{C}^2$ with entanglement $\mathcal{E}$.   We define the entanglement constrained optimal QFI at fixed entanglement $\mathcal{E}$ as 
\begin{equation}
Q_\mathcal{E} := \max_{\ket{\psi^{\text{in}}_{2,2}} \in \chi^\mathcal{E}_{2,2}} 4 \Delta^2 h_{2,2}^{\text{in}}.
\end{equation}

This consideration naturally lead to two central questions: Does $Q_\mathcal{E}$ depend explicitly on the entanglement $\mathcal{E}$ of the initial probe state? And if so, what is the exact functional dependence of $Q_\mathcal{E}$ on $\mathcal{E}$?

Answering these questions is crucial for understanding the fundamental role of entanglement in quantum metrology. If a clear and predictable relationship exists, it could guide the efficient design of metrological protocols that achieve near-optimal precision without requiring maximal entanglement. To this end, we analyze the dependence of the QFI on initial entanglement using two well-established quantifiers: GGM and the von Neumann entanglement entropy of entanglement.

In the following subsection, we focus on GGM and present an exact analytical result: a theorem that expresses the entanglement constrained optimal QFI as an explicit function of the entanglement $\mathcal{E}$. This result provides a clear and quantitative link between the entanglement content of the input probe and the maximum achievable precision in estimating the parameter $\theta$.

\subsection{Entanglement-Constrained Optimal QFI: GGM as Measure}
\label{GMd=2}
\begin{theorem}
For a unitary encoding of the type given in Eq.~\eqref{en}, if $G$ denotes the GGM of the initial bipartite system probe, $\ket{\psi^{\text{in}}_{2,2}}$, in $\mathbb{C}^2 \otimes \mathbb{C}^2$, and $\chi_{2,2}^G$ represents the set of all such pure bipartite states with fixed GGM $G$, then the QFI $$Q^G_{2,2} := \underset{\ket{\psi^{\text{in}}_{2,2}} \in \chi_{2,2}^G}{\max} 4 \Delta^2 h_{2,2}^{in},$$ with respect to the parameter $\theta$, maximized over all initial probes with fixed GGM $G$, is given by
\begin{equation}
    Q^G_{2,2} = 8 \left( 1 + \sqrt{1 - (1 - 2G)^2} \right).
    \label{Qf}
\end{equation}
\label{t1}
\end{theorem}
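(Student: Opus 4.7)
My plan is to exploit the Schmidt decomposition to reduce the constrained optimization to a concrete problem on a canonical Schmidt-form state. Every pure two-qubit state with GGM equal to $G\in[0,1/2]$ can be written as $\ket{\psi^{\text{in}}_{2,2}}=(U_A\otimes U_B)\ket{\phi_G}$, where $\ket{\phi_G}=\sqrt{1-G}\ket{00}+\sqrt{G}\ket{11}$ and $U_A, U_B$ are arbitrary single-qubit unitaries. Since the variance of $h_{2,2}=\sigma_z\otimes\mathcal{I}_2+\mathcal{I}_2\otimes\sigma_z$ in $\ket{\psi^{\text{in}}_{2,2}}$ equals that of $(U_A^\dagger\sigma_z U_A)\otimes\mathcal{I}_2+\mathcal{I}_2\otimes(U_B^\dagger\sigma_z U_B)$ in $\ket{\phi_G}$, the task becomes: maximize $4\Delta^2_{\ket{\phi_G}}(A\otimes\mathcal{I}_2+\mathcal{I}_2\otimes B)$ where $A,B$ are Hermitian $2\times 2$ matrices with spectrum $\{+1,-1\}$, which I would parametrize by unit Bloch vectors as $A=\vec{n}^A\cdot\vec\sigma$ and $B=\vec{n}^B\cdot\vec\sigma$ with $\vec{n}^{A},\vec{n}^{B}\in S^2$.

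Next I would evaluate the two relevant moments on $\ket{\phi_G}$ directly, obtaining $\langle A\otimes\mathcal{I}_2+\mathcal{I}_2\otimes B\rangle=(1-2G)(n_z^A+n_z^B)$ and $\langle(A\otimes\mathcal{I}_2+\mathcal{I}_2\otimes B)^2\rangle=2+2n_z^A n_z^B+4\sqrt{G(1-G)}(n_x^A n_x^B-n_y^A n_y^B)$. Holding $n_z^{A,B}$ fixed, the azimuthal freedom is easy: the cross term is maximized by setting $n_y^A=n_y^B=0$ and aligning the transverse parts along $\hat{x}$ with the same sign, saturating at $2\sqrt{G(1-G)}\sqrt{(1-(n_z^A)^2)(1-(n_z^B)^2)}$. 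The problem then reduces, with $x=n_z^A$ and $y=n_z^B$, to maximizing
\begin{equation*}
f(x,y)=2+2xy+4\sqrt{G(1-G)}\sqrt{(1-x^2)(1-y^2)}-(1-2G)^2(x+y)^2
\end{equation*}
on the square $[-1,1]^2$. Identifying $(x,y)=(0,0)$ as a critical point with $f(0,0)=2+4\sqrt{G(1-G)}$ and multiplying by $4$, the identity $4G(1-G)=1-(1-2G)^2$ delivers exactly $Q^G_{2,2}=8\bigl(1+\sqrt{1-(1-2G)^2}\bigr)$.

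The main obstacle is showing that $(0,0)$ is actually the \emph{global} maximum of $f$ and not just a local one, since $f$ is manifestly not concave on the whole square. I expect the Hessian at the origin to be negative semi-definite thanks to the same parameter identity $(1-2G)^2+4G(1-G)=1$, establishing local maximality. To rule out boundary maxima I would restrict $f$ to the edges $x=\pm 1$, where the transverse factor $\sqrt{1-x^2}$ vanishes and $f$ collapses to a simple quadratic in $y$; a direct comparison of its optimum with $2+4\sqrt{G(1-G)}$, again invoking the parameter identity, should confirm that the interior candidate dominates throughout $G\in[0,1/2]$. Once the optimum is verified, unwinding the Schmidt transformation with $A=B=\sigma_x$ (equivalently $U_A=U_B=H$) yields the closed-form optimal probe $\ket{\psi^o_{2,2}(G)}$ in the computational basis.
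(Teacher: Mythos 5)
Your route is correct and genuinely different from the paper's. The paper expands $\ket{\psi^{\text{in}}_{2,2}}$ in the eigenbasis of $h_{2,2}$ with nonnegative amplitudes $\sqrt{\omega_p}$, expresses both the GGM and the variance in terms of the $\omega_p$, and runs a two-constraint Lagrange-multiplier analysis, treating $G=0$, $G=1/2$ and one special interior value of $G$ separately and relegating the boundary of the probability simplex to an appendix. You instead use local-unitary covariance: every state of GGM $G$ is $(U_A\otimes U_B)\ket{\phi_G}$ with $\ket{\phi_G}=\sqrt{1-G}\ket{00}+\sqrt{G}\ket{11}$, which turns the constrained problem into an unconstrained maximization over two Bloch vectors. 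Your moments check out (the correlator on $\ket{\phi_G}$ is $n_z^A n_z^B+2\sqrt{G(1-G)}(n_x^A n_x^B-n_y^A n_y^B)$ with vanishing cross components), the azimuthal step is right since $n_x^A n_x^B-n_y^A n_y^B=r_Ar_B\cos(\phi_A+\phi_B)\le r_Ar_B$, and $4f(0,0)=8(1+\sqrt{1-(1-2G)^2})$ with the Hadamard-conjugated state matching the paper's optimal probe. A genuine advantage of your parametrization is that it covers the constraint set $\chi^G_{2,2}$ exactly: you never need to drop local phases, whereas the paper's phase-dropping step shrinks the feasible set (phases affect the GGM even though they do not affect the QFI) and is justified only informally. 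The trade-off is that your argument leans on the two-qubit Schmidt form and on $U^\dagger\sigma_z U$ having spectrum $\{\pm 1\}$, so it does not port as directly to the higher-dimensional and multipartite cases that the paper later treats with the same Lagrangian machinery.

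The one real gap is in your global-optimality plan: local maximality of $f$ at $(0,0)$ plus a check of the edges $x=\pm1$ does not exclude a \emph{different interior} critical point with a larger value. You must enumerate the interior critical points. This is doable and closes the proof: subtracting the two stationarity conditions gives $(y-x)\bigl[2+4\sqrt{G(1-G)}\,(1+xy)/(\sqrt{1-x^2}\sqrt{1-y^2})\bigr]=0$, and since $1+xy\ge 0$ on $[-1,1]^2$ the bracket is strictly positive, forcing $x=y$; on the diagonal the stationarity condition reduces to $2x\bigl(1-2s-2(1-4s^2)\bigr)=0$ with $s=\sqrt{G(1-G)}$, and the coefficient $8s^2-2s-1$ vanishes on $[0,1/2]$ only at $s=1/2$, so for $G<1/2$ the origin is the unique interior critical point. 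Your boundary computation then suffices: on $x=1$ one gets the concave quadratic $2+2y-(1-2G)^2(1+y)^2$, whose maximum over $y\in[-1,1]$ is at most $2+4\sqrt{G(1-G)}$, with equality only at $G=1/2$ (where the corner state is the cat state itself). Your Hessian expectation is also correct as stated: the eigenvalues at the origin are $-4s-2$ and $16s^2-4s-2$, both nonpositive for $s\in[0,1/2]$. With the uniqueness of the interior critical point added, the proposal is complete.
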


\begin{proof}
To demonstrate the above statement, we decompose the initial probe state, denoted as $\ket{\psi^{\text{in}}_{2,2}}$, into the eigenbasis of the Hamiltonian $h_{2,2}$ as
\begin{equation}
\ket{\psi^{\text{in}}_{2,2}} = \sum_{p=0}^{3} \sqrt{\omega_p} \ket{p}, \label{eq1}
\end{equation}
with $0 \leq \omega_p \leq 1$, where $p = 0, 1, 2, 3$ serves as a shorthand for the pair of indices $(rs)$, with $r, s = 0, 1$. That is, $p = 0, 1, 2, 3$ corresponds to $rs = 00, 01, 10, 11$. The basis $\{\ket{0}_2, \ket{1}_2\}$ denotes the eigenbasis of $\sigma_z$, with eigenvalues $\eta_0^2 = -1$ and $\eta_1^2 = 1$, respectively. The set ${\ket{qr}_2} = {\ket{p}}$ forms the eigenbasis of $h_{2,2}$, with corresponding eigenvalues $E_p = \eta_r^2 + \eta_s^2$, for $r, s = 0, 1$.

It is worth noting that one could include local phase terms in the above decomposition. However, since the QFI depends only on the values of $\omega_p$—i.e., the probabilities of the state being in a particular eigenstate $\ket{p}$—these phases do not affect the precision. Therefore, without loss of generality, we consider the form given in Eq.~\eqref{eq1}, omitting local phase terms. Moreover, states without relative phases are experimentally favorable and less susceptible to environmental noise~\cite{Exst,Exst2,Exst3}, which further motivates our choice to consider initial states with no relative phase.

The GGM of such states, calculated in terms of the coefficients $\omega_p$, is given by

\begin{equation}
\mathcal{E}_G(\ket{\psi^{\text{in}}_{2,2}}) = \frac{1}{2} \left[ 1 - \sqrt{1 - 4 \left( \sqrt{\omega_1 \omega_2} - \sqrt{\omega_0 \omega_3} \right)^2} \right]. \label{GGM}
\end{equation}

The variance of $h_{2,2}$ with respect to the state $\ket{\psi^{\text{in}}_{2,2}}$ is given by

\begin{equation}
\Delta^2 h_{2,2}^{\text{in}} = \sum_{p} \omega_p (E_p)^2 - \left( \sum_{p} \omega_p E_p \right)^2. \label{vh}
\end{equation}

Our goal is to optimize $\Delta^2 h_{2,2}^{\text{in}}$, and hence the QFI, over the probability distribution $\{\omega_p\}$. Since $\omega_p$ denotes the probability that the initial probe is in the energy eigenstate $\ket{p}$, the normalization condition $\sum_p \omega_p = 1$ must hold. Also the optimization is subject to the constraint that the initial state belongs to the set $\chi^G_{2,2}$, comprising all pure bipartite states with a fixed GGM, i.e., $\mathcal{E}_G(\ket{\psi^{\text{in}}_{2,2}}) = G$. For two-qubit pure states, $G$ ranges from $0$ to ${1}/{2}$.

Thus, the task of maximizing the QFI,

\begin{equation}
Q^G_{2,2} = \max_{\ket{\psi^{\text{in}}_{2,2}} \in \chi^\mathcal{G}_{2,2}} 4\Delta^2 h_{2,2}^{\text{in}},
\label{QFIG}
\end{equation}

is a constrained optimization problem with two constraints. First, we examine the optimal QFI for states with fixed $G$, lying in the interior of the probability space $\{\omega_p\}$, defined by the conditions $0<\omega_p \leq 1$ for all $p$ and $\sum_p \omega_p = 1$. Later, in Appendix~\ref{Bound}, we show that the optimum indeed always lies in the interior of the probability space, and that no extrema can be found at the boundary (where at least one of the $\omega_p$ vanishes, while still satisfying $\sum_p \omega_p = 1$).

To find the optimum in the interior 
we employ the method of Lagrange multipliers, defining the Lagrangian

\begin{equation}
\mathcal{L} = \Delta^2 h_{2,2}^{\text{in}} + \mathcal{K}_1 \left( \sum_p \omega_p - 1 \right) + \mathcal{K}_2 \left( \mathcal{E}_G(\ket{\psi^{\text{in}}_{2,2}}) - G \right),
\label{Lag1}
\end{equation}

where $\mathcal{K}_1$ and $\mathcal{K}_2$ are the Lagrange multipliers. We find the stationary points of $\mathcal{L}$ by substituting Eqs.~\eqref{GGM} and \eqref{vh}, and setting the partial derivatives with respect to $\omega_p$, $\forall p$ $\mathcal{K}_1$, and $\mathcal{K}_2$ to zero. This yields a set of six equations, Eqs.\eqref{eq6a}–\eqref{eq6f}, which are presented in Appendix~\ref{A1}.

Note that if $\left( \sqrt{\omega_1 \omega_2} - \sqrt{\omega_0 \omega_3} \right) = {1}/{2}$, then $\Lambda$ given in Eq.~\ref{lA} diverges, and the Lagrange multiplier method breaks down. This corresponds to the extreme case $\mathcal{E}_G(\ket{\psi^{\text{in}}_{2,2}}) = {1}/{2}$. On the other hand, for $\Lambda = 0$—i.e., $\mathcal{E}_G(\ket{\psi^{\text{in}}_{2,2}}) = 0$ the set of equations becomes inconsistent. Therefore, the Lagrange method is valid only for $0 < \mathcal{E}_G < {1}/{2}$. The extreme cases $G = 0$ and $G = {1}/{2}$ will be treated separately.

Using the eigenvalues $E_0 = -2$, $E_1 = E_2 = 0$, and $E_3 = 2$, subtracting Eq.\eqref{eq6c} from Eq.\eqref{eq6d} gives:

\begin{equation*}
\mathcal{K}_2 \Lambda \left( \sqrt{\frac{\omega_2}{\omega_1}} - \sqrt{\frac{\omega_1}{\omega_2}} \right) = 0.
\end{equation*}

 If $\mathcal{K}_2=0$ or $\Lambda=0$, the set of Eqs.~\eqref{eq6a}–\eqref{eq6d} becomes inconsistent hence, $\mathcal{K}_2 \neq 0$ and $\Lambda \neq 0$, this implies $\omega_1 = \omega_2$.

Substituting $\omega_1 = \omega_2$ into Eqs.~\eqref{eq6a}–\eqref{eq6d} and eliminating $\mathcal{K}_1$ and $\mathcal{K}_2$ yields

\begin{equation}
\left( \sqrt{\omega_3} - \sqrt{\omega_0} \right) \left( 1 - 2 \left( \sqrt{\omega_3} + \sqrt{\omega_0} \right)^2 \right) = 0.
\label{2nd}
\end{equation}

This condition is satisfied if either or both

\begin{itemize}
\item $\omega_3 = \omega_0$, 
\item $\left( \sqrt{\omega_3} + \sqrt{\omega_0} \right)^2 = \frac{1}{2}$.
\end{itemize}

The second condition corresponds to the specific case $G = \frac{1}{2}(1 - \sqrt{\frac{3}{4}})$. Hence, for all other values of $G$ (except $G = 0$, $G = 1/2$ and this special case), the Lagrange method yields a unique solution with $\omega_0 = \omega_3$ and $\omega_1 = \omega_2$.

Now, consider the special case $G = \frac{1}{2}(1 - \sqrt{\frac{3}{4}})$. Given that $\omega_1 = \omega_2$ still holds, we use normalization $\sum_p \omega_p = 1$ and Eq.~\eqref{GGM} to show that this $G$ requires one of the following:

\begin{itemize}
\item $\left( \sqrt{\omega_0} + \sqrt{\omega_3} \right)^2 = \frac{1}{2}$,
\item $\left( \sqrt{\omega_0} + \sqrt{\omega_3} \right)^2 = \frac{3}{2}$.
\end{itemize}

Among these, the first condition automatically satisfies Eq.\eqref{2nd}. However, calculating the QFI for both cases shows that the QFI corresponding to the first condition is $Q_1 = 4$, while that of the second condition is $Q_2 = 12$. Since $Q_2 > Q_1$, the second condition yields the higher QFI. Nevertheless, our objective is to identify the optimal QFI subject to all constraints, including Eq.~\eqref{2nd}. This leads to a unique solution with $\omega_0 = \omega_3$ and $\omega_1 = \omega_2$ such that $\left(\sqrt{\omega_0} + \sqrt{\omega_3} \right)^2 = {3}/{2}$.

We now turn to the edge cases $G = 0$ and $G = {1}/{2}$, corresponding to product and maximally entangled states, respectively. Ref.~\cite{M1} has shown that the optimal state with zero entanglement is a tensor product of maximally coherent single-qubit states, yielding $\omega_0 = \omega_1 = \omega_2 = \omega_3 = {1}/{4}$. The corresponding QFI defines the SQL. On the other hand, optimization of QFI over all maximally entangled states with $G = {1}/{2}$ yields the HL, with optimal states having $\omega_0 = \omega_3 = {1}/{2}$ and $\omega_1 = \omega_2 = 0$.

Thus, combining the results of Ref.\cite{M1} with our analysis, we conclude that for all $G \in [0, {1}/{2}]$, the optimal probe state maximizing the QFI at fixed GGM must be of the form given in Eq.\eqref{eq1}, with $\omega_0 = \omega_3$ and $\omega_1 = \omega_2$.

Imposing this form and the normalization condition $\sum_{p} \omega_{p} = 1$, we express $G$ and $Q^G_{2,2}$ in terms of $\omega_0$ as follows
\begin{eqnarray}
\begin{split}
G &= \frac{1 - \sqrt{1 - (1 - 4\omega_0)^2}}{2}, \quad
Q^G_{2,2} &= 32\omega_0.
\label{Gw}
\end{split}
\end{eqnarray}

Eliminating $\omega_0$ yields a closed-form expression for the QFI in terms of the generalized geometric measure of  entanglement as
\begin{equation*}
Q^G_{2,2} = 8 \left(1 + \sqrt{1 - (1 - 2G)^2} \right).
\end{equation*}

This completes the proof of Theorem~\ref{t1}.
\end{proof}

Next, we state a corollary that follows directly from this theorem.

\begin{corollary}
For a unitary encoding of the form given in Eq.~\eqref{en}, the optimal initial state that yields the maximum quantum Fisher information $Q^G_{2,2}$ given in Eq.~\eqref{Qf}, for a fixed geometric measure of entanglement $G \in [0, {1}/{2}]$, is given by
\begin{eqnarray}
\begin{split}
\ket{\psi_{\text{2,2}}^o(G)} &= \sqrt{\frac{1+\sqrt{1-(1-2G)^2}}{4}}(\ket{00}_2+\ket{11}_2) \\
&\quad + \sqrt{\frac{1-\sqrt{1-(1-2G)^2}}{4}}(\ket{01}_2+\ket{10}_2).
\end{split}
\end{eqnarray}
\end{corollary}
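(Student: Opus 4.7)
The corollary is essentially a repackaging of what the theorem's proof has already established, so my plan is short and direct: extract the structural identities on the Schmidt/probability weights that were proved in the theorem, invert them to express the weights as explicit functions of $G$, and substitute into the eigenbasis expansion Eq.~\eqref{eq1}.

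First, I would invoke the conclusion reached near the end of the proof of Theorem~\ref{t1}, namely that for every $G\in[0,1/2]$ the QFI-maximizing probe (of the form in Eq.~\eqref{eq1}) satisfies $\omega_0=\omega_3$ and $\omega_1=\omega_2$. Combined with the normalization $\sum_p\omega_p=1$, this reduces the problem to a single free parameter, say $\omega_0$, through $\omega_1=\omega_2=\tfrac{1}{2}-\omega_0$. I would then use the relation for $G$ as a function of $\omega_0$ recorded in Eq.~\eqref{Gw}, i.e.\ $G=\tfrac{1}{2}\bigl(1-\sqrt{1-(1-4\omega_0)^2}\bigr)$, and solve it for $\omega_0$. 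A direct algebraic inversion gives $(1-4\omega_0)^2=1-(1-2G)^2$, hence two candidate branches $\omega_0=\tfrac{1}{4}\bigl(1\pm\sqrt{1-(1-2G)^2}\bigr)$.

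Next, I would select the branch that actually achieves the maximum in $Q^G_{2,2}=32\omega_0$ established in Eq.~\eqref{Gw}: since we want the larger $\omega_0$, we take the plus sign, obtaining $\omega_0=\omega_3=\tfrac{1}{4}\bigl(1+\sqrt{1-(1-2G)^2}\bigr)$, and correspondingly $\omega_1=\omega_2=\tfrac{1}{4}\bigl(1-\sqrt{1-(1-2G)^2}\bigr)$. Substituting these into Eq.~\eqref{eq1}, with the identification $\ket{0}=\ket{00}_2$, $\ket{1}=\ket{01}_2$, $\ket{2}=\ket{10}_2$, $\ket{3}=\ket{11}_2$, immediately yields the claimed state $\ket{\psi^o_{2,2}(G)}$. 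A brief consistency check at the endpoints $G=0$ (recovering the product state with all weights $1/4$) and $G=1/2$ (recovering the Bell/cat state $(\ket{00}_2+\ket{11}_2)/\sqrt{2}$) confirms both the formula and the branch choice.

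The only point requiring any care is the branch selection for the square root when inverting the $G(\omega_0)$ relation; the other branch corresponds to the smaller root $\omega_0\leftrightarrow\omega_1$ and produces the same state up to relabeling of amplitudes, but gives a strictly smaller $32\omega_0$ and hence does not realize the maximum. Because the heavy lifting—ruling out boundary configurations and proving the symmetry $\omega_0=\omega_3,\ \omega_1=\omega_2$—was already done in Theorem~\ref{t1}, there is no genuine obstacle here; the corollary is a one-line substitution once the correct branch is identified.
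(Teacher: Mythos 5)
Your proposal is correct and follows essentially the same route the paper intends: the corollary is read off from the theorem's conclusion that $\omega_0=\omega_3$, $\omega_1=\omega_2$ together with Eq.~\eqref{Gw}, inverting $G(\omega_0)$ and choosing the branch that maximizes $Q^G_{2,2}=32\omega_0$. The branch selection and the endpoint checks at $G=0$ and $G=1/2$ are exactly the right consistency points, so nothing is missing.
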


Therefore, fixing the GGM of the initial probe state to $G$, the minimum standard deviation $\Delta \theta_G$ in estimating the parameter $\theta$ as given by the quantum Cramér-Rao bound is
\begin{equation*}
\Delta \theta_G = \left[ 8 \left(1 + \sqrt{1 - (1 - 2G)^2} \right) \right]^{-1/2}.
\end{equation*}

 In the next section, we show that this relationship holds even for bipartite probes with local dimensions $d = 3, 4, 5$. As evident from the plot, $\Delta \theta_G$ decreases monotonically with increasing entanglement. The enhancement in precision is most significant in the low-entanglement regime, while the rate of improvement diminishes as $G$ approaches its maximum value, that corresponds to the HL.

Note that our findings have direct implications for practical scenarios such as estimating the transition frequency of an atomic clock~\cite{Cl1}. Specifically, consider a two-qubit atomic clock governed by a Hamiltonian of the form given in Eq.~\eqref{h}. Our analysis reveals that the optimal precision in estimating the transition frequency $\theta' = \theta$ (with evolution time $t = 1$) is directly linked to the entanglement present in the clock’s initial state. In fact, we provide a closed-form expression relating the optimal precision to the amount of entanglement in the clock, showing that precision improves with increasing entanglement. To illustrate this idea, we include a schematic in Fig.~\ref{clock}. The entanglement content of the clock is metaphorically represented by the distance between two parties, Alice and Bob. In the leftmost clock, the parties are far apart, indicating no entanglement. In the middle clock, they are closer, reflecting partial entanglement. The rightmost clock shows Alice and Bob holding hands, symbolizing a maximally entangled state. This visual progression clearly conveys the central message: higher entanglement enables higher precision in parameter estimation.

\begin{figure}
\includegraphics[scale=0.3]{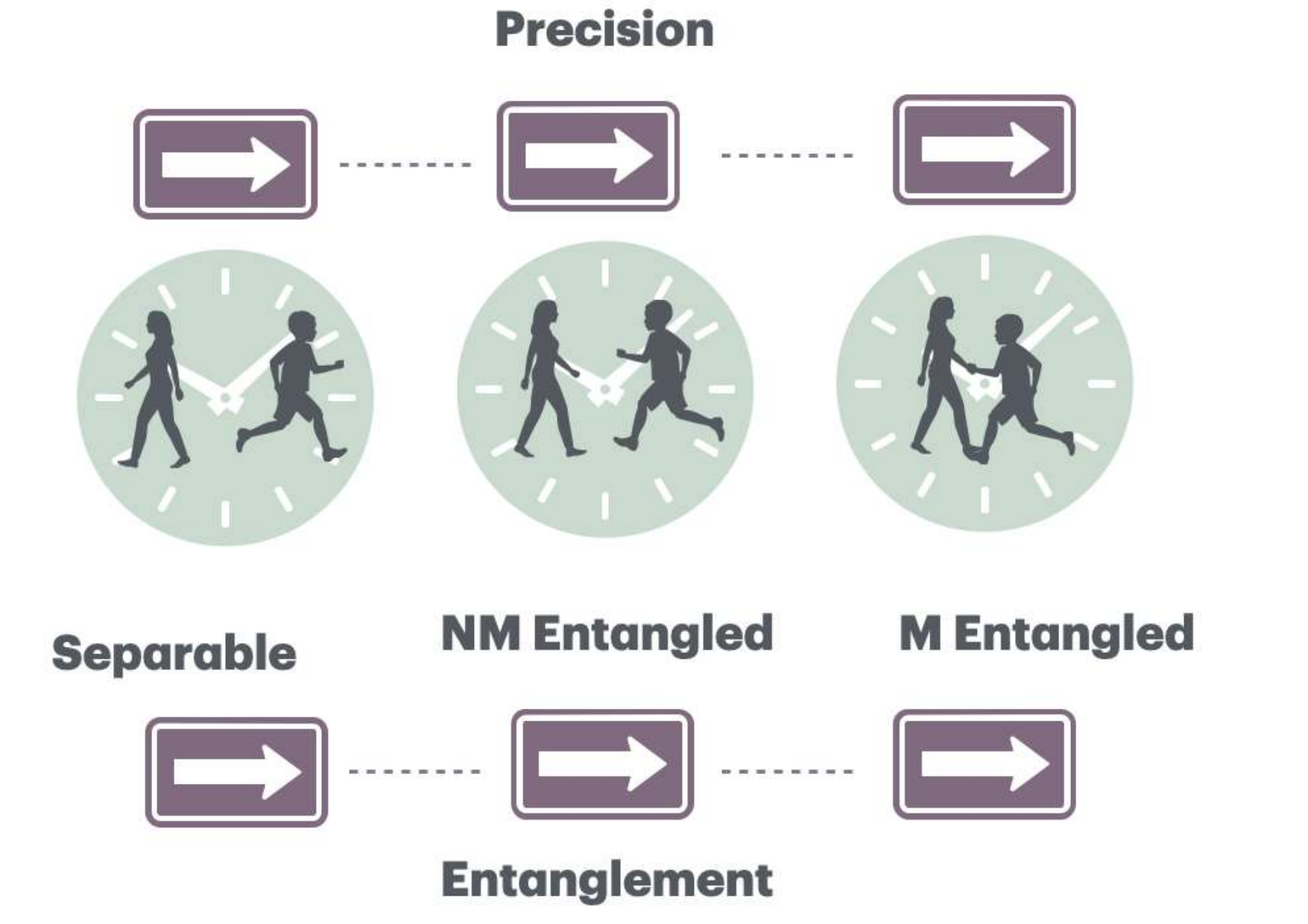}
\caption{\textbf{Role of entanglement in frequency estimation of atomic clocks}. The schematic depicts the role of entanglement  in determining the precision of estimating the transition frequency in a bipartite atomic clock composed of two qubits. The entanglement content of the clock is represented by the distance between the two parties, Alice and Bob. In the leftmost clock, where the two parties are shown to be far apart, there is no entanglement.
In the middle clock, where they are moving closer, the entanglement is nonzero.
The rightmost clock represents the maximally entangled case, visually depicted by the two parties holding hands. The figure illustrates the fact that as the degree of entanglement increases from left to right, the precision of estimation also improves.}
\label{clock}
\end{figure}

In the next subsection, we consider another type of entanglement measure, the entanglement entropy. We optimize the QFI under a fixed entanglement entropy constraint and present a theorem that establishes a relationship between the optimal QFI and fixed entanglement entropy for two-qubit bipartite states. The theorem is presented below

\subsection{Entanglement-Constrained Optimal QFI: Entanglement Entropy as Measure}
\label{S=2}
\begin{theorem}
\label{t2}
For a unitary encoding of the form given in Eq.~\eqref{en}, let $S$ denote the entanglement entropy of the initial bipartite probe state $\ket{\psi^{\text{in}}_{2,2}} \in \mathbb{C}^2 \otimes \mathbb{C}^2$. Define $\chi^S_{2,2}$ as the set of all pure bipartite states with fixed entanglement entropy $S$. Then, the quantum Fisher information optimized over all initial probes with fixed $S$, defined as $$Q^S_{2,2} := \underset{\ket{\psi^{\text{in}}_{2,2}} \in \chi_{2,2}^G}{\max} 4 \Delta^2 h_{2,2}^{in},$$ with respect to the parameter $\theta$, is related to $S$ as follows 
\begin{equation}
S = -\lambda_{Q^S_{2,2}} \log_2(\lambda_{Q^S_{2,2}}) - (1 - \lambda_{Q^S_{2,2}}) \log_2(1 - \lambda_{Q^S_{2,2}}).
\label{QS}
\end{equation}
Where $\lambda_{Q^S_{2,2}}$ is given as,
\begin{equation*}
    \lambda_{Q^S_{2,2}}=\frac{1-\sqrt{1-(1-\frac{Q^S_{2,2}}{8})^2}}{2}.
\end{equation*}
\end{theorem}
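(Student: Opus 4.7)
The plan is to exploit an equivalence that collapses the entropy-constrained optimization onto the GGM-constrained optimization already solved in Theorem~\ref{t1}. For two-qubit pure states, both the GGM and the von Neumann entanglement entropy depend on the state only through the smaller eigenvalue $\lambda \in [0,1/2]$ of the single-party reduced density matrix, and both are strictly monotone increasing functions of $\lambda$. Consequently, the constraint set $\chi^S_{2,2}$ for any fixed $S$ coincides with $\chi^G_{2,2}$ for the unique matching value of $G$, so Theorem~\ref{t1} already supplies the optimum and all that remains is an algebraic change of variables.

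Concretely, for an arbitrary state $\ket{\psi^{\text{in}}_{2,2}} = \sum_p \sqrt{\omega_p}\ket{p}$ written in the eigenbasis of $h_{2,2}$ with real nonnegative amplitudes (the parameterization inherited from the proof of Theorem~\ref{t1}, justified there by phase-invariance of the QFI and the experimental argument), I would compute the reduced density matrix $\rho_A$ directly and find its two eigenvalues to be
\begin{equation*}
\lambda_\pm = \frac{1 \pm \sqrt{1 - 4(\sqrt{\omega_1\omega_2}-\sqrt{\omega_0\omega_3})^2}}{2}.
\end{equation*}
Comparing with Eq.~\eqref{GGM}, the smaller eigenvalue $\lambda_-$ is exactly the GGM, so $G = \lambda_-$, and the entanglement entropy is the binary entropy of the same quantity, namely $S = -\lambda_-\log_2\lambda_- - (1-\lambda_-)\log_2(1-\lambda_-)$. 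Since the binary entropy is strictly monotone on $[0,1/2]$, fixing $S$ determines $\lambda_-$, and hence $G$, uniquely.

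With the equivalence in hand, Theorem~\ref{t1} immediately gives $Q^S_{2,2} = 8(1+\sqrt{1-(1-2G)^2})$ with $G = \lambda_-$. Inverting this relation for $G$ in terms of $Q^S_{2,2}$ produces $G = (1 - \sqrt{1-(1-Q^S_{2,2}/8)^2})/2$, which is precisely the definition of $\lambda_{Q^S_{2,2}}$ in the theorem statement. Substituting $\lambda_- = \lambda_{Q^S_{2,2}}$ into the binary-entropy expression for $S$ then yields Eq.~\eqref{QS}. The only step demanding real care is the equivalence of the two constraint sets: one must verify that in this parameterization both entanglement measures really are functions of the same single parameter $\lambda_-$, after which the remainder is algebraic inversion and substitution. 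As a consistency check, the extreme cases $S = 0$ and $S = 1$, corresponding to $G = 0$ and $G = 1/2$, reproduce the SQL ($Q^S_{2,2} = 8$) and the HL ($Q^S_{2,2} = 16$) identified in Theorem~\ref{t1}.
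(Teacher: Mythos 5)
Your proof is correct, but it takes a genuinely different and more economical route than the paper. The paper re-runs the entire constrained-optimization machinery for the entropy constraint: it builds a second Lagrangian (Eq.~\eqref{Lag2}), derives a new system of stationarity equations with a different multiplier structure, re-establishes $\omega_1=\omega_2$ and $\omega_0=\omega_3$, separately handles the degenerate value $S_f$ where the second factor of the analogue of Eq.~\eqref{2nd} can vanish, treats $S=0$ and $S=1$ as edge cases, and defers the boundary of the probability simplex to Appendix~\ref{Bound}. You instead observe that for two-qubit pure states both the GGM and the von Neumann entropy are strictly monotone functions of the single Schmidt parameter $\lambda_-=\lambda_{\min}(\rho_A)$ (indeed $G=\lambda_-$ and $S=h(\lambda_-)$ with $h$ the binary entropy, injective on $[0,1/2]$), so the level sets $\chi^S_{2,2}$ and $\chi^G_{2,2}$ coincide exactly when $S=h(G)$, and Theorem~\ref{t1} already contains the answer; what remains is inverting $Q=8\left(1+\sqrt{1-(1-2G)^2}\right)$ for $G\in[0,1/2]$ (taking the branch $1-2G\geq 0$) and substituting into the binary entropy. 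This reduction is valid for the full set of pure states, not just the phase-free parameterization, since both measures depend only on the Schmidt spectrum, and it automatically inherits the interior/boundary and edge-case analysis from Theorem~\ref{t1} rather than repeating it. What the paper's longer route buys is a self-contained derivation per measure that would still apply if the two quantifiers were not related by a monotone bijection; what your route buys is brevity and the structural insight that Theorem~\ref{t2} is a change of variables applied to Theorem~\ref{t1}. Your consistency check at $S=0$ and $S=1$ matches the paper's SQL and HL values.
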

\begin{proof}
   To prove the above theorem, similar to the approach used in Theorem~\ref{t1}, we consider an initial probe state of the form given in Eq.~\eqref{eq1}.  

The entanglement entropy of such states is given by
\begin{equation}
\mathcal{E}_S(\ket{\psi^{\text{in}}_{2,2}}) = -p_s \log_2 p_s - (1 - p_s) \log_2(1 - p_s),
\end{equation}
where 
$$
p_s = \frac{1 - \sqrt{1 - 4(\sqrt{\omega_1 \omega_2} - \sqrt{\omega_0 \omega_3})^2}}{2},
$$
is one of the eigenvalues of the reduced density matrix $\Tr_{A/B}\big[\ketbra{\psi^{\text{in}}_{2,2}}\big]$. Here, $\Tr_{A/B}$ denotes the partial trace over either the first or the second subsystem.

Now consider a set $\chi^S_{2,2}$ of states having a fixed entanglement entropy $\mathcal{E}_S(\ket{\psi^{\text{in}}_{2,2}}) = S$. The goal is to optimize the quantum Fisher information (QFI) over all states in $\chi^S_{2,2}$. Thus, the optimal QFI in this scenario is defined as
\begin{equation*}
Q^S_{2,2} = \max_{\ket{\psi^{\text{in}}_{2,2}} \in \chi^S_{2,2}} 4 \Delta^2 h_{2,2}^{\text{in}}.
\end{equation*}

Similar to Theorem~\ref{t1}, this optimization must be performed under two constraints: the normalization condition $\sum_p \omega_p = 1$, and the fixed entanglement entropy constraint. To address this, we again employ the method of Lagrange multipliers. Following the methodology of Theorem~\ref{t1}, we initially investigate the optimal QFI for states with fixed $S$ located in the interior of the probability space. In Appendix~\ref{Bound}, we further establish that the optimal solution consistently lies in the interior, with no extrema arising at the boundary.

The Lagrangian for this problem becomes

\begin{equation}
\mathcal{L} = \Delta^2 h_{2,2}^\text{in} + \mathcal{K}_3 \left( \sum_p \omega_p - 1 \right) + \mathcal{K}_4 \Big( \mathcal{E}_S({\ket{\psi^{\text{in}}_{2,2}}}) - S \Big),
\label{Lag2}
\end{equation}
where $\mathcal{K}_3$ and $\mathcal{K}_4$ are the Lagrange multipliers. Derivative of the Lagrangian with respect to the parameters $\omega_p$, with $p=0,1,2,3$ leads to four sets of Eqs.~\eqref{eq7a}-~\eqref{eq7d}

One needs to solve the set of above 4 equation together with the normalization constraint and fixed entanglement entropy constraint to get the optimal set of $\omega_p$ that maximizes the QFI as given. Note that  when $\sqrt{\omega_1\omega_2}-\sqrt{\omega_0\omega_3}={1}/{2}$. $\mathcal{M}$ presented in Eq.~\eqref{Ma} diverges this corresponds to a special case when $S=1$. Also when $\mathcal{M}=0$, the above Eqs.~\eqref{eq7a}-~\eqref{eq7c}, leads to an inconsistency. This corresponds to the case of $S=0$. Thus both for $S=0$ and $S=1$, the concerned optimization cannot be performed using Lagrange multiplier method. However we employ Lagrange multiplier method for $0<S<1$. The case of $S=0$ and $S=1$ is discussed later in this section.

Also note that the problem of solving this set of four equations is similar to the one presented in Theorem~\ref{t1}. Therefore, using similar reasoning as in Theorem~\ref{t1}, and applying Eqs.\eqref{eq7c} and\eqref{eq7d}, one can argue that $\mathcal{K}_4 \neq 0$ and $\mathcal{M} \neq 0$; otherwise, this set of equations would lead to an inconsistency. Thus either of the following conditions must be true that optimal solution must have $\omega_1=\omega_2$.
 Next, using this and further solving the Eqs.\eqref{eq7a}-\eqref{eq7d} to eliminate $\mathcal{K}_3$ and $\mathcal{K}_4$.  We find that that either or both of the  following condition needs to be satisfied for the probe of the form given in Eq.\eqref{eq1}, to be optimum
\begin{itemize}
      \item $\omega_3 = \omega_0$,
    \item $1 - 2\left( \sqrt{\omega_3} + \sqrt{\omega_0} \right)^2 = 0$.
\end{itemize}
Again the second condition corresponds to a specific value or entanglement entropy, $\mathcal{E}_S(\ket{\psi^{\text{in}}_{2,2}})= S_f=-p_s \log_2 p_s - (1 - p_s) \log_2(1 - p_s)$. With $p_s= \frac{1}{2}\left(1 - \sqrt{\frac{3}{4}}\right)$. Note that $\mathcal{E}_S(\ket{\psi^{\text{in}}_{2,2}})=S_f$ requires either of the following to be true.
\begin{itemize}
    \item $\left(\sqrt{\omega_0} + \sqrt{\omega_3} \right)^2 = \frac{1}{2}$,
    \item $\left(\sqrt{\omega_0} + \sqrt{\omega_3} \right)^2 = \frac{3}{2}$.
\end{itemize}
By similar reasoning as given in Theorem~\ref{t1}, one can verify that the QFI corresponding to the second condition is greater than that obtained from the first. Since we aim to maximize the QFI, the second condition is preferred. However, this choice must also be consistent with the constraints derived from the set of Eqs.\eqref{eq7a}–\eqref{eq7d}. Therefore, when the initial entanglement is fixed to $S = S_f$, one must have $\omega_3 = \omega_0$ and $\omega_1 = \omega_2$ to ensure the validity of Eqs.\eqref{eq7a}–\eqref{eq7d}.

Next, for $S \neq S_f$ and $S \in (0,1)$, the only way the set of equations obtained via the Lagrange multiplier method can be satisfied is again when $\omega_0 = \omega_3$ and $\omega_1 = \omega_2$.

Finally, the cases $S = 0$ and $S = 1$ correspond to the SQL and HL, respectively, where the optimal states are known to have the form $\omega_0 = \omega_3$ and $\omega_1 = \omega_2$, as discussed in Theorem~\ref{t1}. Combining this with our analysis, we obtain the entanglement-constrained optimal QFI for fixed $S \in [0,1]$ as
\begin{equation*}
    Q^S_{2,2}=32\omega_0,
    S= -p_s \log_2 p_s - (1 - p_s) \log_2(1 - p_s).
\end{equation*}
with 
$$p_S=\frac{1 - \sqrt{1 - (1 - 4\omega_0 )^2}}{2}.$$

Combining the above two equations we get
\begin{equation}
S = -\lambda_{Q^S_{2,2}} \log_2(\lambda_{Q^S_{2,2}}) - (1 - \lambda_{Q^S_{2,2}}) \log_2(1 - \lambda_{Q^S_{2,2}}).
\label{QS}
\end{equation}
Where $\lambda_{Q^S_{2,2}}$ is given as,
\begin{equation*}
    \lambda_{Q^S_{2,2}}=\frac{1-\sqrt{1-(1-\frac{Q^S_{2,2}}{8})^2}}{2}.
\end{equation*}
This completes the proof of Theorem~\ref{t2}.
\end{proof}
Next we present a corollary that directly follows from this theorem.
\begin{corollary}
    For a unitary encoding of the type given in Eq.~\eqref{en}, to attain the maximum quantum Fisher information corresponding to the parameter $\theta$, while ensuring that the entanglement entropy  of the initial state is fixed at $S= -p_S \log_2 p_S - (1 - p_S) \log_2(1 - p_S),$ with  $p_S \in [0,{1}/{2}]$, then the optimal initial state,  must have the following form:
\begin{eqnarray}
\begin{split}
\ket{\psi_{\text{2,2}}^o(S)} &= \sqrt{\frac{1+\sqrt{1-(1-2p_S)^2}}{4}}(\ket{00}_2+\ket{11}_2) \\
&\quad + \sqrt{\frac{1-\sqrt{1-(1-2p_S)^2}}{4}}(\ket{01}_2+\ket{10}_2).
\end{split}
\end{eqnarray}
\end{corollary}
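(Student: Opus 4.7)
The plan is to read off the optimal state directly from the structural information already obtained in the proof of Theorem~\ref{t2}. That theorem establishes that any entropy-constrained maximizer of the QFI must, when expanded in the eigenbasis of $h_{2,2}$ as in Eq.~\eqref{eq1}, satisfy $\omega_0=\omega_3$ and $\omega_1=\omega_2$. The corollary therefore amounts to parametrizing the two remaining independent coefficients in terms of the prescribed entropy value $S=-p_S\log_2 p_S-(1-p_S)\log_2(1-p_S)$ with $p_S\in[0,1/2]$.

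First, I would impose the normalization condition $\sum_p \omega_p=1$, which together with $\omega_0=\omega_3$ and $\omega_1=\omega_2$ reduces the system to the single relation $\omega_0+\omega_1=1/2$, so one free parameter remains. Next, I would simplify the Schmidt-eigenvalue formula used in Theorem~\ref{t2}: under $\omega_0=\omega_3$ and $\omega_1=\omega_2$, the quantity $\sqrt{\omega_1\omega_2}-\sqrt{\omega_0\omega_3}$ collapses to $\omega_1-\omega_0$, so the smaller Schmidt eigenvalue becomes
\begin{equation*}
p_s=\frac{1-\sqrt{1-(1-4\omega_0)^2}}{2},
\end{equation*}
using $\omega_1=1/2-\omega_0$. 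Identifying this $p_s$ with the prescribed $p_S$ and inverting gives two candidate values, $\omega_0=\tfrac{1}{4}\bigl(1\pm\sqrt{1-(1-2p_S)^2}\bigr)$.

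The third step is to select the branch that actually realizes the maximum. Since $Q^S_{2,2}=32\omega_0$ by Theorem~\ref{t2}, the maximizing choice is the one with the larger $\omega_0$, namely $\omega_0=\omega_3=\tfrac{1}{4}(1+\sqrt{1-(1-2p_S)^2})$, leaving $\omega_1=\omega_2=\tfrac{1}{4}(1-\sqrt{1-(1-2p_S)^2})$. Substituting these coefficients into the ansatz of Eq.~\eqref{eq1}, grouping the two $\omega_0$-terms and the two $\omega_1$-terms, and taking the positive square roots (global phases being irrelevant and relative phases having been excluded on the grounds given just after Eq.~\eqref{eq1}) produces exactly the state $\ket{\psi^o_{2,2}(S)}$ displayed in the corollary.

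The only subtle point — and the place where I would be most careful — is the branch selection and the edge-case check: I must verify that both choices of sign are consistent with the entropy constraint (they are, by symmetry $\omega_0\leftrightarrow\omega_1$), that the larger-$\omega_0$ branch is the QFI maximizer (immediate from $Q^S_{2,2}=32\omega_0$), and that the formula reduces correctly at the two extremes $p_S=0$ (giving the maximally coherent product state achieving SQL) and $p_S=1/2$ (giving the Bell-type state achieving HL), both of which were handled separately in Theorem~\ref{t2}. Once these consistency checks are in place, no further calculation is required.
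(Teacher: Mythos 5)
Your proposal is correct and follows essentially the same route the paper intends: it reuses the structural result $\omega_0=\omega_3$, $\omega_1=\omega_2$ from Theorem~\ref{t2} together with $Q^S_{2,2}=32\omega_0$ and the expression for $p_S$ in terms of $\omega_0$, inverts for $\omega_0$, and selects the larger branch. The branch-selection and edge-case checks you flag are exactly the right points of care, and they go through as you describe.
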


 Thus, we have obtained an exact relation between the optimal QFI and the fixed entanglement content of the initial probe state in $\mathbb{C}^2 \otimes \mathbb{C}^2$, considering entanglement quantified by both the GGM and the von Neumann entanglement entropy. In the following subsection, we investigate whether such a relation holds for higher-dimensional probes as well. We demonstrate that, within certain ranges of the entanglement measures, these relations remain valid even in higher dimensions. This range is marked by the SQL and HL. The detailed analysis is presented below.


\section{Higher-Dimensional Bipartite Probes}
\label{Qudit}

We now consider a bipartite probe system with subsystem dimension $d > 2$. In particular, we focus on the cases $d = 3, 4, 5$. The encoding Hamiltonian for these cases is denoted as $H_{2,d}$, following the notation introduced in Sec.~\ref{notation}, where we set $N = 2$, i.e., the probe consists of two parties.

For a given $d$, each subsystem of the bipartite probe can be regarded as a quantum particle with spin $s = ({d - 1})/{2}$. Accordingly, we consider the local Hamiltonian to be $\mathcal{Z}_d$, which corresponds to the spin angular momentum operator along the $z$-axis. This choice implies that the eigenvalues $\eta_j^d$ of the local Hamiltonian, indexed by $j = 0, 1, \ldots, d - 1$, are uniformly spaced and range from $-s$ to $s$.

To normalize the eigenvalue spectrum across different dimensions, for a given dimension $d$, we rescale the eigenvalues by dividing each by $\left| d - 1 \right|/2$, so that the resulting set of scaled eigenvalues $\{ \bar{\eta}_j^d \}$ spans the interval $[-1, 1]$ with spacing ${1}/{s}$. We denote the Hamiltonian with these rescaled eigenvalues as $\mathcal{\bar{Z}}_d$. This scaling ensures that the maximum and minimum eigenvalues of the local Hamiltonian exactly match those of the qubit case.

Remarkably, this rescaling leads to an exact overlap of the quantum Fisher information versus geometric entanglement $G \in [0, {1}/{2}]$ and entanglement entropy $S \in [0, 1]$ plots for all $d = 3, 4, 5$. In fact, the precision values become exactly equal to those of the qubit case, as we shall discuss later in this subsection.

Thus, the encoding Hamiltonian can be written as 
\begin{eqnarray*}
\begin{split}
    H_{2,d} &= h_{2,d}\, \theta' \\
    &= \left( \mathcal{\bar{Z}}_d \otimes \mathcal{I}_d + \mathcal{I}_d \otimes \mathcal{\bar{Z}}_d \right) \theta'.
\end{split}
\end{eqnarray*}

For unitary evolution governed by such a Hamiltonian, if the initial state of the system for a given $d$ is $\ket{\psi_{2,d}^{\text{in}}}$, then the final state of the system after evolution for unit time is given by
\begin{eqnarray}
\begin{split}
    \ket{\psi_{2,d}^{\text{f}}} = e^{-i H_{2,d}t} \ket{\psi_{2,d}^{\text{in}}}.
\end{split}
\label{U_d}
\end{eqnarray}
 For this type of unitary encoding and pure probe state, the QFI computed for subsystem dimension $d$ can be written as
$$
Q(\ket{\psi_{2,d}^{\text{f}}}) = 4\Delta^2 h_{2,d}^{\text{in}},
$$
where $\Delta^2 h_{2,d}$ is the variance of $h_{2,d}$ computed using the initial state $\ket{\psi_{2,d}^{\text{in}}}$.

The goal is to optimize $Q_d$ over all choices of initial states having fixed entanglement $\mathcal{E}$, and to identify whether there exists an exact relation between the optimal QFI and $\mathcal{E}$, as obtained for $d = 2$.

To maximize the variance $\Delta^2 h_{2,d}$ and, therefore, the QFI over the choices of initial states, we first consider the decomposition of $\ket{\psi_{2,d}^{\text{in}}}$ in the eigenbasis of $h_{2,d}$. Thus, we have
\begin{equation}
\begin{split}
    \ket{\psi_{2,d}^{\text{in}}} &= \sum_{l,n=0}^{d-1} \sqrt{\omega_{l,n}^d} \ket{ln}_d \\
    &= \sum_{q=0}^{d^2 - 1} \sqrt{\omega_{q}^d} \ket{q},
\end{split}
\label{std}
\end{equation}
where $q$ is a shorthand index used to represent the dual indices $(l, n)$, and $0\leq\omega_q^d \leq 1 \, \forall q$. Note that we have omitted local phases in this expansion of $\ket{\psi_{2,d}^{\text{in}}}$, as the variance of $h_{2,d}$ is independent of phase and depends only on the probabilities $\omega_q^d$ of the system being in a particular eigenstate $\ket{q}$.

The eigenvectors $\ket{q}$ have eigenvalues $E_{l,n} = E_q = \bar{\eta}_l^d + \bar{\eta}_n^d$. Thus, the variance of $h_{2,d}$ for the state $\ket{\psi_{2,d}^{\text{in}}}$ is given by
\begin{equation}
     \Delta^2 h_{2,d}^{\text{in}} = \sum_{q=0}^{d^2 - 1} \omega_q^d (E_q)^2 - \left( \sum_{q} \omega_q^d E_q \right)^2.
\end{equation}

  We optimize $\Delta^2 h_{2,d} ^{\text{in}}$ over all choices of $\ket{\psi^{\text{in}}_{2,d}}$ with fixed entanglement $\mathcal{E}$. The optimized QFI at fixed entanglement is given by
\begin{equation}
    Q^{\mathcal{E}}_{2,d} = \max_{\psi^{\text{in}}_{2,d} \in \chi^{\mathcal{E}}_{2,d}} 4 \Delta^2 h_{2,d}^{\text{in}},
    \label{QFd}
\end{equation}
where $\chi^\mathcal{E}_{2,d}$ denotes the set of all initial states of the form given in Eq.~\eqref{std} with fixed entanglement $\mathcal{E}$.

We consider two measures of entanglement, $\mathcal{E}$: the generalized geometric measure $G$ and the von Neumann entanglement entropy $S$ of the reduced state, and numerically optimize the QFI as defined in Eq.~\eqref{QFd} for both measures.

We employ the Improved Stochastic Ranking Evolution Strategy (ISRES) algorithm from the NLopt non-linear optimization library for this constrained optimization. The optimization is performed over the $d$-dimensional probability space of $\omega^d_q$, subject to the constraint that the values of $\omega^d_q$ must be such that the entanglement of the initial probe state is fixed to a specified value of either $G$ or $S$ and $\sum_q\omega^d_q=1$.

Notably, for $G \in [0, {1}/{2}]$ and $S \in [0, 1]$, we find that the precision in higher dimensions $d = 3, 4, 5$ at constant $G$ and $S$ not only remains the same, but also exactly matches the precision obtained in the qubit case.

We formalize these findings as an observation corresponding to $d = 3, 4, 5$, followed by two lemmas.

\begin{observation}
Considering a bipartite probe with subsystem dimension $d = 3, 4, 5$, the optimal state maximizing $Q^G_{2,d}$ and $Q^S_{2,d}$ in the ranges $G \in [0, {1}/{2}]$ and $S \in [0, 1]$, respectively, satisfies $\omega^d_{ln} = 0$ for all $(l, n) \notin \{(0,0), (0,d^2 - 1), (d^2 - 1,0), (d^2 - 1,d^2 - 1)\}$.
\label{Ob}
\end{observation}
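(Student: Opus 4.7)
My plan is to combine a structural reduction to the two-qubit case, already solved by Theorems~\ref{t1} and~\ref{t2}, with the numerical optimization that the paper explicitly invokes. The key observation is that after rescaling, the local eigenvalues $\bar{\eta}_j^d$ are uniformly spaced on $[-1,1]$, so the spectrum of $h_{2,d}$ lies in $[-2,2]$ and the four corner basis vectors $\ket{0,0}_d$, $\ket{0,d-1}_d$, $\ket{d-1,0}_d$, $\ket{d-1,d-1}_d$ carry exactly the eigenvalues $\{-2,0,0,2\}$ of the qubit generator $h_{2,2}$.

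First I would restrict the probe in Eq.~\eqref{std} to the four-dimensional corner subspace, i.e.\ set $\omega^d_{l,n}=0$ whenever $(l,n)$ is not a corner. In this subspace the eigenvalue list, the normalization, the GGM formula (which collapses to a $2\times 2$ Schmidt problem), and the entanglement-entropy formula take exactly the same functional form as in the qubit proofs. Theorems~\ref{t1} and~\ref{t2} then immediately yield corner-restricted maxima $Q^G_{2,2}$ for every $G\in[0,1/2]$ and $Q^S_{2,2}$ for every $S\in[0,1]$, attained by the direct analogues of the corollary states.

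Next I would argue that shifting weight onto non-corner basis vectors cannot strictly improve these values. Two effects compete: every non-corner eigenvalue lies strictly inside $(-2,2)$, which tends to suppress $\langle h_{2,d}^2\rangle$ at fixed probability mass, but such a redistribution may also relax the fixed-entanglement constraint and free additional mass for the extremal vectors. I would address this analytically by writing a Lagrangian on the full $d^2$-simplex in direct analogy with Eqs.~\eqref{Lag1} and~\eqref{Lag2}, showing that the stationarity conditions force all non-corner $\omega^d_{l,n}$ to vanish in the interior, and completing the picture by a boundary analysis along the lines of Appendix~\ref{Bound}. The definitive verification that no competing optimum is missed is supplied by the ISRES search over the whole $d^2$-dimensional probability simplex, which the paper performs for $d=3,4,5$ across the full ranges $G\in[0,1/2]$ and $S\in[0,1]$.

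The hard part will be the Lagrange analysis on $d^2$ variables: ruling out every intermediate-support stationary point requires handling the degeneracies among non-corner eigenvalues $E_{l,n}$ and among Schmidt contributions, producing many special cases analogous to the isolated special value of $G$ encountered in the qubit proof. This is precisely why the paper leans on ISRES for the final corroboration and states the result as an observation restricted to the concrete dimensions $d=3,4,5$ where the numerical search is tractable.
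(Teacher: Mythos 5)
Your proposal matches the paper's treatment in substance: the paper establishes this Observation purely by the ISRES numerical search over the full $d^2$-dimensional probability simplex for $d=3,4,5$ (which is precisely why it is stated as an Observation rather than a lemma), and your corner-subspace reduction to the two-qubit problem is exactly the step the paper then performs to derive the two subsequent lemmas \emph{from} the Observation. The additional full-simplex Lagrange and boundary analysis you sketch is not attempted in the paper, and since you yourself defer the decisive verification to the same ISRES numerics, the two routes coincide.
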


Based on this observation, we propose two lemmas corresponding to the two entanglement measures: the GGM and the von Neumann entanglement entropy. These lemmas are discussed in the following subsection.

\subsection{Entanglement-Constrained Optimal QFI for Bipartite Probes in Higher Dimensions}

In this subsection, we investigate how the entanglement-constrained optimal QFI varies when the entanglement is fixed to specific values. In particular, we establish the relationship between the entanglement-constrained optimal QFI and the fixed entanglement quantified in terms of both GGM and von Neumann entanglement entropy.

Note that for a given subsystem dimension $d$, the GGM, $G$, can range from $0$ to $({d - 1})/{d}$, and the entanglement entropy can range from $0$ to $\log_2{d}$. However, within the range $G \in [0, {1}/{2}]$ and $S \in [0, 1]$, we observe a generic feature in the optimal state with fixed $G$ and $S$ that maximizes the QFI, denoted by $Q^G_{2,2}$ and $Q^S_{2,2}$ respectively, as discussed in Observation.~\ref{Ob}. Therefore, we separately consider two cases: \textbf{Case 1:} $0 \leq G \leq {1}/{2}$ and $0 \leq S \leq 1$, and \textbf{Case 2:} ${1}/{2} < G \leq ({d - 1})/{d}$ and $1 < S \leq \log_2{d}$.

We first present the scenario $0 \leq G \leq {1}/{2}$ and $0 \leq S \leq 1$ in the following subsection.

\subsubsection*{Case 1:- 
 Entanglement Held Fixed Between SQL and HL
Barriers}

Observation~\ref{Ob} indicates that the problem of optimizing $Q^G_{2,2}$ and   $Q^S_{2,2}$ within the range $G \in [0, {1}/{2}]$ and $S \in [0, 1]$ respectively, is exactly equivalent to the qubit case discussed in Sec.~\ref{GMd=2} and Sec.~\ref{S=2}.

This can be explained as follows. According to Observation~\ref{Ob}, the initial state for a fixed $G \in [0, {1}/{2}]$ and $S \in [0, 1]$ takes the form
\begin{equation}
\ket{\psi^{\text{in}}_{2,d}} = \sum_{q=0}^{3} \sqrt{\omega_q^d} \ket{q}, \label{eqd1}
\end{equation}
where $q = 0, 1, 2, 3$ is a shorthand index denoting the pair of indices $(ln)$ with $l, n = 0 \text{ or } d^2 - 1$.

Consequently, the QFI corresponding to this state is expressed as
\begin{equation}
\begin{split}
Q^{G/S}_{2,d} &= \max_{\ket{\psi^{\text{in}}_{2,d}}\in {\chi^{G/S}_{2,d}}} 4  \Delta^2 h_{2,d}^\text{in},\\
&= 4 \left[ \sum_{q} \omega_q (E_q)^2 - \left( \sum_{q} \omega_q E_q \right)^2 \right]. \label{Qod}
\end{split}
\end{equation}

Where $G/S$ in the subscript $Q^{G/S}_{2,d}$, denotes that the entanglement of the initial state is quantified either in terms of GGM with the entanglement being fixed at a specific, $G\in[0,{1}/{2}]$ or it is quantified in terms of the entanglement entropy with fixed entanglement, $S$ that can take values within the range $[0,1]$. $\chi^{G/S}_{2,d}$ denotes the set of all such initial states with fixed $G$ or $S$.

We optimize $Q^{G/S}_{2,d}$ over all input states with fixed GGM, where $G \in [0, {1}/{2}]$ and $S\in[0,1]$. The GGM of the initial state $\ket{\psi^{\text{in}}_{2,d}}$, for a given dimension $d$, when expressed in terms of the parameters $\omega_q^d$ (with $q = 0, 1, 2, 3$), takes the form:
\begin{equation}
\mathcal{E}_{G}^d(\ket{\psi^{\text{in}}_{2,d}}) = \frac{1}{2} \left[ 1 - \sqrt{1 - 4 \left( \sqrt{\omega_1^d \omega_2^d} - \sqrt{\omega_0^d \omega_3^d} \right)^2 } \right].
\label{EDG}
\end{equation}


Similarly the entanglement entropy of the state $\ket{\psi^{\text{in}}_{2,d}}$, in terms of the parameter $\omega^d_q$, can be written as.
\begin{equation}
\mathcal{E}_S^d = -p^d_s \log_2 p^d_s - (1 - p^d_s) \log_2(1 - p^d_s),
\label{ESD}
\end{equation}
where
$$
p^d_s = \frac{1 - \sqrt{1 - 4(\sqrt{\omega^d_1 \omega^d_2} - \sqrt{\omega^d_0 \omega^d_3})^2}}{2}.
$$

For a given $d$, we perform the optimization over the parameter space of $\omega_q^d$ (where $q = 0, 1, 2, 3$), subject to the constraint either $\mathcal{E}_{G}^d(\ket{\psi^{\text{in}}_{2,d}}) = G$ or $\mathcal{E}_{S}^d(\ket{\psi^{\text{in}}_{2,d}})= S$. Where $G$ can take any value within the range $[0,{1}/{2}]$, and $S$ can take any value within the range $[0,1]$ and $\sum_q\omega_q^d=1$.

Since $q, l, m$ are dummy indices, we can relabel them as $p, r, s$ respectively. This replacement shows that the expressions for the initial input state, QFI, and entanglement in terms of the parameters $\omega_q^d$ (where $q = 0, 1, 2, 3$), given in Eqs.~\eqref{eqd1}, \eqref{Qod}, \eqref{EDG} and \eqref{ESD} are structurally identical to those in the qubit case discussed in Sec.~\ref{GMd=2} and Sec.~\ref{S=2} (with $\omega_p$ replaced by $\omega_q^d$).

Therefore, by following the same analysis as in Sec.~\ref{GMd=2} and  Sec.~\ref{S=2}  we can deduce the following two lemmas and two corresponding corollaries.

\begin{figure*}
\hspace{-1cm}
\includegraphics[scale=0.58]{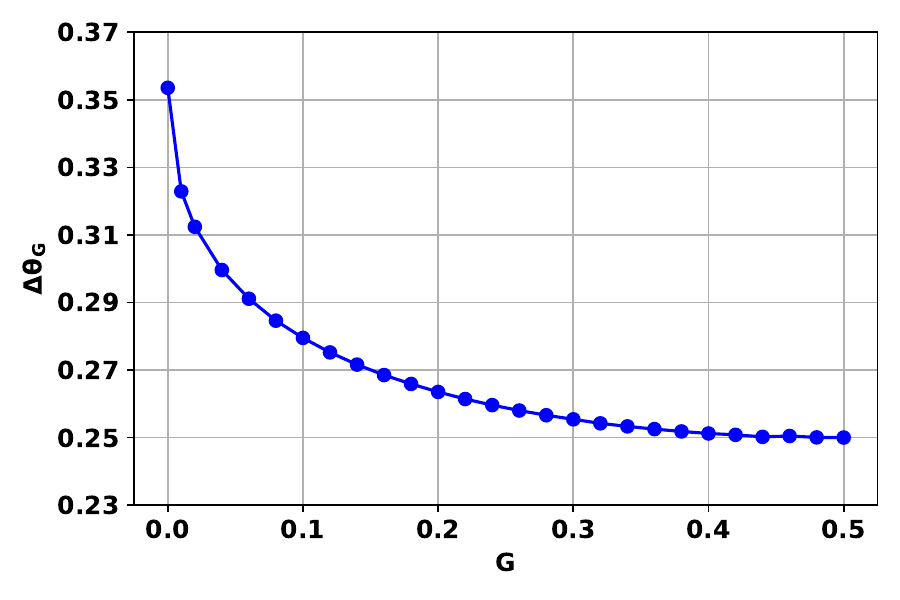}\hspace{0.8cm}
\includegraphics[scale=0.58]{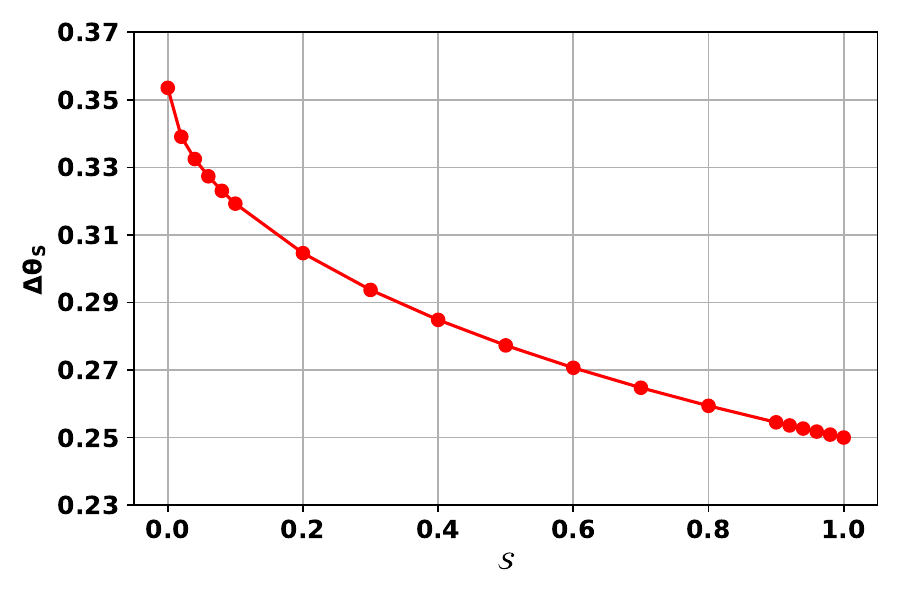}
\caption{\textbf{Minimum standard deviation vs fixed entanglement, for varying dimension}. The plot depicts entanglement-constrained optimal standard deviation in the estimation of the parameter $\theta$ versus entanglement plots are presented for a bipartite probe system with subsystem dimensions $d = 2, 3, 4, 5$. The right panel of the figure illustrates the variation of the minimum standard deviation, $\Delta{\theta_{G}}$, optimized over all choices of initial states with a fixed geometric measure of entanglement, $G \in [0, {1}/{2}]$. The left panel shows the variation of the minimum standard deviation in estimating $\theta$, denoted as $\Delta{\theta_{S}}$, maximized over all choices of initial states with a fixed von Neumann entanglement entropy, $S \in [0, 1]$. It is evident from both plots that the precision is highly sensitive to changes in entanglement in the low-entanglement regime, where even a slight increase in entanglement leads to a rapid improvement in precision. In contrast, in the high-entanglement regime, this sensitivity diminishes, and the growth of precision with increasing entanglement becomes more gradual. The vertical axis is dimensionless, and the horizontal axes are expressed in units of e-bits.}
\label{f1}
\end{figure*}

\begin{lemma}
\label{l3}
Consider a unitary encoding of the form given in Eq.\eqref{U_d}. Let the GGM, as defined in Eq.\eqref{EDG}, corresponding to the initial bipartite probe state in $\mathbb{C}^d \otimes \mathbb{C}^d$ with local dimension $d = 3, 4, 5$, be fixed to a specific value $G \in [0, {1}/{2}]$. Let $\chi^G_{2,d}$ denote the set of all such pure bipartite states with fixed GGM $G$. Then, the quantum Fisher information with respect to the parameter $\theta$ is defined as
\begin{equation*}
Q^G_{2,d} := \underset{\ket{\psi^{\text{in}}_{2,d}} \in \chi^G_{2,d}}{\max} 4\Delta^2 h_{2,d}^\text{in},
\end{equation*}
optimized over all initial probes $\ket{\psi^{\text{in}}_{2,d}}\in\chi^G_{2,d}$  is given by
\begin{equation}
Q^G_{2,d} = 8 \left( 1 + \sqrt{1 - (1 - 2G)^2} \right).
\end{equation}
\end{lemma}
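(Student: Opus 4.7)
The plan is to reduce the higher-dimensional optimization to the two-qubit case of Theorem~\ref{t1} by invoking Observation~\ref{Ob} as the key structural input, after which the problem becomes a relabeling of one already solved.

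First, I would apply Observation~\ref{Ob}: for $d = 3, 4, 5$ and $G \in [0, 1/2]$, the optimizer puts zero weight on every basis vector whose local indices are not at the two ends of the local spectrum. Under the rescaling $\bar{\mathcal{Z}}_d$ adopted at the start of Sec.~\ref{Qudit}, those two extremal local eigenvalues equal $\pm 1$ independently of $d$. Hence, on the four-dimensional effective subspace, the joint eigenvalues $E_q$ of $h_{2,d}$ are $\{-2,0,0,2\}$, exactly matching those of $h_{2,2}$ in Sec.~\ref{Qubit}.

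Second, I would make the structural isomorphism explicit. Under the natural relabeling $\omega_q^d \leftrightarrow \omega_p$, the restricted initial state in Eq.~\eqref{eqd1}, the restricted QFI in Eq.~\eqref{Qod}, and the GGM in Eq.~\eqref{EDG} reproduce term-by-term the three expressions Eqs.~\eqref{eq1}, \eqref{vh}, and \eqref{GGM} used in the proof of Theorem~\ref{t1}. The normalization constraint $\sum_q \omega_q^d = 1$ and the fixed-GGM constraint translate identically. Thus the constrained maximization for $d=3,4,5$ is, on the nose, the same optimization problem already solved for $d=2$.

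Third, I would invoke Theorem~\ref{t1} (or, equivalently, rerun its Lagrange-multiplier argument, which transfers verbatim and forces $\omega_0^d = \omega_3^d$ and $\omega_1^d = \omega_2^d$) to conclude that $Q^G_{2,d} = 32\,\omega_0^d$ with $G$ and $\omega_0^d$ related as in Eq.~\eqref{Gw}. Eliminating $\omega_0^d$ yields the closed form $Q^G_{2,d} = 8\bigl(1 + \sqrt{1 - (1-2G)^2}\bigr)$.

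The main obstacle is not the reduction itself, which is essentially a relabeling, but the reliance on Observation~\ref{Ob}. Observation~\ref{Ob} is asserted for $d = 3, 4, 5$ on the basis of the ISRES numerical optimization, so a fully analytic proof of Lemma~\ref{l3} would require showing that no probability distribution placing weight on intermediate eigenstates can outperform the extremal four-point distribution within the regime $G \in [0, 1/2]$. Heuristically this is plausible because $\Delta^2 h_{2,d}^{\text{in}}$ is driven by the spread of $E_q$, which is maximized by the extremal pair $\pm 1$, while the GGM constraint in Eq.~\eqref{EDG} depends only on the pairings $(\omega_0^d,\omega_3^d)$ and $(\omega_1^d,\omega_2^d)$ of the extremal weights and does not penalize their selection; promoting this heuristic to an analytic bound valid for all $d\ge 3$ is the part I expect to be genuinely hard, and presumably the reason the authors state the result as a lemma supported by Observation~\ref{Ob} rather than proving it from scratch.
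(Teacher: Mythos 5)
Your proposal matches the paper's own argument: the authors likewise invoke Observation~\ref{Ob} to restrict the optimizer to the four extremal basis states, note that the rescaled eigenvalues make the restricted QFI, GGM, and constraints term-by-term identical to the two-qubit expressions, and then carry over the Lagrange-multiplier analysis of Theorem~\ref{t1}. Your closing caveat is also accurate — the paper provides no analytic justification of Observation~\ref{Ob} beyond the ISRES numerics, so the lemma rests on the same numerical input you identify.
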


The exact form of the optimal state with fixed entanglement $G \in [0, {1}/{2}]$ that maximizes the QFI can also be obtained. Since the proof of Lemma~\ref{l3} is exactly analogous to that of Theorem~\ref{GMd=2}, it follows that the optimal probe must satisfy $\omega_0^d = \omega_3^d$ and $\omega_1^d = \omega_2^d$. Additionally, the relation between $\omega_0^d$ and the fixed entanglement $G$ must match that given in Eq.~\eqref{Gw}, with $\omega_0$ replaced by $\omega_0^d$. We summarize this in the following corollary.

\begin{corollary}
For a given subsystem dimension $d$. The optimal state $\ket{\psi^{\text{o}}_{2,d}} \in \chi^G_{2,d}$ that yields the maximum QFI, $Q^G_{2,d}$, at fixed entanglement $G \in [0, {1}/{2}]$, is given by
\begin{equation}
\begin{split}
\ket{\psi^{\text{o}}_{2,d}(G)} &= \sqrt{\frac{1 + \sqrt{1 - (1 - 2G)^2}}{4}} \left( \ket{00}_d + \ket{l_d l_d}_d \right) \\
&\quad + \sqrt{\frac{1 -\sqrt{1 - (1 - 2G)^2}}{4}} \left( \ket{0 l_d}_d + \ket{l_d 0}_d \right),
\end{split}
\label{stdg}
\end{equation}
where $l_d = d^2 - 1$.
\end{corollary}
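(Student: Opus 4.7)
The corollary is essentially a relabelling of the two-qubit result, so my plan is to exploit the reduction already established in the proof of Lemma~\ref{l3} rather than launch a fresh optimization. First I would invoke Observation~\ref{Ob}, which states that for $G \in [0, 1/2]$ the optimum carries nonzero amplitude only on the four extremal joint eigenstates of $h_{2,d}$. In the shorthand of Eq.~\eqref{std} these correspond to $q = 0, 1, 2, 3$, identified with the basis pairs $(0,0)$, $(0, l_d)$, $(l_d, 0)$, $(l_d, l_d)$, where $l_d = d^2 - 1$. The initial state therefore collapses to the four-term expansion of Eq.~\eqref{eqd1}, carrying only four unknown amplitudes $\sqrt{\omega_q^d}$.

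Next I would observe that on this four-dimensional support every ingredient of the optimization is formally identical to the two-qubit problem of Sec.~\ref{GMd=2}: the restricted eigenvalues of $h_{2,d}$ are $\{-2,0,0,+2\}$, the normalization is $\sum_{q=0}^{3} \omega_q^d = 1$, and the GGM in Eq.~\eqref{EDG} has exactly the form of Eq.~\eqref{GGM} under the relabelling $\omega_p \leftrightarrow \omega_q^d$. The Lagrange-multiplier argument of Theorem~\ref{t1} therefore applies verbatim, forcing $\omega_0^d = \omega_3^d$ and $\omega_1^d = \omega_2^d$, with the common values supplied by Eq.~\eqref{Gw} (after the same replacement). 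Explicitly,
\[
\omega_0^d = \omega_3^d = \tfrac{1}{4}\bigl(1 + \sqrt{1 - (1 - 2G)^2}\bigr),
\]
\[
\omega_1^d = \omega_2^d = \tfrac{1}{4}\bigl(1 - \sqrt{1 - (1 - 2G)^2}\bigr).
\]

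Substituting these amplitudes into the four-term expansion and mapping $q = 0, 1, 2, 3$ back to the pairs $(0,0)$, $(0, l_d)$, $(l_d, 0)$, $(l_d, l_d)$ yields the claimed form: the equal amplitudes on $\ket{00}_d$ and $\ket{l_d l_d}_d$ collect into a single term $\sqrt{\omega_0^d}\,(\ket{00}_d + \ket{l_d l_d}_d)$, and likewise $\sqrt{\omega_1^d}\,(\ket{0 l_d}_d + \ket{l_d 0}_d)$, reproducing exactly the displayed expression for $\ket{\psi^o_{2,d}(G)}$. The only non-routine ingredient in this chain is Observation~\ref{Ob} itself --- the claim that, for $d = 3, 4, 5$, restricting to the four extremal eigenstates really does capture the global maximum --- and it is established numerically earlier in the text. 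Once it is granted, no further obstacle remains and the corollary follows as a direct transcription of the two-qubit optimal state into the higher-dimensional labelling.
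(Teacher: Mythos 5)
Your proposal is correct and follows essentially the same route as the paper: invoke Observation~\ref{Ob} to restrict the support to the four extremal eigenstates, note that the resulting optimization is structurally identical to the two-qubit problem, and import the conclusion $\omega_0^d=\omega_3^d$, $\omega_1^d=\omega_2^d$ together with the relation of Eq.~\eqref{Gw} to read off the amplitudes. You also correctly identify that the only non-analytic ingredient is the numerically established Observation~\ref{Ob}, which is exactly the logical status of the argument in the paper.
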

Next considering von Neumann entanglement entropy as an entanglement measure and following the exact analysis as in Sec.~\ref{S=2}. We present the following lemma.

    \begin{lemma}
For a unitary encoding of the form given in Eq.~\eqref{U_d}, if the entanglement entropy (as defined in Eq.~\eqref{ESD}) of the initial bipartite probe state in $\mathbb{C}^d \otimes \mathbb{C}^d$, with $d = 3, 4, 5$, is fixed to a value $S \in [0,1]$, and if $\chi^S_{2,d}$ denotes the set of all such pure bipartite states with entanglement entropy $S$, then the relationship between the quantum Fisher information,
\begin{equation*}
    Q^S_{2,d} := \max_{\ket{\psi^{\text{in}}_{2,d}} \in \chi^S_{2,d}} 4 \, \Delta^2 h_{2,d}^{\text{in}},
\end{equation*}
with respect to the parameter $\theta$, and the entropy $S$ is given by:
\begin{equation*}
    S = -\lambda_{Q^S_{2,d}} \log_2(\lambda_{Q^S_{2,d}}) - (1 - \lambda_{Q^S_{2,d}}) \log_2(1 - \lambda_{Q^S_{2,d}}),
\end{equation*}
where
\begin{equation*}
    \lambda_{Q^S_{2,d}} = \frac{1 - \sqrt{1 - \left(1 - \frac{Q^S_{2,d}}{8}\right)^2}}{2}.
\end{equation*}
\end{lemma}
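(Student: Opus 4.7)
The plan is to reduce this higher-dimensional problem to the two-qubit case treated in Theorem~\ref{t2} by invoking Observation~\ref{Ob}, rather than re-running the Lagrange-multiplier analysis from scratch. First, I would apply Observation~\ref{Ob} to the regime $S \in [0,1]$: the optimal probe in $\chi^S_{2,d}$ has support only on the four basis states $\ket{00}_d$, $\ket{0\,l_d}_d$, $\ket{l_d\,0}_d$, $\ket{l_d\,l_d}_d$, where $l_d$ indexes the extremal local eigenvalue. This collapses the $d^2$-dimensional probability simplex $\{\omega^d_q\}$ onto an effectively four-dimensional one, indexed by $q=0,1,2,3$ as in Eq.~\eqref{eqd1}.

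Second, I would exploit the rescaling $\mathcal{\bar{Z}}_d$ introduced in Sec.~\ref{Qudit}: by construction, the extremal rescaled eigenvalues are $\bar{\eta}^d_0 = -1$ and $\bar{\eta}^d_{d-1} = +1$, identical to the spectrum of $\sigma_z$. Consequently, the four surviving eigenstates of $h_{2,d}$ carry eigenvalues $\{-2, 0, 0, +2\}$, exactly matching the spectrum of $h_{2,2}$ on its eigenbasis $\{\ket{p}\}_{p=0}^3$. Under the identification $\omega^d_q \leftrightarrow \omega_p$, the variance $\Delta^2 h_{2,d}^\text{in}$ reduces to the expression in Eq.~\eqref{vh}, and the entanglement entropy given in Eq.~\eqref{ESD}, evaluated on the restricted support, takes exactly the form used in Theorem~\ref{t2}, with $p_s^d$ playing the role of $p_s$.

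Third, once this structural isomorphism is established, the entire constrained optimization problem, namely maximizing $4\Delta^2 h_{2,d}^\text{in}$ subject to $\sum_q \omega^d_q = 1$ and $\mathcal{E}_S^d = S$, coincides with the problem solved in Theorem~\ref{t2}. Invoking that theorem directly then yields the optimal weights $\omega^d_0 = \omega^d_3$ and $\omega^d_1 = \omega^d_2$, together with the closed-form relation between $S$ and $Q^S_{2,d}$ via $\lambda_{Q^S_{2,d}} = \tfrac{1}{2}\bigl(1 - \sqrt{1 - (1 - Q^S_{2,d}/8)^2}\bigr)$. The boundary cases $S = 0$ (SQL) and $S = 1$ (HL) are inherited as in Theorem~\ref{t2}.

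The main obstacle is that the reduction hinges entirely on Observation~\ref{Ob}, which is stated as a numerical finding for $d = 3, 4, 5$ rather than established analytically. Any rigorous version of the lemma therefore inherits the status of that observation: one must either trust the ISRES-based numerics or supply an independent argument, for instance a concavity or majorization argument on the restricted probability distribution, to rule out optimal probes with support on the intermediate eigenvectors of $h_{2,d}$. Everything downstream of Observation~\ref{Ob} is a direct transcription of the qubit proof, so the analytical novelty is concentrated at precisely that step.
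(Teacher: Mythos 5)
Your proposal follows essentially the same route as the paper: it invokes Observation~\ref{Ob} to restrict the optimal probe's support to the four extremal eigenstates of $h_{2,d}$ (with rescaled eigenvalues $\{-2,0,0,+2\}$), observes that the resulting expressions for the state, the variance, and the entanglement entropy are structurally identical to the two-qubit case, and then imports the conclusion of Theorem~\ref{t2} wholesale. Your closing remark that the rigor of the lemma is bottlenecked by the numerical status of Observation~\ref{Ob} is also consistent with how the paper itself frames the result (as a lemma resting on a numerically obtained observation rather than a fully analytic theorem).
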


Also since the proof of the above lemma is analogous to that presented in Sec.~\ref{S=2}. This suggests that  optimal probe at fixed $S \in[0,1]$ that gives the highest $Q^S_{2,2}$ must have a form similar to that given in Corollary 2(with $\omega_p$ being replaced by $\omega_q^d)$. This is summarized in the form of a corollary below.

\begin{corollary}
For a given subsystem dimension $d=3,4,5$. The optimal state $\ket{\psi^{\text{o}}_{2,d}} \in \chi^S_{2,d}$ that yields the maximum QFI, $Q^S_{2,d}$, at fixed entanglement $S \in [0, 1]$, given as $S= -p_s^d \log_2 p^d - (1 - p_s^d) \log_2(1 - p_s^d), p_s^d \in [0,{1}/{2}]$, must have the following form
\begin{eqnarray}
\begin{split}
\ket{\psi_{2,d}^o(S)} &= \sqrt{\frac{1+\sqrt{1-(1-2p^d_s)^2}}{4}}(\ket{00}_d+\ket{l_dl_d}_{d}) \\
&\quad + \sqrt{\frac{1-\sqrt{1-(1-2p^d_s)^2}}{4}}(\ket{0l_d}_{d}+\ket{l_d0}_{d}).
\label{Sst}
\end{split}
\end{eqnarray}
where $l_d = d^2 - 1$.
\end{corollary}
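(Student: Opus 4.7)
The plan is to obtain this corollary as a direct structural consequence of the preceding lemma, using Observation~\ref{Ob} to collapse the higher-dimensional optimization back to the four-level qubit problem already solved in Sec.~\ref{S=2}. First, I would invoke Observation~\ref{Ob}, which restricts the support of the optimal $S$-constrained probe to the four basis vectors $\{\ket{00}_d,\ket{0\,l_d}_d,\ket{l_d\,0}_d,\ket{l_d\,l_d}_d\}$. Because the rescaled local spectrum attains its extreme values $\pm 1$ at precisely these indices, the $h_{2,d}$-eigenvalues on these four vectors are exactly $-2,0,0,+2$, reproducing the qubit spectrum of Sec.~\ref{GMd=2}. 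Relabeling the four nonzero coefficients as $\omega_q^d$ with $q=0,1,2,3$, the QFI functional, the normalization constraint, and the entropy expression Eq.~\eqref{ESD} all become structurally identical to their two-qubit counterparts.

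Second, I would apply the Lagrange-multiplier analysis of Theorem~\ref{t2} verbatim, with $\omega_p$ replaced by $\omega_q^d$. The same reasoning shows that the interior optimum satisfies $\omega_0^d=\omega_3^d$ and $\omega_1^d=\omega_2^d$ for all $S\in(0,1)\setminus\{S_f\}$; for the distinguished value $S_f$ the QFI-comparison trick used in the proof of Theorem~\ref{t2} picks out the same symmetric solution, while the endpoints $S=0$ and $S=1$ are covered by the SQL/HL optimal-state identification from Ref.~\cite{M1}. This places the optimal probe within the two-parameter ansatz
\begin{equation*}
\sqrt{\omega_0^d}\,(\ket{00}_d+\ket{l_d\,l_d}_d)+\sqrt{\omega_1^d}\,(\ket{0\,l_d}_d+\ket{l_d\,0}_d).
\end{equation*}

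Third, I would close the argument by a short algebraic inversion. Normalization gives $\omega_1^d=\tfrac{1}{2}-\omega_0^d$, so that $4(\sqrt{\omega_1^d\omega_2^d}-\sqrt{\omega_0^d\omega_3^d})^2$ collapses to $(1-4\omega_0^d)^2$. Inserting this into Eq.~\eqref{ESD} yields $p_s^d=\tfrac{1}{2}\bigl(1-\sqrt{1-(1-4\omega_0^d)^2}\bigr)$, which I invert to obtain $\omega_0^d=\bigl(1\pm\sqrt{1-(1-2p_s^d)^2}\bigr)/4$. Since the QFI equals $32\omega_0^d$ and is therefore monotonically increasing in $\omega_0^d$, the maximizing root is the ``$+$'' branch and $\omega_1^d$ is its complement. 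Substituting these values back into the ansatz reproduces Eq.~\eqref{Sst} exactly.

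The main obstacle is that this entire reduction rests on Observation~\ref{Ob}, which the paper supports numerically rather than analytically. A fully rigorous upgrade of the corollary would need to prove the support restriction, perhaps via a majorization or concavity argument showing that reallocating probability weight into the non-corner eigenstates of $h_{2,d}$ cannot increase $4\Delta^2 h_{2,d}^{\text{in}}$ at fixed entropy $S$. Granted Observation~\ref{Ob}, however, every remaining step is a mechanical transcription of the Sec.~\ref{S=2} proof, with no new analytical difficulty.
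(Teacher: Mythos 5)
Your proposal is correct and follows essentially the same route as the paper: invoke Observation~1 to restrict the optimal probe's support to the four corner eigenstates of $h_{2,d}$, observe that the QFI, normalization, and entropy expressions then become structurally identical to the two-qubit case, and transcribe the Lagrange-multiplier analysis of Theorem~2 (including the special value $S_f$ and the $S=0,1$ endpoints) before inverting $p_s^d$ to get the explicit coefficients. In fact you spell out the final algebraic inversion and the reliance on the numerically supported Observation more explicitly than the paper does, which simply asserts the analogy; there is no substantive difference in approach.
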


\noindent
Since $Q^{G/S}_{2,d}$ has exactly the same expression in terms of $G$ and $S$ for $d = 2, 3, 4, 5$, we omit the superscript $d$ and denote it simply as $Q^{G/S}_2$ to emphasize this commonality. Using the quantum Cramér-Rao bound, the corresponding optimal standard deviation in estimation of the parameter $\theta$ is given by
$$\Delta \theta_G = (Q^G_{2})^{-\frac{1}{2}},$$ 
$$\Delta \theta_S = (Q^S_{2})^{-\frac{1}{2}}.$$

\noindent 
In the left and right panels of Fig.~\ref{f1}, we plot the precision, $\Delta \theta_G$ and $\Delta \theta_S$, as functions of the geometric measure of entanglement, $G \in [0, {1}/{2}]$, and the entanglement entropy, $S \in [0, 1]$, respectively. The plots clearly reveal that in the low-entanglement regime, the precision is highly sensitive to changes in entanglement—small increases in $G$ or $S$ lead to significant improvements in precision. However, in the higher entanglement regime, this sensitivity diminishes, and the rate of improvement in precision gradually slows down as entanglement increases. It is well-established that the highest achievable precision, in the absence of any entanglement constraints, is attained at $G = {1}/{2}$ or $S = 1$, corresponding to the cat state. Remarkably, our analysis indicates that even without direct access to the cat state, preparing states of the form given in Eq.~\eqref{stdg} and Eq.~\eqref{Sst}, with $G$ approaching ${1}/{2}$ or $S$ approaching $1$, is sufficient to achieve near-optimal precision. This finding offers a significant practical insight: with entanglement fixed within the ranges corresponding to the SQL and HL namely, $G \in [0, {1}/{2}]$ and $S \in [0, 1]$ the precision initially increases rapidly with entanglement but the growth slows down or nearly saturates in the high-entanglement regime. This suggests that maximal entanglement is not strictly necessary to approach the ultimate precision limit, known as HL.
In the following subsection we consider the second case with higher entanglement values.
\subsubsection*{Case 2:- Entanglement Held Fixed Beyond HL Barrier}

Unlike the previous case, we did not observe any consistent or generic trends in the structure of the optimal states or in the maximum achievable precision for entanglement values beyond $G > {1}/{2}$ and $S > 1$, i.e. in the ranges ${1}/{2} < G \leq ({d-1})/{d}$ and $1 < S \leq \log_2{d}$. In this regime, we rely entirely on numerical optimization, as no exact analytical relation could be established between the entanglement-constrained optimal quantum Fisher information (QFI) and the corresponding fixed entanglement values. In Fig.~\ref{f2}, we present the numerically obtained optimal precision vs  fixed entanglement plots for subsystem dimensions $d = 3, 4, 5$. It is important to emphasize that due to strict time and computational resource limitations, the   ISRES algorithm from the NLopt optimization library did not consistently converge to the global optimum for higher values of $S$ and $G$ across all $d$. Therefore, the plotted data includes only those points where either the convergence criteria were satisfactorily met within the available resources or the entanglement constraint was exactly satisfied. For some points, although convergence to a global optimum could not be guaranteed within the allotted computational time, the fixed entanglement constraint was strictly satisfied. These points represent the best achievable precision within our resource limitations, but they are not necessarily guaranteed to correspond to the true global optima. Nonetheless, they provide valuable insight into the performance achievable under time limitations.

\begin{figure*}
\hspace{-1cm}
\includegraphics[scale=0.58]{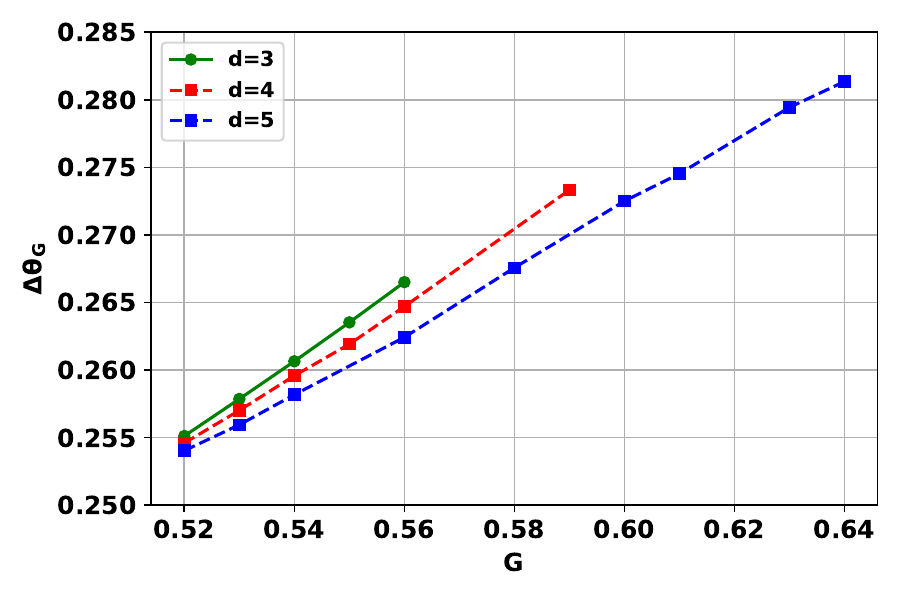}\hspace{0.8cm}
\includegraphics[scale=0.58]{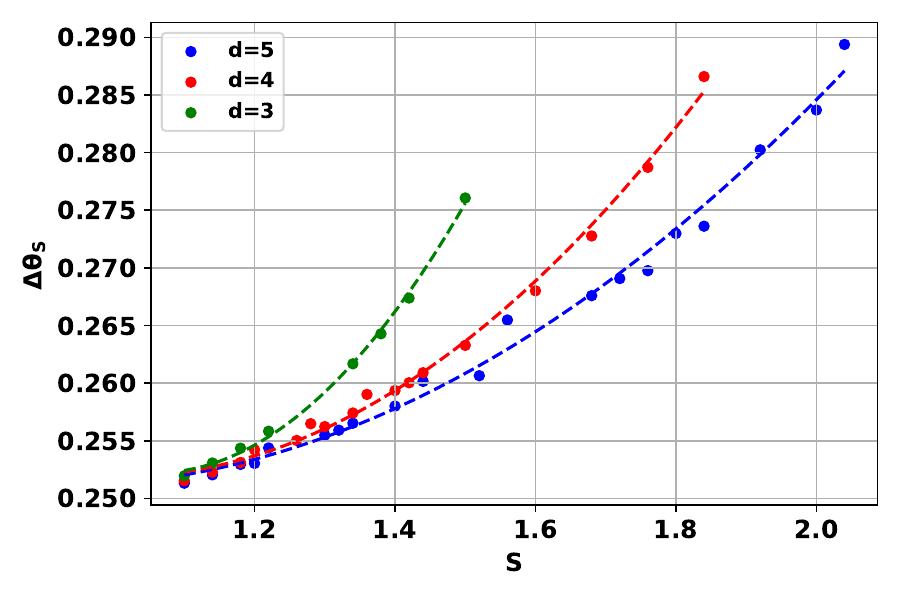}
\caption{\textbf{Minimum standard deviation vs entanglement fixed beyond HL barrier}. The above plot depicts the variation of the entanglement-constrained optimal standard deviation in the estimation of $\theta$, for fixed initial entanglement values beyond those corresponding to the HL, for subsystem dimensions $d = 3, 4, 5$. The plots corresponding to different dimensions are distinguished by color: green for $d = 3$, red for $d = 4$, and blue for $d = 5$.  The left panel shows the variation of the standard deviation $\Delta \theta_G$ with entanglement quantified by GM, denoted as $G$, where $G > \frac{1}{2}$. This plot demonstrates that, within the relevant range of convergence, $\Delta \theta_G$ increases linearly with $G$ for $d=3,4,5$, indicating that the optimal precision decreases linearly as $G$ increases. The right panel illustrates the variation of the standard deviation $\Delta \theta_S$ with entanglement quantified by the von Neumann entanglement entropy, $S>1$. The plot shows that $\Delta \theta_S$ increases quadratically with $S$ for all dimensions $d=3,4,5$. Here, the scattered points represent the numerically obtained optimized values of $\Delta \theta_S$, while the dotted lines correspond to the best-fit curves for various dimensions. The equation of the quadratic fit is given by $(a_d S^2 + b_d S + c_d)^{-\frac{1}{2}}$, where $a_d$, $b_d$, and $c_d$ are fitting parameters that depend on the subsystem dimension $d$ and the values of which are given in Table.\ref{tab1}. The vertical axis of the plots is dimensionless, and the horizontal axes are expressed in units of e-bits. }
\label{f2}
\end{figure*}

We maintain the same notations, $\Delta\theta_G$ and $\Delta\theta_S$, to denote the precision at fixed values of $G > {1}/{2}$ and $S > 1$, respectively, for all considered dimensions $d = 3, 4, 5$. However, to clearly distinguish the precision corresponding to different $d$ values, we have plotted the precision curves in Fig.~\ref{f2} using distinct colors: green for $d = 3$, red for $d = 4$, and blue for $d = 5$. The right panel of Fig.~\ref{f2} illustrates the variation of $\Delta \theta_G$ as a function of $G$. As evident from the plot, $\Delta \theta_G$ increases approximately linearly, and therefore the precision decreases linearly with $G$ within the relevant range of convergence. The left panel of Fig.~\ref{f2} shows the precision $\Delta \theta_S$ as a function of $S > 1$ for different values of $d$. The plots reveal that, for all three dimensions, $\Delta \theta_S$ increases, with the entanglement constrained optimal QFI exhibiting a quadratic dependence on $S$, and hence the precision consistently decreases as $S$ increases. Here, the scattered points represent the numerically obtained optimized values of $\Delta \theta_S$, while the dotted lines correspond to the best-fit curves for various dimensions. The equation of these curve is given by $(a_d S^2 + b_d S + c_d$, where $a_d$, $b_d)^{-\frac{1}{2}}$, and $c_d$ are fitting parameters that depend on the subsystem dimension $d$. The values of these fitting parameters are provided in the Table.~\ref{tab1}. This completes our analysis for the case of ${1}/{2} < G \leq ({d-1})/{d}$ and $1 < S \leq \log_2{d}$.

Note that, in the entire analysis above, we considered the local Hamiltonian with equally spaced eigenenergies. This  results in the complete overlap of the precision curves in the range $0 \leq G \leq {1}/{2}$ and in the range $0 \leq S \leq 1$ for $d = 2, 3, 4, 5$. 

However, what if the eigenenergies are not equally spaced? Will the precision still remain identical to that of the $d=2$ case? We address this question with an example below, considering the $d=3$ case. We show that when the eigenenergies of the local Hamiltonian are not equally spaced, the relation between the entanglement-constrained optimal QFI and the fixed initial entanglement (quantified both in terms of GM and von Neumann entanglement entropy) differs from the relations deduced in Theorem.~\ref{t1} and Theorem.~\ref{t2}. This situation is distinct from that with equally spaced eigenenergies. 

Nevertheless, it is worth noting that the generic nature of the entanglement-constrained optimal precision, for a fixed entanglement starting from $G, S = 0$ (the entanglement value corresponding to the SQL) up to $G = {1}/{2}, S = 1$ (the values corresponding to the HL), still persists. This behavior is characterized by rapid growth in the low entanglement region and slow growth in the high entanglement region. The example is presented below.

\begin{table}[h]
    \centering
   \resizebox{5cm}{1.5cm}{ \begin{tabular}{|c|c|c|c|c|}
        \hline
        & $d=3$ & $d=4$ & $d=5$ \\
        \hline
        $a_d$ & 0.028  & 0.047 & 0.12  \\
        \hline
        $b_d$ & -0.051 & -0.254 & --0.386\\
        \hline
        $c_d$ & 0.274 & 0.3 & 0.386  \\
        \hline
    \end{tabular}}
    \caption{Values of the fitting parameters $a_d$, $b_d$, and $c_d$ corresponding to the quadratic curve fitted to the $\Delta \theta_S$ vs. $S$ plot in Fig.~\ref{f2}, following the equation $(a_d S^2 + b_d S + c_d)^{-\frac{1}{2}}$. The parameters are provided for subsystem dimensions $d = 3, 4, 5$.}
  \label{tab1}
\end{table}
\subsection{ When the Eigenenergies of the Local Hamiltonian Are Not Equally Spaced}

In this subsection, we explore how the entanglement-constrained optimal QFI varies with entanglement in the range from the SQL to the HL, when the local components of the Hamiltonian are not equally spaced. To investigate this, we consider a specific example of a bipartite system probe with a subsystem dimension of $d = 3$.\\

Consider the encoding Hamiltonian of the form:
\begin{eqnarray*}
\begin{split}
    \bar{H}_{2,3} &= (\bar{h}_{2,3}) \, \theta' \\
    &= \left( \mathcal{{N}}_3 \otimes \mathcal{I}_3 + \mathcal{I}_3 \otimes \mathcal{{N}}_3 \right) \theta'.
\end{split}
\end{eqnarray*}

Where according to the notation introduced in Sec.~\ref{notation}, we set $(N, d) = (2, 3)$ in the superscript. $\mathcal{N}_3$ denotes the local component of the Hamiltonian with eigenenergies $\zeta_0 = 0$, $\zeta_1 = 2$, and $\zeta_2 = 3$, with corresponding eigenstates $\ket{k}$, where $k = 0, 1, 2$. Thus, the eigenenergies of the global Hamiltonian $h_{2,3}$ are given by $\bar{E}_{j,k} = \zeta_j + \zeta_k$, with corresponding eigenvectors $\ket{jk}_3$, where $j, k = 0, 1, 2$. We use a shorthand index $q$ to denote the collective index pair $(j,k)$. Accordingly, the set of eigenenergies and eigenvectors of the generator $h_{2,3}$ are denoted as $\{\bar{E}_q\}$ and $\{\ket{q}\}$, respectively, where $m = 0, 1, \ldots, d^2 - 1$.

Thus the optimal entanglement constrained QFI, maximized over the choice of fixed initial states,$\ket{\bar{\psi}^{\text{in}}_{2,d}}$ is given as,

\begin{equation}
\begin{split}
\bar{Q}^{G/S}_{2,3} &= \max_{\ket{\bar{\psi}^{\text{in}}_{2,3}}\in {\bar{\chi}^{G/S}_{2,3}}} 4\Delta^2 h_{2,d}^\text{in},\\
&= 4 \left[ \sum_{q} \omega_q E_q^2 - \left( \sum_{q} \omega_q E_q \right)^2 \right]. \label{Qound}
\end{split}
\end{equation}

Where $\bar{\chi}^{G/S}_{2,3}$ represents the set of all input states with fixed entanglement in the range $G\in[0,{1}/{2}]$ and $S\in[0,1]$, respectively. We use ISRES algorithm from the NLopt  library to optimize $\bar{Q}_{G/S}^3$. For which we consider the initial state with fixed entanglement of the form 
\begin{equation}
\begin{split}
    \ket{\bar{\psi}_{2,3}^{\text{in}}} 
    &= \sum_{q=0}^{d^2 - 1} \sqrt{\bar{\omega}_{q}^3} \ket{q},
\end{split}
\label{std}
\end{equation}
We observe that for all $G \in [0,{1}/{2}]$ and $S \in[0,1]$. The optimal state of fixed entanglement have a particular unique form as given in the form of an observation below
\begin{observation}
    Considering a bipartite probe with subsystem dimension $d = 3$, the optimal state maximizing $\bar{Q}^{G/S}_{2,3}$ in the ranges $G \in (0, {1}/{2})$ and $S \in (0, 1)$, respectively, satisfies $\omega^d_{jk} = 0$ for all $(j, k) \notin \{(0,0), (0,d^2 - 1), (d^2 - 1,0), (d^2 - 1,d^2 - 1)\}$.
\end{observation}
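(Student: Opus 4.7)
The plan is to follow the same two-pronged strategy used for Observation~\ref{Ob}: build analytical intuition via a Lagrange-multiplier stationarity analysis, and then verify the claim through an exhaustive numerical search with the ISRES solver from the NLopt library. First, I would write $\ket{\bar{\psi}^{\text{in}}_{2,3}}$ in the eigenbasis of $\bar{h}_{2,3}$ as in Eq.~\eqref{std}, with nine amplitudes $\sqrt{\bar{\omega}^3_{jk}}$, $j,k\in\{0,1,2\}$, and corresponding eigenvalues $\bar{E}_{jk}=\zeta_j+\zeta_k$ drawn from $(0,2,3)$. I would then express the variance $\Delta^2\bar{h}_{2,3}^{\text{in}}=\sum_q \bar{\omega}^3_q \bar{E}_q^2 - (\sum_q \bar{\omega}^3_q \bar{E}_q)^2$, the GGM $\mathcal{E}_G = 1-\lambda_{\max}^2$, and the entanglement entropy $\mathcal{E}_S = -\sum_i \lambda_i^2 \log_2 \lambda_i^2$, all in terms of the singular-value spectrum $\{\lambda_i\}$ of the $3\times 3$ amplitude matrix $A_{jk}=\sqrt{\bar{\omega}^3_{jk}}$.

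With those objects in hand, I would form the Lagrangian $\mathcal{L} = \Delta^2 \bar{h}_{2,3}^{\text{in}} + \mathcal{K}_1(\sum_q \bar{\omega}^3_q - 1) + \mathcal{K}_2(\mathcal{E}_{G/S} - G/S)$ and set $\partial_{\bar{\omega}^3_{jk}}\mathcal{L}=0$ for each of the nine indices. Pairing the equations that correspond to degenerate eigenvalues of $\bar{h}_{2,3}$, specifically $(0,1)\leftrightarrow(1,0)$ at $\bar E=2$, $(0,2)\leftrightarrow(2,0)$ at $\bar E=3$, and $(1,2)\leftrightarrow(2,1)$ at $\bar E=5$, should, as in the proofs of Theorems~\ref{t1} and~\ref{t2}, enforce the symmetry $\bar{\omega}^3_{jk}=\bar{\omega}^3_{kj}$. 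I would then try to show, using the structure of the derivatives of $\mathcal{E}_G$ and $\mathcal{E}_S$, that no interior index $(j,k)\notin\{(0,0),(0,2),(2,0),(2,2)\}$ can carry positive weight at the optimum. The underlying intuition is that the variance of $\bar h_{2,3}$ is maximized by concentrating probability at the spectral endpoints $\bar E=0$ and $\bar E=6$; in the regimes $G<1/2$ and $S<1$, the fixed-entanglement constraint can already be satisfied by rebalancing only the four ``corner'' amplitudes, so depositing weight on an intermediate-energy eigenstate incurs a variance loss with no compensating entanglement gain.

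The main obstacle is that, unlike the equally-spaced analysis, the singular values of $A$ are roots of a generic cubic secular equation, so both $\mathcal{E}_G$ and $\mathcal{E}_S$ are highly nonlinear functions of all nine $\bar{\omega}^3_{jk}$ and couple every stationarity condition to every other one; the clean decoupling that led to closed-form arguments in Theorems~\ref{t1} and~\ref{t2} is no longer available. I therefore do not expect a fully analytical reduction to succeed, and would supplement the Lagrangian heuristic by running ISRES on a dense grid of target values $G\in(0,1/2)$ and $S\in(0,1)$. For each target value I would verify that the returned optimum is, within solver tolerance, supported only on the four corner indices, and that random restarts with nonzero interior amplitudes are steered back to corner-supported states without any increase in $\bar{Q}^{G/S}_{2,3}$. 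A consistent positive finding across a sufficiently fine grid then establishes the observation at the same level of rigor as Observation~\ref{Ob}.
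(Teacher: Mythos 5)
Your proposal matches the paper's treatment: the observation is supported there purely by numerical constrained optimization with the ISRES algorithm over the nine-dimensional probability space $\{\bar{\omega}^3_{jk}\}$ at fixed $G$ or $S$, which is exactly the verification you describe, and your supplementary Lagrange-multiplier heuristic (which you correctly anticipate will not close analytically, given the cubic secular equation for the Schmidt coefficients) goes beyond but does not conflict with what the paper actually does. One notational caveat: since $j,k$ run over $0,\dots,d-1$, the excluded index set in the statement should involve $d-1$ rather than $d^2-1$; you correctly read the intent as the four corner states $(0,0),(0,2),(2,0),(2,2)$.
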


It is known from Ref.~\cite{M1} that for $G, S = 0$, the optimal state has the form $\ket{o}^{\otimes 2}$, where $\ket{o} = (\ket{0} + \ket{3})/\sqrt{2}$. Furthermore, the optimal state at $G = {1}/{2}$, $S = 1$ takes the form $(\ket{00} + \ket{33})/\sqrt{2}$. This suggests that the optimal state for $G, S = 0$ and for $G = {1}/{2}, S = 1$ satisfies the condition $\bar{\omega}_0 = \bar{\omega}_3$ and $\bar{\omega}_1 = \bar{\omega}_2$. For all intermediate values $G \in (0, {1}/{2})$ and $S \in (0, 1)$, we apply the Lagrange multiplier method within this range, following the exact steps outlined in Lemma~1 and Lemma~2, with $E_q$ replaced by $\bar{E_q}$. We find that, even in this case, the optimal state must satisfy $\bar{\omega}_0 = \bar{\omega}_3$ and $\bar{\omega}_1 = \bar{\omega}_2$. Combining this with the previously known results for $G, S = 0$ and $G ={1}/{2}, S = 1$, we obtain the optimal QFI at fixed entanglement for this case as
$$
\bar{Q}^{G/S}_{2,3} = 18 \omega_0.
$$
Thus, the relation between the optimal QFI and fixed geometric measure $G \in [0, \frac{1}{2}]$ is given by

\begin{equation}
\bar{Q}^G_{2,3} = 4.5 \left( 1 + \sqrt{1 - (1 - 2G)^2} \right).
\end{equation}

And the relation between optimal QFI at fixed entanglement is given as

\begin{equation*}
    S = -\lambda_{\bar{Q}_S^3} \log_2(\lambda_{\bar{Q}_S^3}) - (1 - \lambda_{\bar{Q}_S^3}) \log_2(1 - \lambda_{\bar{Q}_S^3}).
\end{equation*}
Where,
\begin{equation*}
    \lambda_{\bar{Q}_S^3}=\frac{1-\sqrt{1-(1-\frac{\bar{Q}_S^3}{4.5})^2}}{2}.
\end{equation*}
$\bar{Q}^{G/S}_{2,3}$ have exactly same functional form as ${Q}^{G/S}_{2,3}$. In fact $\bar{Q}^{G/S}_{2,3}=0.56{Q}^{G/S}_{2,3}$. This shows that although the entanglement constrained optimal QFI for $d=3$, considering local Hamiltonian with unequal energy spacing is not exactly same as of  $d=2$ case. However the characteristic  feature  of rapid growth of precision in the low entanglement region and slow growth in the region of high entanglement can still persist.

This complete our analysis considering the bipartite system probe, with subsystem dimension 2 or higher. We present the exact relation between the input fixed entanglement and the best optimal precision within the SQL and HL. 

In the following section, we extend our analysis to multipartite system probes and investigate how the entanglement-constrained optimal QFI varies under the practical limitation of fixed input entanglement.

\section{QFI vs entanglement for multipartite probes}
\label{multi}

Consider $N$ partite probe each with subsystem dimension $d=2$. The encoding Hamiltonian for such a  system can be written as
\begin{equation*}
\begin{split}
  H_{N,2} &= (h_{N,2})\theta' \\
    &= \sum_{k=1}^NZ_2^k.
    \end{split}
    \end{equation*}
Here $Z_2^k=\sigma_z^k$ and we specifically consider cases $N=3,4,5$. As in previous sections our aim is to find the optimal QFI when the input states are restricted to have a fixed amount of entanglement. 

If the eigenvalues of the generator $h_{N,2}$ are given as $E_p$ with corresponding eigenvectors $\ket{p}$, then the set of initial states that have a fixed variance in $h_{N,2}$ consists of states of the form
\begin{equation}
\begin{split}
    \ket{\psi_{N,2}^{\text{in}}} &= \sum_{p}^{2^N-1} \sqrt{\omega_{p}^2} \ket{p}. \\
    \label{GMBFst}
\end{split}
\end{equation}
Here, the superscript $2$ in $0 \leq \omega_{p}^2 \leq 1$ indicates that $d = 2$. Note that $\sum_p \omega_{p}^2 = 1$. The QFI corresponding to these states is given as
$$
Q(\ket{\psi_{N,2}^{\text{in}}}) = 4 \Delta^2 h_{N,2} ^{\text{in}},
$$

 If one consider initial states with fixed entanglement $\mathcal{E}$. The maximum QFI optimized over all such choices of input states is given as

 \begin{equation*}
     Q^\mathcal{E}_{N,2} = \max_{\ket{\psi^{\text{in}}_{N,2}}\chi^\mathcal{E}_{N,2}}4 \Delta^2 h_{N,2}^{\text{in}}.
 \end{equation*}

Where $\chi^\mathcal{E}_{N,2}$ is the set of all states having fixed entanglement $\mathcal{E}$.
We consider two types of entanglement measures for the multipartite scenario, first being GGM and second measure is GM. The analysis considering GGM as the entanglement measure is presented in the section below.

\subsection{When Entanglement is Quantified in Terms of GGM}
\begin{figure}
\includegraphics[scale=0.45]{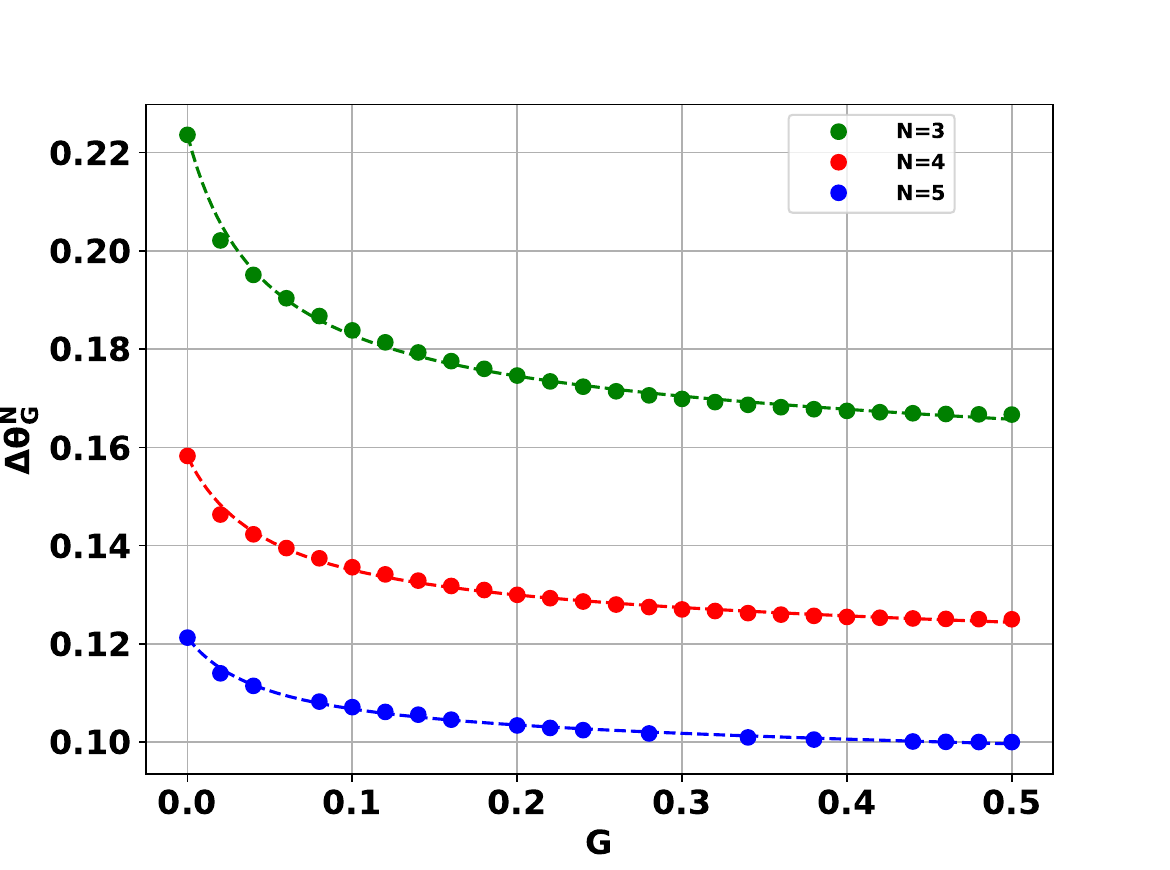}
\caption{\textbf{Minimum standard deviation vs fixed GGM for varying number of parties in the probe }. The plot shows the variation of the entanglement-constrained minimum standard deviation, $\Delta\theta_G^N$, in the estimation of the parameter $\theta$, as a function of fixed initial entanglement $G \in [0, {1}/{2}]$, using multipartite probes with local subsystem dimension $d = 2$. The scattered points in various colors, green for $N = 3$, red for $N = 4$, and blue for $N = 5$, represent the numerically obtained values of $\Delta\theta_G^N$ for fixed entanglement $G$. The corresponding dashed lines denote the best-fit curves through these points, with functional form given by $f_N(G) = \left( \frac{a_N G^2 + b_N G + c_N}{G + d_N} \right)^{-\frac{1}{2}}$, where $a_N$, $b_N$, $c_N$, and $d_N$ are fitting parameters that depend on $N$. The values of the fitting parameter corresponding to  $N=3,4,5$, is given in Table.~\ref{tab2} For all three values of $N$, it can be observed from the plot that the standard deviation decreases sharply in the low entanglement regime and approaches saturation in the high entanglement regime. The vertical axis is dimensionless, and the horizontal axis is expressed in units of e-bits.}
\label{party}
\end{figure}
GGM, $G$ for a multipartite probe state composed of qubit can vary in the range $G\in[0,{1}/{2}]$. The  entanglement constrained QFI for a fixed GGM, $G$, is given as

\begin{equation*}
    Q^G_{N,2} = \max_{\ket{\psi^{\text{in}}_{N,2}}\chi^G_{N,2}}4 \Delta^2 h_{N,2}^{\text{in}}.
\end{equation*}

Considering all possible bipartitions of $\ket{\psi^{\text{in}}_{N,2}}$, we determine the maximum Schmidt coefficient across all such bipartitions, which we use to compute the GGM of the initial state. If this maximum Schmidt coefficient is denoted by $\lambda_{M}$, then the constraint for the above optimization is given by:
\begin{equation*}
    1-\lambda_M^2=G.
\end{equation*}

 We again use ISRES algorithm to perform the above constrained optimization, over the space of $\{\omega_p\}$ considering $N=3,4,5$, with fixed $G\in[0,{1}/{2}]$. In Fig.~\ref{party}, we plot the the standard deviation, $\Delta\theta^N_G$ in the estimation of $\theta$, given as
$$\Delta\theta^N_G=(Q^G_{N,2})^{-\frac{1}{2}}$$

In Fig.~\ref{party}, the scattered points in various colors, green for $N=3$, red for $N=4$, and blue for $N=5$, represent the actual values obtained from numerical optimization, while the corresponding dashed lines represent the curves that best fit the data points. The equation of these curves is given by the inverse square root of a rational function, and is expressed as
\begin{equation*}
f_N(G) = \Bigg(\frac{a_{N} G^2 + b_{N} G + c_{N}}{G + d_{N}}\Bigg)^{-\frac{1}{2}},
\end{equation*}
where $a_N$, $b_N$, $c_N$, and $d_N$ are the fitting parameters that depend on $N$. The values of these parameters are presented in the table below.

\begin{table}[h]
    \centering
   \resizebox{5cm}{1.5cm}{ \begin{tabular}{|c|c|c|c|c|}
        \hline
        & $N=3$ & $N=4$ & $N=5$ \\
        \hline
        $a_N$ & 4.51  & 7.04 & 11.54 \\
        \hline
        $b_N$ & 36.48 & 64.58 &  99.06\\
        \hline
        $c_N$ & 1.4 & 2.8 & 4.08  \\
        \hline
        $d_N$ & 0.07 & 0.07 & 0.06  \\
        \hline
    \end{tabular}}
     \caption{Values of the fitting parameters $a_N$, $b_N$, $c_N$, and $d_N$ corresponding to the curve $
f_N(G) = \left( \frac{a_N G^2 + b_N G + c_N}{G + d_N} \right)^{-\frac{1}{2}}$ fitted to the $\Delta \theta_G$ vs. $G$ plot shown in Fig.~\ref{party}. The parameters are provided for the number of parties $N = 3, 4, 5$.
}
  \label{tab2}
\end{table}

The main observation from the plot is that the characteristic behavior of the entanglement-constrained optimal standard deviation in $\theta$, and consequently the precision in estimation, as a function of the fixed entanglement value $G$, remains consistent with the bipartite case ($N = 2$). Specifically, the precision exhibits a rapid increase in the low-entanglement region, followed by a slower, near-saturation growth in the high-entanglement regime. This indicates that, even in multipartite scenario, achieving a precision very close to the ultimate limit set by the HL does not require highly entangled states.

It is worthy to note that the value of $\Delta\theta^N_G$ at $G=0$ an $G={1}/{2}$, exactly matches to that of the bound derived in previous studies~\cite{M1,Toth}. 

In the next subsection we consider yet another measure of entanglement call the geometric measure of entanglement and analyze how the entanglement constrained optimal precision varies with fixed value of GM. 
\subsection{When Entanglement is Quantified in Terms of GM}

Recall from the definition of GM presented in Sec.~\ref{EM}: calculating the GM of any pure state $\ket{\psi}$ requires performing an optimization over the set of fully separable product states, $\ket{\phi}$. Therefore, in order to optimize the QFI corresponding to an input state $\ket{\psi_{N,2}^{\text{in}}}$ with a fixed value of GM, two optimizations are necessary: one over the set of pure product states for a fixed input state, and another over the set of all such input states to maximize the QFI. This can be expressed mathematically as:
\begin{equation*}
\mathcal{GM} = 1 - \max_{\ket{\phi} \in \chi^N_{\text{sep}}} \abs{\braket{\phi | \psi_{N,2}^{\text{in}}}}^2,
\end{equation*}
where $\chi^N_{\text{sep}}$ is the set of all fully separable pure product states in the Hilbert space $(\mathbb{C}^2)^{\otimes N}$. Thus, the optimal QFI maximized over all choices of input states with fixed GM, denoted $\mathcal{GM}$, is given by
\begin{equation*}
Q^\mathcal{GM}_{N,2} = \max_{\ket{\psi^{\text{in}}_{N,2}} \in \chi^\mathcal{GM}_{N,2}} 4 \Delta^2 h_{N,2}^{\text{in}},
\end{equation*}
where $\chi^\mathcal{GM}_{N,2}$ is the set of all $N$-partite input probe states with a fixed value of geometric entanglement $\mathcal{GM}$.

Carrying out these two nested optimizations first to compute GM, then to find the optimal QFI at fixed GM is numerically demanding and can significantly reduce the efficiency of the optimization algorithm. To circumvent this issue, we adopt a numerical approach known as the brute-force or heuristic method.

To reduce computational complexity further, in this section we restrict ourselves to the simplest case of a multipartite probe, consisting of $N=3$ qubits. The method proceeds in the following steps:

\begin{itemize}
\item First, we generate $\nu$ pure states of the form given in Eq.~\eqref{GMBFst}, where the coefficients $\omega_p^2$, for $p = 0, 1, \ldots, 7$, are drawn from a uniform distribution over the interval $[0,1]$.

\item Using the ISRES algorithm from the NLopt library, we compute the GM corresponding to these $\nu$ randomly generated states.

\item We partition the full GM range, $\mathcal{GM} \in [0,{1}/{2}]$, into bins of width 0.05. We assign each state to a bin indexed by $k$, such that the GM of the state lies within $[0.05k, 0.05k + 0.05]$, for $k = 0, 1, \ldots, 9$. Note that a given state may fall into at most two adjacent bins. We denote the set of all states in bin $k$ as $\chi_k$.

\item For each state $\ket{\psi_{3,2}^\text{in}}$ in $\chi_k$, we compute the QFI, $Q(\ket{\psi_{3,2}^\text{in}})$, and define the maximal QFI over the bin as
$$
Q_{\text{Max}} = Q^{\mathcal{GM}}_{3,2}=\max_{\ket{\psi_{3,2}^\text{in}} \in \chi_k} Q(\ket{\psi_{3,2}^\text{in}}).
$$

\item This gives the maximal QFI achievable among randomly generated states with GM in the interval $[0.05k, 0.05k + 0.05]$.

\item We repeat the entire process for another set of $\nu$ random states and check the convergence of $Q_{\text{Max}}$ within each bin to ensure stability and robustness of the results.
\end{itemize}

In Fig.~\ref{GMparty}, we present the variation of the optimal precision in parameter estimation, quantified through the maximum standard deviation $\Delta\theta_{\mathcal{GM}} = Q_{\text{Max}}^{-1/2}$, as a function of $k$, for three different values of $\nu$. The curves corresponding to $\nu = 10^5$, $10^6$, and $10^7$ are distinguished using red, green, and blue colors, respectively.
As $\nu$ increases, the values of $\Delta\theta_{\mathcal{GM}}$ for each $k$ converge, indicating statistical reliability of the maximization procedure, $k \in [0, 6]$. Beyond $k = 6$, numerical convergence of the QFI could not be reliably achieved.

The figure reveals a consistent pattern as the degree of entanglement increases, the maximum achievable precision improves, but the rate of this improvement is not uniform. In the low-entanglement regime(low $k$ values, the standard deviation $\Delta\theta_{\mathcal{GM}}$ decreases sharply with increasing entanglement, indicating a rapid gain in precision. However, this improvement becomes more gradual in the higher-entanglement region (high $k$ values).

This behavior closely mirrors the trends observed in earlier sections for entanglement quantified by other measures, such as the generalized geometric measure (GGM). Together, these findings reinforce a key insight: while entanglement is indeed a resource for quantum metrology, the enhancement in precision it offers is most significant when transitioning out of the low-entanglement regime. In contrast, additional gains from highly entangled states are relatively modest. 

\begin{figure}
\hspace{-1cm}
\includegraphics[scale=0.6]{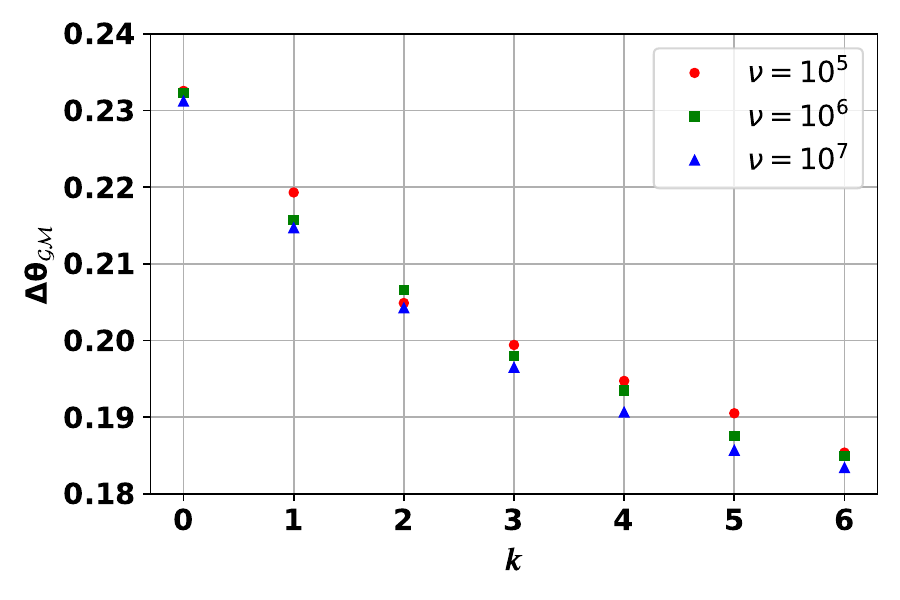}
\caption{\textbf{Minimum standard deviation vs fixed GM for varying number of parties in the probe}. The Plot illustrates the variation of the minimum standard deviation, $\Delta\theta_{\mathcal{GM}} = Q_{\text{Max}}^{-1/2}$, in the estimation of the parameter $\theta$, where the minimization is performed over randomly generated states with GM  within the interval $\mathcal{GM} \in [0.05k,, 0.05k + 0.05]$ for $k \in [0,6]$. The results are shown for three different realizations of the total number of randomly generates state samples $\nu$, with $\nu = 10^5$ (red), $\nu = 10^6$ (green), and $\nu = 10^7$ (blue). As $\nu$ increases, the values of $\Delta\theta_{\mathcal{GM}}$ for each $k$ converge, indicating statistical reliability of the minimization procedure. The plot further reveals a steep decline in $\Delta\theta_{\mathcal{GM}}$ for small $k$ (low entanglement), followed by a more gradual decrease for larger $k$ (higher entanglement), suggesting small precision gain as entanglement increases. The vertical axis is dimensionless, and the horizontal axis is expressed in units of e-bits. } 
\label{GMparty}
\end{figure}

\section{Conclusion}
\label{Con}
Entanglement is a key resource for enhancing the precision of parameter estimation under unitary encoding, enabling performance beyond the standard quantum limit, which represents the best precision achievable with product states. The ultimate optimal precision is achieved by a special class of entangled states, and is known as the Heisenberg Limit.

However, in realistic scenarios, the amount of entanglement that can be utilized is often limited. This constraint reflects practical conditions where, due to experimental limitations, one may not have access to ideal entangled states that achieve the HL. Under such circumstances, it becomes essential to determine the best possible precision that can be achieved given a fixed amount of entanglement. Until now, a direct connection between the available entanglement and the best achievable precision had not been established. In this study, we addressed this gap by directly linking achievable precision to the amount of entanglement at hand. By systematically optimizing the quantum Fisher information under entanglement constraints, we established the ultimate precision bound maximized over input probe states with a fixed amount of entanglement.

Focusing first on bipartite probe states composed of two qubits, we derived an exact relation between the entanglement-constrained optimal QFI and the corresponding fixed initial entanglement. Entanglement was quantified using both the  geometric measure and the entanglement entropy. We demonstrated that this relation also hold for bipartite states composed of two qudits, across the same range of entanglement measures. Furthermore, we identified the specific probe states that guaranteed the highest precision in both the two-qubit and two-qudit cases. We subsequently extended our analysis to multipartite probe states composed of qubits. In this setting, we used  generalized geometric measure and the geometric measure of entanglement to quantify the entanglement. Using numerical optimization methods in this case, we examined how the optimal precision varied with the level of entanglement. 


In both the bipartite and multipartite cases, we observed a consistent pattern in the relationship between optimal precision and fixed entanglement values, within the range defined by the SQL and the HL. This relationship revealed that precision increased rapidly in the low-entanglement regime but grew more slowly and nearly saturated as the entanglement approached its ideal HL value. This indicated that even without direct access to the ideal states that achieve the HL, preparing states with entanglement values close to the ideal was sufficient to attain near-optimal precision—namely, the Heisenberg limit.


\appendix
\section{Optimal QFI for Fixed GGM: The Case of Two Qubits}
\label{A1}
Here, we present the set of six equations obtained by taking the partial derivatives of the Lagrangian, given in Eq.~\eqref{Lag1}, with respect to the parameters $\omega_p$, with $p=0,1,2,3$ and the Lagrange multipliers $\mathcal{K}_1$ and $\mathcal{K}_2$.
\begin{align}
    \frac{\partial\mathcal{L}}{\partial\omega_0} &= (E_0)^2 - 2D E_0 + \mathcal{K}_1 - \mathcal{K}_2 \Lambda \sqrt{\frac{\omega_3}{\omega_0}}, \label{eq6a} \\
    \frac{\partial\mathcal{L}}{\partial\omega_3} &= (E_3)^2 - 2D E_3 + \mathcal{K}_1 - \mathcal{K}_2 \Lambda \sqrt{\frac{\omega_0}{\omega_3}}, \label{eq6b} \\
    \frac{\partial\mathcal{L}}{\partial\omega_1} &= (E_1)^2 - 2D E_1 + \mathcal{K}_1 + \mathcal{K}_2 \Lambda \sqrt{\frac{\omega_2}{\omega_1}}, \label{eq6c} \\
    \frac{\partial\mathcal{L}}{\partial\omega_2} &= (E_2)^2 - 2D E_2 + \mathcal{K}_1 + \mathcal{K}_2 \Lambda \sqrt{\frac{\omega_1}{\omega_2}}, \label{eq6d} \\
    \frac{\partial\mathcal{L}}{\partial\mathcal{K}_1} &= \sum_p \omega_p - 1, \label{eq6e} \\
    \frac{\partial\mathcal{L}}{\partial\mathcal{K}_2} &= \mathcal{E}_G(\ket{\psi^{\text{in}}_{2,2}}) - G. \label{eq6f}
\end{align}
The definitions of $D$ and $\Lambda$ are
\begin{align}
D &= \sum_p \omega_p E_p,\nonumber \\ 
\Lambda &= \frac{\left( \sqrt{\omega_1 \omega_2} - \sqrt{\omega_0 \omega_3} \right)}{\sqrt{1 - 4 \left( \sqrt{\omega_1 \omega_2} - \sqrt{\omega_0 \omega_3} \right)^2}}.
\label{lA}
\end{align}
\section{Optimal QFI for Fixed Entanglement Entropy: The Case of Two Qubits}
In this section we present the set of four equations obtained by taking the partial derivatives of the Lagrangian, given in Eq.~\eqref{Lag2}, with respect to the parameters $\omega_p$, with $p=0,1,2,3$.

\begin{align}
    \frac{\partial\mathcal{L}}{\partial\omega_0} = (E_0)^2 - 2D E_0 + \mathcal{K}_3 - \mathcal{K}_4 \mathcal{M} \sqrt{\frac{\omega_3}{\omega_0}} = 0, \label{eq7a}\\
    \frac{\partial\mathcal{L}}{\partial\omega_3} = (E_3)^2 - 2D E_3 + \mathcal{K}_3 - \mathcal{K}_4 \mathcal{M} \sqrt{\frac{\omega_0}{\omega_3}} = 0, \label{eq7b}\\
    \frac{\partial\mathcal{L}}{\partial\omega_1} = (E_1)^2 - 2D E_1 + \mathcal{K}_3 + \mathcal{K}_4 \mathcal{M} \sqrt{\frac{\omega_2}{\omega_1}} = 0, \label{eq7c}\\
    \frac{\partial\mathcal{L}}{\partial\omega_2} = (E_2)^2 - 2D E_2 + \mathcal{K}_3 + \mathcal{K}_4 \mathcal{M} \sqrt{\frac{\omega_1}{\omega_2}} = 0, \label{eq7d}
\end{align}
where 
\begin{equation*}
\mathcal{M}=\frac{\sqrt{\omega_1\omega_2}-\sqrt{\omega_0\omega_3}}{\sqrt{1-4(\sqrt{\omega_1\omega_2}-\sqrt{\omega_0\omega_3})^2}}\log_2\frac{(1-p_s)}{p_s}.
\label{Ma}
\end{equation*}
\section{Proof That No Extrema Occur on the Boundary}
\label{Bound}
 In Theorem~\ref{t1} and Theorem~\ref{t2}, we employed the Lagrange multiplier method to find the maximum QFI for fixed input entanglement, corresponding to the interior points where $\omega_p > 0$ for all $p = 0,1,2,3$. In this section, we show that the maximum indeed lies in the interior, and that no extrema occur at the boundary points where at least one of the $\omega_p$ vanishes.

To begin with, we first pinpoint the fact that for $G\in[0,1/2]$ and $S\in[0,1]$. The relation between $S$ and $G$ is given as
$$
S = -G \log_2 G - (1 - G) \log_2 (1 - G).
$$

We shall use this relation in the analysis below. We also emphasize that $\omega_p$ cannot be zero for more than two values of $p$, as this would result in a product input state. For which it is already known that the optimal QFI is obtained corresponding  to the case where $\omega_p = 1/2$ for all $p$. 

Thus, the only possible boundary states with zero entanglement are the following:

\textbf{Case 1:} $\omega_p = 0$ for a single value of $p$;

\textbf{Case 2:} $\omega_p = 0$ for any two values of $p$.

\subsection{$\omega_p = 0$ for a Single Value of $p$}
We first examine Case 1, which can be further divided into two subcases:  
1) when $\omega_0$ or $\omega_3$ equals zero;  
2) when $\omega_1$ or $\omega_2$ equals zero.

\textit{When $\omega_0$ or $\omega_3 = 0$:}  
If $\omega_0$ or $\omega_3$ is equal to zero, then the maximum QFI corresponding to such boundary states with fixed $G/S$ can again be obtained using the Lagrange multiplier method. Note that in this case, the derivatives must be performed with respect to the remaining non-zero parameters. 

Specifically, when $\omega_0 = 0$, one needs to solve Eqs.~\eqref{eq6b}–\eqref{eq6d} for fixed $G$, or Eqs.~\eqref{eq7b}–\eqref{eq7d} for fixed $S$. On the other hand when $\omega_3 = 0$ one should solve Eqs.~\eqref{eq6a},\eqref{eq6c} and \eqref{eq6d} for fixed $G$ and for fixed $S$ one must solve Eqs.~\eqref{eq7a},\eqref{eq7c} and \eqref{eq7d}. Solving these equations, along with the equations we find $\omega_1=\omega_2$ for $\omega_3=0$. And two solutions for $\omega_0=0$ which are either $\omega_3=1/2$ or $\omega_1=\omega_2$.

Note for states with $\omega_3=1/2$ and $\omega_0=0$, yields $Q(\psi_{2,2}^\text{f})=4$. Clearly for  fixed $G$, this value is smaller than the optimal QFI obtained at the interior for the same $G$. 
Next for the boundary states  with $\omega_1=\omega_2$, the  QFI using Eqs.~\eqref{QFI} and (~\eqref{GGM}-$G=0$) and $\sum_p \omega_p=1$ is given as
$$
Q^{G/S}_b = 16\left( \sqrt{1 - (1 - 2G)^2} - (1 - (1 - 2G)^2) \right).
$$
Here, the subscript $b$ denotes the boundary. 
It is easy to verify that for any given $G \in (0, 1/2)$ or $S \in (0, 1)$,
$$
Q^{G/S}_b < Q^{G/S}.
$$
Thus, in this case, the maximum QFI is achieved by states lying in the interior of the $\{\omega_p\}$ space.

\textit{When $\omega_1$ or $\omega_2 = 0$:} 
For the scenario where $\omega_1 = 0$, one needs to solve Eqs.~\eqref{eq6a}, \eqref{eq6b}, and \eqref{eq6d} for fixed $G$, or Eqs.~\eqref{eq7a}, \eqref{eq7b}, and \eqref{eq7d} for fixed $S$. On the other hand, when $\omega_2 = 0$, one should solve Eqs.~\eqref{eq6a}–\eqref{eq6c} for fixed $G$, and Eqs.~\eqref{eq7a}–\eqref{eq7c} for fixed $S$.

Solving these equations along with the normalization constraint, we find two possible solutions for fixed $G/S$:  
1) $\omega_3 = \frac{1}{2} - \omega_0$;  
2) $\omega_3 = \omega_0$.

For the first case, the QFI using Eq.~\eqref{QFI} is given by  
$$
Q^{G/S}_b = 8 - 4(1 - 4\omega_0)^2,
$$  
which is smaller than $Q^{G/S}$.  
For the second case, where $\omega_3 = \omega_0$, using Eq.~\eqref{QFI} and(~\eqref{GGM}-$G=0$) we find the QFI to be  
$$
Q^{G/S}_b = 16\sqrt{1 - (1 - 2G)^2}.
$$  
In this case as well, one can verify that for $G \in (0, 1/2)$, we have $Q^{G/S}_b < Q^{G/S}$.  
Thus, we conclude that the optimum lies in the interior of the $\omega_p$ space.

 \subsection{$\omega_p = 0$ for Any Two Values of $p$}
This can be further divided into two sub-cases:  
1) when $\omega_0 = 0$ and $\omega_3 = 0$;  
2) when $\omega_1 = \omega_2 = 0$.  

For all other choices of pairs, it can be seen that the resulting state is simply a product state, for which the optimum is already known~\cite{M1}.

\textit{When $\omega_0$ and $\omega_3$ are zero:}  
For such a case, the QFI computed using Eq.~\eqref{QFI} is always zero.

\textit{When $\omega_1$ and $\omega_2$ are zero:}  
In this case, without loss of generality, we may assume that $\omega_0 < \omega_3$. This leads to $G = \omega_0$. The QFI corresponding to such a state using Eq.~\eqref{QFI} for fixed $G/S$ is given by 
$$
Q^{G/S}_b = 16\left(1 - (1 - 2G)^2\right).
$$  
Again, it is straightforward to verify that $Q^{G/S}_b < Q^{G/S}$ for $G \in (0, 1/2)$.
This completes our analysis of the optimal QFI for fixed entanglement at the boundary of the probability space. Our results thus suggest that, for the two-qubit case, the optimal QFI always corresponds to a state lying in the interior of the probability space.

\bibliography{metref}

\begin{thebibliography}{84}%
\makeatletter
\providecommand \@ifxundefined [1]{%
 \@ifx{#1\undefined}
}%
\providecommand \@ifnum [1]{%
 \ifnum #1\expandafter \@firstoftwo
 \else \expandafter \@secondoftwo
 \fi
}%
\providecommand \@ifx [1]{%
 \ifx #1\expandafter \@firstoftwo
 \else \expandafter \@secondoftwo
 \fi
}%
\providecommand \natexlab [1]{#1}%
\providecommand \enquote  [1]{``#1''}%
\providecommand \bibnamefont  [1]{#1}%
\providecommand \bibfnamefont [1]{#1}%
\providecommand \citenamefont [1]{#1}%
\providecommand \href@noop [0]{\@secondoftwo}%
\providecommand \href [0]{\begingroup \@sanitize@url \@href}%
\providecommand \@href[1]{\@@startlink{#1}\@@href}%
\providecommand \@@href[1]{\endgroup#1\@@endlink}%
\providecommand \@sanitize@url [0]{\catcode `\\12\catcode `\$12\catcode `\&12\catcode `\#12\catcode `\^12\catcode `\_12\catcode `\%12\relax}%
\providecommand \@@startlink[1]{}%
\providecommand \@@endlink[0]{}%
\providecommand \url  [0]{\begingroup\@sanitize@url \@url }%
\providecommand \@url [1]{\endgroup\@href {#1}{\urlprefix }}%
\providecommand \urlprefix  [0]{URL }%
\providecommand \Eprint [0]{\href }%
\providecommand \doibase [0]{http://dx.doi.org/}%
\providecommand \selectlanguage [0]{\@gobble}%
\providecommand \bibinfo  [0]{\@secondoftwo}%
\providecommand \bibfield  [0]{\@secondoftwo}%
\providecommand \translation [1]{[#1]}%
\providecommand \BibitemOpen [0]{}%
\providecommand \bibitemStop [0]{}%
\providecommand \bibitemNoStop [0]{.\EOS\space}%
\providecommand \EOS [0]{\spacefactor3000\relax}%
\providecommand \BibitemShut  [1]{\csname bibitem#1\endcsname}%
\let\auto@bib@innerbib\@empty
\bibitem [{\citenamefont {Caves}(1981)}]{met1}%
  \BibitemOpen
  \bibfield  {author} {\bibinfo {author} {\bibfnamefont {C.~M.}\ \bibnamefont {Caves}},\ }\bibfield  {title} {\enquote {\bibinfo {title} {Quantum-mechanical noise in an interferometer},}\ }\href {https://link.aps.org/doi/10.1103/PhysRevD.23.1693} {\bibfield  {journal} {\bibinfo  {journal} {Phys. Rev. D}\ }\textbf {\bibinfo {volume} {23}},\ \bibinfo {pages} {1693} (\bibinfo {year} {1981})}\BibitemShut {NoStop}%
\bibitem [{\citenamefont {Braunstein}\ and\ \citenamefont {Caves}(1994{\natexlab{a}})}]{met2}%
  \BibitemOpen
  \bibfield  {author} {\bibinfo {author} {\bibfnamefont {S.~L.}\ \bibnamefont {Braunstein}}\ and\ \bibinfo {author} {\bibfnamefont {C.~M.}\ \bibnamefont {Caves}},\ }\bibfield  {title} {\enquote {\bibinfo {title} {Statistical distance and the geometry of quantum states},}\ }\href {https://link.aps.org/doi/10.1103/PhysRevLett.72.3439} {\bibfield  {journal} {\bibinfo  {journal} {Phys. Rev. Lett.}\ }\textbf {\bibinfo {volume} {72}},\ \bibinfo {pages} {3439} (\bibinfo {year} {1994}{\natexlab{a}})}\BibitemShut {NoStop}%
\bibitem [{\citenamefont {Mitchell}\ \emph {et~al.}(2004)\citenamefont {Mitchell}, \citenamefont {Lundeen},\ and\ \citenamefont {Steinberg}}]{intf1}%
  \BibitemOpen
  \bibfield  {author} {\bibinfo {author} {\bibfnamefont {M.~W.}\ \bibnamefont {Mitchell}}, \bibinfo {author} {\bibfnamefont {J.~S.}\ \bibnamefont {Lundeen}}, \ and\ \bibinfo {author} {\bibfnamefont {A.~M.}\ \bibnamefont {Steinberg}},\ }\bibfield  {title} {\enquote {\bibinfo {title} {Super-resolving phase measurements with a multiphoton entangled state},}\ }\href {https://www.nature.com/articles/nature02493} {\bibfield  {journal} {\bibinfo  {journal} {Nat.}\ }\textbf {\bibinfo {volume} {429}},\ \bibinfo {pages} {161} (\bibinfo {year} {2004})}\BibitemShut {NoStop}%
\bibitem [{\citenamefont {Giovannetti}\ \emph {et~al.}(2006)\citenamefont {Giovannetti}, \citenamefont {Lloyd},\ and\ \citenamefont {Maccone}}]{M1}%
  \BibitemOpen
  \bibfield  {author} {\bibinfo {author} {\bibfnamefont {V.}~\bibnamefont {Giovannetti}}, \bibinfo {author} {\bibfnamefont {S.}~\bibnamefont {Lloyd}}, \ and\ \bibinfo {author} {\bibfnamefont {L.}~\bibnamefont {Maccone}},\ }\bibfield  {title} {\enquote {\bibinfo {title} {Quantum metrology},}\ }\href {https://link.aps.org/doi/10.1103/PhysRevLett.96.010401} {\bibfield  {journal} {\bibinfo  {journal} {Phys. Rev. Lett.}\ }\textbf {\bibinfo {volume} {96}},\ \bibinfo {pages} {010401} (\bibinfo {year} {2006})}\BibitemShut {NoStop}%
\bibitem [{\citenamefont {Holland}\ and\ \citenamefont {Burnett}(1993)}]{Enten}%
  \BibitemOpen
  \bibfield  {author} {\bibinfo {author} {\bibfnamefont {M.~J.}\ \bibnamefont {Holland}}\ and\ \bibinfo {author} {\bibfnamefont {K.}~\bibnamefont {Burnett}},\ }\bibfield  {title} {\enquote {\bibinfo {title} {Interferometric detection of optical phase shifts at the heisenberg limit},}\ }\href {https://link.aps.org/doi/10.1103/PhysRevLett.71.1355} {\bibfield  {journal} {\bibinfo  {journal} {Phys. Rev. Lett.}\ }\textbf {\bibinfo {volume} {71}},\ \bibinfo {pages} {1355} (\bibinfo {year} {1993})}\BibitemShut {NoStop}%
\bibitem [{\citenamefont {Lee}\ \emph {et~al.}(2002)\citenamefont {Lee}, \citenamefont {Kok},\ and\ \citenamefont {Dowling}}]{Enten2}%
  \BibitemOpen
  \bibfield  {author} {\bibinfo {author} {\bibfnamefont {H.}~\bibnamefont {Lee}}, \bibinfo {author} {\bibfnamefont {P.}~\bibnamefont {Kok}}, \ and\ \bibinfo {author} {\bibfnamefont {J.~P.}\ \bibnamefont {Dowling}},\ }\bibfield  {title} {\enquote {\bibinfo {title} {A quantum rosetta stone for interferometry},}\ }\href {https://www.tandfonline.com/doi/abs/10.1080/0950034021000011536} {\bibfield  {journal} {\bibinfo  {journal} {J. Mod. Opt}\ }\textbf {\bibinfo {volume} {49}},\ \bibinfo {pages} {2325} (\bibinfo {year} {2002})}\BibitemShut {NoStop}%
\bibitem [{\citenamefont {Eisenberg}\ \emph {et~al.}(2005)\citenamefont {Eisenberg}, \citenamefont {Hodelin}, \citenamefont {Khoury},\ and\ \citenamefont {Bouwmeester}}]{intf2}%
  \BibitemOpen
  \bibfield  {author} {\bibinfo {author} {\bibfnamefont {H.~S.}\ \bibnamefont {Eisenberg}}, \bibinfo {author} {\bibfnamefont {J.~F.}\ \bibnamefont {Hodelin}}, \bibinfo {author} {\bibfnamefont {G.}~\bibnamefont {Khoury}}, \ and\ \bibinfo {author} {\bibfnamefont {D.}~\bibnamefont {Bouwmeester}},\ }\bibfield  {title} {\enquote {\bibinfo {title} {Multiphoton path entanglement by nonlocal bunching},}\ }\href {https://link.aps.org/doi/10.1103/PhysRevLett.94.090502} {\bibfield  {journal} {\bibinfo  {journal} {Phys. Rev. Lett.}\ }\textbf {\bibinfo {volume} {94}},\ \bibinfo {pages} {090502} (\bibinfo {year} {2005})}\BibitemShut {NoStop}%
\bibitem [{\citenamefont {Giovannetti}\ \emph {et~al.}(2004)\citenamefont {Giovannetti}, \citenamefont {Lloyd},\ and\ \citenamefont {Maccone}}]{intf3}%
  \BibitemOpen
  \bibfield  {author} {\bibinfo {author} {\bibfnamefont {V.}~\bibnamefont {Giovannetti}}, \bibinfo {author} {\bibfnamefont {S.}~\bibnamefont {Lloyd}}, \ and\ \bibinfo {author} {\bibfnamefont {L.}~\bibnamefont {Maccone}},\ }\bibfield  {title} {\enquote {\bibinfo {title} {Quantum-enhanced measurements: beating the standard quantum limit},}\ }\href {https://doi.org/10.1126/science.1104149} {\bibfield  {journal} {\bibinfo  {journal} {Science}\ }\textbf {\bibinfo {volume} {306}},\ \bibinfo {pages} {1330} (\bibinfo {year} {2004})}\BibitemShut {NoStop}%
\bibitem [{\citenamefont {Colombo}\ \emph {et~al.}(2022)\citenamefont {Colombo}, \citenamefont {Pedrozo-Pe{\~n}afiel}, \citenamefont {Adiyatullin}, \citenamefont {Li}, \citenamefont {Mendez}, \citenamefont {Shu},\ and\ \citenamefont {Vuleti{\'c}}}]{Exp2}%
  \BibitemOpen
  \bibfield  {author} {\bibinfo {author} {\bibfnamefont {S.}~\bibnamefont {Colombo}}, \bibinfo {author} {\bibfnamefont {E.}~\bibnamefont {Pedrozo-Pe{\~n}afiel}}, \bibinfo {author} {\bibfnamefont {A.~F.}\ \bibnamefont {Adiyatullin}}, \bibinfo {author} {\bibfnamefont {Z.}~\bibnamefont {Li}}, \bibinfo {author} {\bibfnamefont {E.}~\bibnamefont {Mendez}}, \bibinfo {author} {\bibfnamefont {C.}~\bibnamefont {Shu}}, \ and\ \bibinfo {author} {\bibfnamefont {V.}~\bibnamefont {Vuleti{\'c}}},\ }\bibfield  {title} {\enquote {\bibinfo {title} {Time-reversal-based quantum metrology with many-body entangled states},}\ }\href {https://doi.org/10.1038/s41567-022-01653-5} {\bibfield  {journal} {\bibinfo  {journal} {Nat. Phys.}\ }\textbf {\bibinfo {volume} {18}},\ \bibinfo {pages} {925} (\bibinfo {year} {2022})}\BibitemShut {NoStop}%
\bibitem [{\citenamefont {Higgins}\ \emph {et~al.}(2009)\citenamefont {Higgins}, \citenamefont {Berry}, \citenamefont {Bartlett}, \citenamefont {Mitchell}, \citenamefont {Wiseman},\ and\ \citenamefont {Pryde}}]{intf5}%
  \BibitemOpen
  \bibfield  {author} {\bibinfo {author} {\bibfnamefont {B.~L.}\ \bibnamefont {Higgins}}, \bibinfo {author} {\bibfnamefont {D.~W.}\ \bibnamefont {Berry}}, \bibinfo {author} {\bibfnamefont {S.~D.}\ \bibnamefont {Bartlett}}, \bibinfo {author} {\bibfnamefont {M.~W.}\ \bibnamefont {Mitchell}}, \bibinfo {author} {\bibfnamefont {H.~M.}\ \bibnamefont {Wiseman}}, \ and\ \bibinfo {author} {\bibfnamefont {G.~J.}\ \bibnamefont {Pryde}},\ }\bibfield  {title} {\enquote {\bibinfo {title} {Demonstrating heisenberg-limited unambiguous phase estimation without adaptive measurements},}\ }\href {https://iopscience.iop.org/article/10.1088/1367-2630/11/7/073023/meta} {\bibfield  {journal} {\bibinfo  {journal} {NJP}\ }\textbf {\bibinfo {volume} {11}},\ \bibinfo {pages} {073023} (\bibinfo {year} {2009})}\BibitemShut {NoStop}%
\bibitem [{\citenamefont {Giovannetti}\ \emph {et~al.}(2011)\citenamefont {Giovannetti}, \citenamefont {Lloyd},\ and\ \citenamefont {Maccone}}]{Rev2}%
  \BibitemOpen
  \bibfield  {author} {\bibinfo {author} {\bibfnamefont {Vittorio}\ \bibnamefont {Giovannetti}}, \bibinfo {author} {\bibfnamefont {Seth}\ \bibnamefont {Lloyd}}, \ and\ \bibinfo {author} {\bibfnamefont {Lorenzo}\ \bibnamefont {Maccone}},\ }\bibfield  {title} {\enquote {\bibinfo {title} {Advances in quantum metrology},}\ }\href {https://www.nature.com/articles/nphoton.2011.35} {\bibfield  {journal} {\bibinfo  {journal} {Nat. Photonics}\ }\textbf {\bibinfo {volume} {5}},\ \bibinfo {pages} {222} (\bibinfo {year} {2011})}\BibitemShut {NoStop}%
\bibitem [{\citenamefont {Huelga}\ \emph {et~al.}(1997{\natexlab{a}})\citenamefont {Huelga}, \citenamefont {Macchiavello}, \citenamefont {Pellizzari}, \citenamefont {Ekert}, \citenamefont {Plenio},\ and\ \citenamefont {Cirac}}]{CLock1}%
  \BibitemOpen
  \bibfield  {author} {\bibinfo {author} {\bibfnamefont {S.~F.}\ \bibnamefont {Huelga}}, \bibinfo {author} {\bibfnamefont {C.}~\bibnamefont {Macchiavello}}, \bibinfo {author} {\bibfnamefont {T.}~\bibnamefont {Pellizzari}}, \bibinfo {author} {\bibfnamefont {A.~K.}\ \bibnamefont {Ekert}}, \bibinfo {author} {\bibfnamefont {M.~B.}\ \bibnamefont {Plenio}}, \ and\ \bibinfo {author} {\bibfnamefont {J.~I.}\ \bibnamefont {Cirac}},\ }\bibfield  {title} {\enquote {\bibinfo {title} {Improvement of frequency standards with quantum entanglement},}\ }\href {\doibase 10.1103/PhysRevLett.79.3865} {\bibfield  {journal} {\bibinfo  {journal} {Phys. Rev. Lett.}\ }\textbf {\bibinfo {volume} {79}},\ \bibinfo {pages} {3865} (\bibinfo {year} {1997}{\natexlab{a}})}\BibitemShut {NoStop}%
\bibitem [{\citenamefont {Andr\'e}\ \emph {et~al.}(2004)\citenamefont {Andr\'e}, \citenamefont {S\o{}rensen},\ and\ \citenamefont {Lukin}}]{Clock2}%
  \BibitemOpen
  \bibfield  {author} {\bibinfo {author} {\bibfnamefont {A.}~\bibnamefont {Andr\'e}}, \bibinfo {author} {\bibfnamefont {A.~S.}\ \bibnamefont {S\o{}rensen}}, \ and\ \bibinfo {author} {\bibfnamefont {M.~D.}\ \bibnamefont {Lukin}},\ }\bibfield  {title} {\enquote {\bibinfo {title} {Stability of atomic clocks based on entangled atoms},}\ }\href {https://link.aps.org/doi/10.1103/PhysRevLett.92.230801} {\bibfield  {journal} {\bibinfo  {journal} {Phys. Rev. Lett.}\ }\textbf {\bibinfo {volume} {92}},\ \bibinfo {pages} {230801} (\bibinfo {year} {2004})}\BibitemShut {NoStop}%
\bibitem [{\citenamefont {Ludlow}\ \emph {et~al.}(2015)\citenamefont {Ludlow}, \citenamefont {Boyd}, \citenamefont {Ye}, \citenamefont {Peik},\ and\ \citenamefont {Schmidt}}]{Cl1}%
  \BibitemOpen
  \bibfield  {author} {\bibinfo {author} {\bibfnamefont {A.~D.}\ \bibnamefont {Ludlow}}, \bibinfo {author} {\bibfnamefont {M.~M.}\ \bibnamefont {Boyd}}, \bibinfo {author} {\bibfnamefont {Jun}\ \bibnamefont {Ye}}, \bibinfo {author} {\bibfnamefont {E.}~\bibnamefont {Peik}}, \ and\ \bibinfo {author} {\bibfnamefont {P.~O.}\ \bibnamefont {Schmidt}},\ }\bibfield  {title} {\enquote {\bibinfo {title} {Optical atomic clocks},}\ }\href {\doibase 10.1103/RevModPhys.87.637} {\bibfield  {journal} {\bibinfo  {journal} {Rev. Mod. Phys.}\ }\textbf {\bibinfo {volume} {87}},\ \bibinfo {pages} {637} (\bibinfo {year} {2015})}\BibitemShut {NoStop}%
\bibitem [{\citenamefont {Auzinsh}\ \emph {et~al.}(2004)\citenamefont {Auzinsh}, \citenamefont {Budker}, \citenamefont {Kimball}, \citenamefont {Rochester}, \citenamefont {Stalnaker}, \citenamefont {Sushkov},\ and\ \citenamefont {Yashchuk}}]{Magn1}%
  \BibitemOpen
  \bibfield  {author} {\bibinfo {author} {\bibfnamefont {M.}~\bibnamefont {Auzinsh}}, \bibinfo {author} {\bibfnamefont {D.}~\bibnamefont {Budker}}, \bibinfo {author} {\bibfnamefont {D.~F.}\ \bibnamefont {Kimball}}, \bibinfo {author} {\bibfnamefont {S.~M.}\ \bibnamefont {Rochester}}, \bibinfo {author} {\bibfnamefont {J.~E.}\ \bibnamefont {Stalnaker}}, \bibinfo {author} {\bibfnamefont {A.~O.}\ \bibnamefont {Sushkov}}, \ and\ \bibinfo {author} {\bibfnamefont {V.~V.}\ \bibnamefont {Yashchuk}},\ }\bibfield  {title} {\enquote {\bibinfo {title} {Can a quantum nondemolition measurement improve the sensitivity of an atomic magnetometer?}}\ }\href {\doibase 10.1103/PhysRevLett.93.173002} {\bibfield  {journal} {\bibinfo  {journal} {Phys. Rev. Lett.}\ }\textbf {\bibinfo {volume} {93}},\ \bibinfo {pages} {173002} (\bibinfo {year} {2004})}\BibitemShut {NoStop}%
\bibitem [{\citenamefont {Petersen}\ \emph {et~al.}(2005)\citenamefont {Petersen}, \citenamefont {Madsen},\ and\ \citenamefont {M\o{}lmer}}]{Magn2}%
  \BibitemOpen
  \bibfield  {author} {\bibinfo {author} {\bibfnamefont {V.}~\bibnamefont {Petersen}}, \bibinfo {author} {\bibfnamefont {L.~B.}\ \bibnamefont {Madsen}}, \ and\ \bibinfo {author} {\bibfnamefont {K.}~\bibnamefont {M\o{}lmer}},\ }\bibfield  {title} {\enquote {\bibinfo {title} {Magnetometry with entangled atomic samples},}\ }\href {\doibase 10.1103/PhysRevA.71.012312} {\bibfield  {journal} {\bibinfo  {journal} {Phys. Rev. A}\ }\textbf {\bibinfo {volume} {71}},\ \bibinfo {pages} {012312} (\bibinfo {year} {2005})}\BibitemShut {NoStop}%
\bibitem [{\citenamefont {Mehboudi}\ \emph {et~al.}(2019)\citenamefont {Mehboudi}, \citenamefont {Sanpera},\ and\ \citenamefont {Correa}}]{Th1}%
  \BibitemOpen
  \bibfield  {author} {\bibinfo {author} {\bibfnamefont {M.}~\bibnamefont {Mehboudi}}, \bibinfo {author} {\bibfnamefont {A.}~\bibnamefont {Sanpera}}, \ and\ \bibinfo {author} {\bibfnamefont {L.~A.}\ \bibnamefont {Correa}},\ }\bibfield  {title} {\enquote {\bibinfo {title} {Thermometry in the quantum regime: recent theoretical progress},}\ }\href {https://doi.org/10.1088/1751-8121/ab2828} {\bibfield  {journal} {\bibinfo  {journal} {J. Phys. A. Theor. Math. Phys.}\ }\textbf {\bibinfo {volume} {52}},\ \bibinfo {pages} {303001} (\bibinfo {year} {2019})}\BibitemShut {NoStop}%
\bibitem [{\citenamefont {Yang}\ \emph {et~al.}(2021)\citenamefont {Yang}, \citenamefont {Chen}, \citenamefont {Li}, \citenamefont {Peng},\ and\ \citenamefont {Laflamme}}]{Exp1}%
  \BibitemOpen
  \bibfield  {author} {\bibinfo {author} {\bibfnamefont {X.}~\bibnamefont {Yang}}, \bibinfo {author} {\bibfnamefont {X.}~\bibnamefont {Chen}}, \bibinfo {author} {\bibfnamefont {J.}~\bibnamefont {Li}}, \bibinfo {author} {\bibfnamefont {X.}~\bibnamefont {Peng}}, \ and\ \bibinfo {author} {\bibfnamefont {R.}~\bibnamefont {Laflamme}},\ }\bibfield  {title} {\enquote {\bibinfo {title} {Hybrid quantum-classical approach to enhanced quantum metrology},}\ }\href {https://doi.org/10.1038/s41598-020-80070-1} {\bibfield  {journal} {\bibinfo  {journal} {Sci. Rep.}\ }\textbf {\bibinfo {volume} {11}},\ \bibinfo {pages} {672} (\bibinfo {year} {2021})}\BibitemShut {NoStop}%
\bibitem [{\citenamefont {Schnabel}\ \emph {et~al.}(2010)\citenamefont {Schnabel}, \citenamefont {Mavalvala}, \citenamefont {McClelland},\ and\ \citenamefont {Lam}}]{Rev10}%
  \BibitemOpen
  \bibfield  {author} {\bibinfo {author} {\bibfnamefont {R.}~\bibnamefont {Schnabel}}, \bibinfo {author} {\bibfnamefont {N.}~\bibnamefont {Mavalvala}}, \bibinfo {author} {\bibfnamefont {D.~E.}\ \bibnamefont {McClelland}}, \ and\ \bibinfo {author} {\bibfnamefont {P.~K.}\ \bibnamefont {Lam}},\ }\bibfield  {title} {\enquote {\bibinfo {title} {Quantum metrology for gravitational wave astronomy},}\ }\href {https://www.nature.com/articles/ncomms1122} {\bibfield  {journal} {\bibinfo  {journal} {Nat. Commun.}\ }\textbf {\bibinfo {volume} {1}},\ \bibinfo {pages} {121} (\bibinfo {year} {2010})}\BibitemShut {NoStop}%
\bibitem [{\citenamefont {T{\'o}th}\ and\ \citenamefont {Apellaniz}(2014)}]{Rev3}%
  \BibitemOpen
  \bibfield  {author} {\bibinfo {author} {\bibfnamefont {G.}~\bibnamefont {T{\'o}th}}\ and\ \bibinfo {author} {\bibfnamefont {I.}~\bibnamefont {Apellaniz}},\ }\bibfield  {title} {\enquote {\bibinfo {title} {Quantum metrology from a quantum information science perspective},}\ }\href {https://iopscience.iop.org/article/10.1088/1751-8113/47/42/424006/meta} {\bibfield  {journal} {\bibinfo  {journal} {J. Phys. A: Math. Theor.}\ }\textbf {\bibinfo {volume} {47}},\ \bibinfo {pages} {424006} (\bibinfo {year} {2014})}\BibitemShut {NoStop}%
\bibitem [{\citenamefont {Laurenza}\ \emph {et~al.}(2018)\citenamefont {Laurenza}, \citenamefont {Lupo}, \citenamefont {Spedalieri}, \citenamefont {Braunstein},\ and\ \citenamefont {Pirandola}}]{Rev18}%
  \BibitemOpen
  \bibfield  {author} {\bibinfo {author} {\bibfnamefont {R.}~\bibnamefont {Laurenza}}, \bibinfo {author} {\bibfnamefont {C.}~\bibnamefont {Lupo}}, \bibinfo {author} {\bibfnamefont {G.}~\bibnamefont {Spedalieri}}, \bibinfo {author} {\bibfnamefont {S.~L.}\ \bibnamefont {Braunstein}}, \ and\ \bibinfo {author} {\bibfnamefont {S.}~\bibnamefont {Pirandola}},\ }\bibfield  {title} {\enquote {\bibinfo {title} {Channel simulation in quantum metrology},}\ }\href {https://www.degruyterbrill.com/document/doi/10.1515/qmetro-2018-0001/html} {\bibfield  {journal} {\bibinfo  {journal} {Quantum Meas. Quantum Metrol}\ }\textbf {\bibinfo {volume} {5}},\ \bibinfo {pages} {1} (\bibinfo {year} {2018})}\BibitemShut {NoStop}%
\bibitem [{\citenamefont {Fadel}\ \emph {et~al.}(2024)\citenamefont {Fadel}, \citenamefont {Roux},\ and\ \citenamefont {Gessner}}]{rev24}%
  \BibitemOpen
  \bibfield  {author} {\bibinfo {author} {\bibfnamefont {M.}~\bibnamefont {Fadel}}, \bibinfo {author} {\bibfnamefont {N.}~\bibnamefont {Roux}}, \ and\ \bibinfo {author} {\bibfnamefont {M.}~\bibnamefont {Gessner}},\ }\bibfield  {title} {\enquote {\bibinfo {title} {Quantum metrology with a continuous-variable system},}\ }\href {https://doi.org/10.48550/arXiv.2411.04122} {\bibfield  {journal} {\bibinfo  {journal} {arXiv preprint arXiv:2411.04122}\ } (\bibinfo {year} {2024})}\BibitemShut {NoStop}%
\bibitem [{\citenamefont {Mukhopadhyay}\ \emph {et~al.}(2024)\citenamefont {Mukhopadhyay}, \citenamefont {Montenegro},\ and\ \citenamefont {Bayat}}]{Rev1}%
  \BibitemOpen
  \bibfield  {author} {\bibinfo {author} {\bibfnamefont {C.}~\bibnamefont {Mukhopadhyay}}, \bibinfo {author} {\bibfnamefont {V.}~\bibnamefont {Montenegro}}, \ and\ \bibinfo {author} {\bibfnamefont {Ab.}\ \bibnamefont {Bayat}},\ }\bibfield  {title} {\enquote {\bibinfo {title} {Current trends in global quantum metrology},}\ }\href {https://iopscience.iop.org/article/10.1088/1751-8121/adb112} {\bibfield  {journal} {\bibinfo  {journal} {J. Phys. A: Math. Theor.}\ }\textbf {\bibinfo {volume} {58}} (\bibinfo {year} {2024})}\BibitemShut {NoStop}%
\bibitem [{\citenamefont {Horodecki}\ \emph {et~al.}(2009)\citenamefont {Horodecki}, \citenamefont {Horodecki}, \citenamefont {Horodecki},\ and\ \citenamefont {Horodecki}}]{Entr1}%
  \BibitemOpen
  \bibfield  {author} {\bibinfo {author} {\bibfnamefont {R.}~\bibnamefont {Horodecki}}, \bibinfo {author} {\bibfnamefont {P.}~\bibnamefont {Horodecki}}, \bibinfo {author} {\bibfnamefont {M.}~\bibnamefont {Horodecki}}, \ and\ \bibinfo {author} {\bibfnamefont {K.}~\bibnamefont {Horodecki}},\ }\bibfield  {title} {\enquote {\bibinfo {title} {Quantum entanglement},}\ }\href {\doibase 10.1103/RevModPhys.81.865} {\bibfield  {journal} {\bibinfo  {journal} {Rev. Mod. Phys.}\ }\textbf {\bibinfo {volume} {81}},\ \bibinfo {pages} {865} (\bibinfo {year} {2009})}\BibitemShut {NoStop}%
\bibitem [{\citenamefont {G{\"u}hne}\ and\ \citenamefont {T{\'o}th}(2009)}]{pro1}%
  \BibitemOpen
  \bibfield  {author} {\bibinfo {author} {\bibfnamefont {O.}~\bibnamefont {G{\"u}hne}}\ and\ \bibinfo {author} {\bibfnamefont {G.}~\bibnamefont {T{\'o}th}},\ }\bibfield  {title} {\enquote {\bibinfo {title} {Entanglement detection},}\ }\href {https://doi.org/10.1016/j.physrep.2009.02.004} {\bibfield  {journal} {\bibinfo  {journal} {Phys. Rep.}\ }\textbf {\bibinfo {volume} {474}},\ \bibinfo {pages} {1} (\bibinfo {year} {2009})}\BibitemShut {NoStop}%
\bibitem [{\citenamefont {Plenio}\ and\ \citenamefont {Virmani}(2014)}]{Entr2}%
  \BibitemOpen
  \bibfield  {author} {\bibinfo {author} {\bibfnamefont {M.~B.}\ \bibnamefont {Plenio}}\ and\ \bibinfo {author} {\bibfnamefont {S.~S.}\ \bibnamefont {Virmani}},\ }\href {https://doi.org/10.1007/978-3-319-04063-9_8} {\emph {\bibinfo {title} {An introduction to entanglement theory}}}\ (\bibinfo  {publisher} {Springer},\ \bibinfo {year} {2014})\ p.\ \bibinfo {pages} {173}\BibitemShut {NoStop}%
\bibitem [{\citenamefont {Das}\ \emph {et~al.}(2016)\citenamefont {Das}, \citenamefont {Chanda}, \citenamefont {Lewenstein}, \citenamefont {Sanpera}, \citenamefont {Sen~De},\ and\ \citenamefont {Sen}}]{GGM3}%
  \BibitemOpen
  \bibfield  {author} {\bibinfo {author} {\bibfnamefont {S.}~\bibnamefont {Das}}, \bibinfo {author} {\bibfnamefont {T.}~\bibnamefont {Chanda}}, \bibinfo {author} {\bibfnamefont {M.}~\bibnamefont {Lewenstein}}, \bibinfo {author} {\bibfnamefont {A.}~\bibnamefont {Sanpera}}, \bibinfo {author} {\bibfnamefont {A.}~\bibnamefont {Sen~De}}, \ and\ \bibinfo {author} {\bibfnamefont {U.}~\bibnamefont {Sen}},\ }\bibfield  {title} {\enquote {\bibinfo {title} {The separability versus entanglement problem},}\ }\href {https://doi.org/10.48550/arXiv.1701.02187} {\bibfield  {journal} {\bibinfo  {journal} {QIF-QTA or Quant. Inf. from Found. to Tech. Appl.}\ ,\ \bibinfo {pages} {127}} (\bibinfo {year} {2016})}\BibitemShut {NoStop}%
\bibitem [{\citenamefont {Baumgratz}\ \emph {et~al.}(2014)\citenamefont {Baumgratz}, \citenamefont {Cramer},\ and\ \citenamefont {Plenio}}]{co1}%
  \BibitemOpen
  \bibfield  {author} {\bibinfo {author} {\bibfnamefont {T.}~\bibnamefont {Baumgratz}}, \bibinfo {author} {\bibfnamefont {M.}~\bibnamefont {Cramer}}, \ and\ \bibinfo {author} {\bibfnamefont {M.~B.}\ \bibnamefont {Plenio}},\ }\bibfield  {title} {\enquote {\bibinfo {title} {Quantifying coherence},}\ }\href {https://link.aps.org/doi/10.1103/PhysRevLett.113.140401} {\bibfield  {journal} {\bibinfo  {journal} {Phys. Rev. Lett.}\ }\textbf {\bibinfo {volume} {113}},\ \bibinfo {pages} {140401} (\bibinfo {year} {2014})}\BibitemShut {NoStop}%
\bibitem [{\citenamefont {Aberg}(2006)}]{CoAberg}%
  \BibitemOpen
  \bibfield  {author} {\bibinfo {author} {\bibfnamefont {J.}~\bibnamefont {Aberg}},\ }\bibfield  {title} {\enquote {\bibinfo {title} {Quantifying superposition},}\ }\href {https://doi.org/10.48550/arXiv.quant-ph/0612146} {\bibfield  {journal} {\bibinfo  {journal} {arXiv preprint quant-ph/0612146}\ } (\bibinfo {year} {2006})}\BibitemShut {NoStop}%
\bibitem [{\citenamefont {Rivas}\ \emph {et~al.}(2014)\citenamefont {Rivas}, \citenamefont {Huelga},\ and\ \citenamefont {Plenio}}]{NMRe}%
  \BibitemOpen
  \bibfield  {author} {\bibinfo {author} {\bibfnamefont {{\'A}.}~\bibnamefont {Rivas}}, \bibinfo {author} {\bibfnamefont {S.~F.}\ \bibnamefont {Huelga}}, \ and\ \bibinfo {author} {\bibfnamefont {M.~B.}\ \bibnamefont {Plenio}},\ }\bibfield  {title} {\enquote {\bibinfo {title} {Quantum non-{M}arkovianity: characterization, quantification and detection},}\ }\href {https://doi.org/10.1088/0034-4885/77/9/094001} {\bibfield  {journal} {\bibinfo  {journal} {Rep. Prog. Phys.}\ }\textbf {\bibinfo {volume} {77}},\ \bibinfo {pages} {094001} (\bibinfo {year} {2014})}\BibitemShut {NoStop}%
\bibitem [{\citenamefont {Breuer}\ \emph {et~al.}(2016)\citenamefont {Breuer}, \citenamefont {Laine}, \citenamefont {Piilo},\ and\ \citenamefont {Vacchini}}]{NMRe1}%
  \BibitemOpen
  \bibfield  {author} {\bibinfo {author} {\bibfnamefont {H.-P.}\ \bibnamefont {Breuer}}, \bibinfo {author} {\bibfnamefont {E.-M.}\ \bibnamefont {Laine}}, \bibinfo {author} {\bibfnamefont {J.}~\bibnamefont {Piilo}}, \ and\ \bibinfo {author} {\bibfnamefont {B.}~\bibnamefont {Vacchini}},\ }\bibfield  {title} {\enquote {\bibinfo {title} {Colloquium: Non-{M}arkovian dynamics in open quantum systems},}\ }\href {\doibase 10.1103/RevModPhys.88.021002} {\bibfield  {journal} {\bibinfo  {journal} {Rev. Mod. Phys.}\ }\textbf {\bibinfo {volume} {88}},\ \bibinfo {pages} {021002} (\bibinfo {year} {2016})}\BibitemShut {NoStop}%
\bibitem [{\citenamefont {de~Vega}\ and\ \citenamefont {Alonso}(2017)}]{NMRe2}%
  \BibitemOpen
  \bibfield  {author} {\bibinfo {author} {\bibfnamefont {I.}~\bibnamefont {de~Vega}}\ and\ \bibinfo {author} {\bibfnamefont {D.}~\bibnamefont {Alonso}},\ }\bibfield  {title} {\enquote {\bibinfo {title} {Dynamics of non-{M}arkovian open quantum systems},}\ }\href {https://link.aps.org/doi/10.1103/RevModPhys.89.015001} {\bibfield  {journal} {\bibinfo  {journal} {Rev. Mod. Phys.}\ }\textbf {\bibinfo {volume} {89}},\ \bibinfo {pages} {015001} (\bibinfo {year} {2017})}\BibitemShut {NoStop}%
\bibitem [{\citenamefont {Shrikant}\ and\ \citenamefont {Mandayam}(2023)}]{NMRe3}%
  \BibitemOpen
  \bibfield  {author} {\bibinfo {author} {\bibfnamefont {U.}~\bibnamefont {Shrikant}}\ and\ \bibinfo {author} {\bibfnamefont {P.}~\bibnamefont {Mandayam}},\ }\bibfield  {title} {\enquote {\bibinfo {title} {Quantum non-{M}arkovianity: Overview and recent developments},}\ }\href {https://www.frontiersin.org/journals/quantum-science-and-technology/articles/10.3389/frqst.2023.1134583/full} {\bibfield  {journal} {\bibinfo  {journal} {Front. Quantum Sci. Technol.}\ }\textbf {\bibinfo {volume} {2}},\ \bibinfo {pages} {1134583} (\bibinfo {year} {2023})}\BibitemShut {NoStop}%
\bibitem [{\citenamefont {Walls}(1983)}]{SQ1}%
  \BibitemOpen
  \bibfield  {author} {\bibinfo {author} {\bibfnamefont {D.~F.}\ \bibnamefont {Walls}},\ }\bibfield  {title} {\enquote {\bibinfo {title} {Squeezed states of light},}\ }\href {https://www.nature.com/articles/306141a0} {\bibfield  {journal} {\bibinfo  {journal} {Nat.}\ }\textbf {\bibinfo {volume} {306}},\ \bibinfo {pages} {141} (\bibinfo {year} {1983})}\BibitemShut {NoStop}%
\bibitem [{\citenamefont {Slusher}\ \emph {et~al.}(1985)\citenamefont {Slusher}, \citenamefont {Hollberg}, \citenamefont {Yurke}, \citenamefont {Mertz},\ and\ \citenamefont {Valley}}]{SQ2}%
  \BibitemOpen
  \bibfield  {author} {\bibinfo {author} {\bibfnamefont {R.~E.}\ \bibnamefont {Slusher}}, \bibinfo {author} {\bibfnamefont {L.~W.}\ \bibnamefont {Hollberg}}, \bibinfo {author} {\bibfnamefont {B.}~\bibnamefont {Yurke}}, \bibinfo {author} {\bibfnamefont {J.~C.}\ \bibnamefont {Mertz}}, \ and\ \bibinfo {author} {\bibfnamefont {J.~F.}\ \bibnamefont {Valley}},\ }\bibfield  {title} {\enquote {\bibinfo {title} {Observation of squeezed states generated by four-wave mixing in an optical cavity},}\ }\href {https://link.aps.org/doi/10.1103/PhysRevLett.55.2409} {\bibfield  {journal} {\bibinfo  {journal} {Phys. Rev. Lett.}\ }\textbf {\bibinfo {volume} {55}},\ \bibinfo {pages} {2409} (\bibinfo {year} {1985})}\BibitemShut {NoStop}%
\bibitem [{\citenamefont {Kitagawa}\ and\ \citenamefont {Ueda}(1993)}]{SQ3}%
  \BibitemOpen
  \bibfield  {author} {\bibinfo {author} {\bibfnamefont {M.}~\bibnamefont {Kitagawa}}\ and\ \bibinfo {author} {\bibfnamefont {M.}~\bibnamefont {Ueda}},\ }\bibfield  {title} {\enquote {\bibinfo {title} {Squeezed spin states},}\ }\href {https://link.aps.org/doi/10.1103/PhysRevA.47.5138} {\bibfield  {journal} {\bibinfo  {journal} {Phys. Rev. A}\ }\textbf {\bibinfo {volume} {47}},\ \bibinfo {pages} {5138} (\bibinfo {year} {1993})}\BibitemShut {NoStop}%
\bibitem [{\citenamefont {Wineland}\ \emph {et~al.}(1992)\citenamefont {Wineland}, \citenamefont {Bollinger}, \citenamefont {Itano}, \citenamefont {Moore},\ and\ \citenamefont {Heinzen}}]{uni1}%
  \BibitemOpen
  \bibfield  {author} {\bibinfo {author} {\bibfnamefont {D.~J.}\ \bibnamefont {Wineland}}, \bibinfo {author} {\bibfnamefont {J.~J.}\ \bibnamefont {Bollinger}}, \bibinfo {author} {\bibfnamefont {W.~M.}\ \bibnamefont {Itano}}, \bibinfo {author} {\bibfnamefont {F.~L.}\ \bibnamefont {Moore}}, \ and\ \bibinfo {author} {\bibfnamefont {D.~J.}\ \bibnamefont {Heinzen}},\ }\bibfield  {title} {\enquote {\bibinfo {title} {Spin squeezing and reduced quantum noise in spectroscopy},}\ }\href {\doibase 10.1103/PhysRevA.46.R6797} {\bibfield  {journal} {\bibinfo  {journal} {Phys. Rev. A}\ }\textbf {\bibinfo {volume} {46}},\ \bibinfo {pages} {R6797} (\bibinfo {year} {1992})}\BibitemShut {NoStop}%
\bibitem [{\citenamefont {Braunstein}(1992)}]{Enten3}%
  \BibitemOpen
  \bibfield  {author} {\bibinfo {author} {\bibfnamefont {S.~L.}\ \bibnamefont {Braunstein}},\ }\bibfield  {title} {\enquote {\bibinfo {title} {Quantum limits on precision measurements of phase},}\ }\href {https://link.aps.org/doi/10.1103/PhysRevLett.69.3598} {\bibfield  {journal} {\bibinfo  {journal} {Phys. Rev. Lett.}\ }\textbf {\bibinfo {volume} {69}},\ \bibinfo {pages} {3598} (\bibinfo {year} {1992})}\BibitemShut {NoStop}%
\bibitem [{\citenamefont {T\'oth}(2012)}]{Toth}%
  \BibitemOpen
  \bibfield  {author} {\bibinfo {author} {\bibfnamefont {G.}~\bibnamefont {T\'oth}},\ }\bibfield  {title} {\enquote {\bibinfo {title} {Multipartite entanglement and high-precision metrology},}\ }\href {https://link.aps.org/doi/10.1103/PhysRevA.85.022322} {\bibfield  {journal} {\bibinfo  {journal} {Phys. Rev. A}\ }\textbf {\bibinfo {volume} {85}},\ \bibinfo {pages} {022322} (\bibinfo {year} {2012})}\BibitemShut {NoStop}%
\bibitem [{\citenamefont {Erol}\ \emph {et~al.}(2014)\citenamefont {Erol}, \citenamefont {Ozaydin},\ and\ \citenamefont {Altintas}}]{Ent5}%
  \BibitemOpen
  \bibfield  {author} {\bibinfo {author} {\bibfnamefont {V.}~\bibnamefont {Erol}}, \bibinfo {author} {\bibfnamefont {F.}~\bibnamefont {Ozaydin}}, \ and\ \bibinfo {author} {\bibfnamefont {A.~A.}\ \bibnamefont {Altintas}},\ }\bibfield  {title} {\enquote {\bibinfo {title} {Analysis of entanglement measures and locc maximized quantum fisher information of general two qubit systems},}\ }\href {https://www.nature.com/articles/srep05422} {\bibfield  {journal} {\bibinfo  {journal} {Sci. Rep.}\ }\textbf {\bibinfo {volume} {4}},\ \bibinfo {pages} {5422} (\bibinfo {year} {2014})}\BibitemShut {NoStop}%
\bibitem [{\citenamefont {Apellaniz}\ \emph {et~al.}(2015)\citenamefont {Apellaniz}, \citenamefont {L{\"u}cke}, \citenamefont {Peise}, \citenamefont {Klempt},\ and\ \citenamefont {T{\'o}th}}]{Ent4}%
  \BibitemOpen
  \bibfield  {author} {\bibinfo {author} {\bibfnamefont {I.}~\bibnamefont {Apellaniz}}, \bibinfo {author} {\bibfnamefont {B.}~\bibnamefont {L{\"u}cke}}, \bibinfo {author} {\bibfnamefont {J.}~\bibnamefont {Peise}}, \bibinfo {author} {\bibfnamefont {C.}~\bibnamefont {Klempt}}, \ and\ \bibinfo {author} {\bibfnamefont {G.}~\bibnamefont {T{\'o}th}},\ }\bibfield  {title} {\enquote {\bibinfo {title} {Detecting metrologically useful entanglement in the vicinity of dicke states},}\ }\href {https://doi.org/10.1088/1367-2630/17/8/083027 Focus to learn more} {\bibfield  {journal} {\bibinfo  {journal} {NJP}\ }\textbf {\bibinfo {volume} {17}},\ \bibinfo {pages} {083027} (\bibinfo {year} {2015})}\BibitemShut {NoStop}%
\bibitem [{\citenamefont {Tr{\'e}nyi}\ \emph {et~al.}(2024)\citenamefont {Tr{\'e}nyi}, \citenamefont {Luk{\'a}cs}, \citenamefont {Horodecki}, \citenamefont {Horodecki}, \citenamefont {V{\'e}rtesi},\ and\ \citenamefont {T{\'o}th}}]{Ent6}%
  \BibitemOpen
  \bibfield  {author} {\bibinfo {author} {\bibfnamefont {R.}~\bibnamefont {Tr{\'e}nyi}}, \bibinfo {author} {\bibfnamefont {{\'A}.}~\bibnamefont {Luk{\'a}cs}}, \bibinfo {author} {\bibfnamefont {P.}~\bibnamefont {Horodecki}}, \bibinfo {author} {\bibfnamefont {R.}~\bibnamefont {Horodecki}}, \bibinfo {author} {\bibfnamefont {T.}~\bibnamefont {V{\'e}rtesi}}, \ and\ \bibinfo {author} {\bibfnamefont {G.}~\bibnamefont {T{\'o}th}},\ }\bibfield  {title} {\enquote {\bibinfo {title} {Activation of metrologically useful genuine multipartite entanglement},}\ }\href {https://doi.org/10.1088/1367-2630/ad1e93 Focus to learn more} {\bibfield  {journal} {\bibinfo  {journal} {NJP}\ }\textbf {\bibinfo {volume} {26}},\ \bibinfo {pages} {023034} (\bibinfo {year} {2024})}\BibitemShut {NoStop}%
\bibitem [{\citenamefont {Fujiwara}(2001)}]{aux1}%
  \BibitemOpen
  \bibfield  {author} {\bibinfo {author} {\bibfnamefont {A.}~\bibnamefont {Fujiwara}},\ }\bibfield  {title} {\enquote {\bibinfo {title} {Quantum channel identification problem},}\ }\href {https://link.aps.org/doi/10.1103/PhysRevA.63.042304} {\bibfield  {journal} {\bibinfo  {journal} {Phys. Rev. A}\ }\textbf {\bibinfo {volume} {63}},\ \bibinfo {pages} {042304} (\bibinfo {year} {2001})}\BibitemShut {NoStop}%
\bibitem [{\citenamefont {Demkowicz-Dobrza\ifmmode~\acute{n}\else \'{n}\fi{}ski}\ and\ \citenamefont {Maccone}(2014)}]{Aux14}%
  \BibitemOpen
  \bibfield  {author} {\bibinfo {author} {\bibfnamefont {R.}~\bibnamefont {Demkowicz-Dobrza\ifmmode~\acute{n}\else \'{n}\fi{}ski}}\ and\ \bibinfo {author} {\bibfnamefont {L.}~\bibnamefont {Maccone}},\ }\bibfield  {title} {\enquote {\bibinfo {title} {Using entanglement against noise in quantum metrology},}\ }\href {https://link.aps.org/doi/10.1103/PhysRevLett.113.250801} {\bibfield  {journal} {\bibinfo  {journal} {Phys. Rev. Lett.}\ }\textbf {\bibinfo {volume} {113}},\ \bibinfo {pages} {250801} (\bibinfo {year} {2014})}\BibitemShut {NoStop}%
\bibitem [{\citenamefont {Huang}\ \emph {et~al.}(2016)\citenamefont {Huang}, \citenamefont {Macchiavello},\ and\ \citenamefont {Maccone}}]{Auxn}%
  \BibitemOpen
  \bibfield  {author} {\bibinfo {author} {\bibfnamefont {Z.}~\bibnamefont {Huang}}, \bibinfo {author} {\bibfnamefont {C.}~\bibnamefont {Macchiavello}}, \ and\ \bibinfo {author} {\bibfnamefont {L.}~\bibnamefont {Maccone}},\ }\bibfield  {title} {\enquote {\bibinfo {title} {Usefulness of entanglement-assisted quantum metrology},}\ }\href {\doibase 10.1103/PhysRevA.94.012101} {\bibfield  {journal} {\bibinfo  {journal} {Phys. Rev. A}\ }\textbf {\bibinfo {volume} {94}},\ \bibinfo {pages} {012101} (\bibinfo {year} {2016})}\BibitemShut {NoStop}%
\bibitem [{\citenamefont {Nichols}\ \emph {et~al.}(2016)\citenamefont {Nichols}, \citenamefont {Bromley}, \citenamefont {Correa},\ and\ \citenamefont {Adesso}}]{Aux16}%
  \BibitemOpen
  \bibfield  {author} {\bibinfo {author} {\bibfnamefont {R.}~\bibnamefont {Nichols}}, \bibinfo {author} {\bibfnamefont {T.~R.}\ \bibnamefont {Bromley}}, \bibinfo {author} {\bibfnamefont {L.~A.}\ \bibnamefont {Correa}}, \ and\ \bibinfo {author} {\bibfnamefont {G.}~\bibnamefont {Adesso}},\ }\bibfield  {title} {\enquote {\bibinfo {title} {Practical quantum metrology in noisy environments},}\ }\href {https://link.aps.org/doi/10.1103/PhysRevA.94.042101} {\bibfield  {journal} {\bibinfo  {journal} {Phys. Rev. A}\ }\textbf {\bibinfo {volume} {94}},\ \bibinfo {pages} {042101} (\bibinfo {year} {2016})}\BibitemShut {NoStop}%
\bibitem [{\citenamefont {Zhou}(2024{\natexlab{a}})}]{Aux23}%
  \BibitemOpen
  \bibfield  {author} {\bibinfo {author} {\bibfnamefont {S.}~\bibnamefont {Zhou}},\ }\bibfield  {title} {\enquote {\bibinfo {title} {Limits of noisy quantum metrology with restricted quantum controls},}\ }\href {https://link.aps.org/doi/10.1103/PhysRevLett.133.170801} {\bibfield  {journal} {\bibinfo  {journal} {Phys. Rev. Lett.}\ }\textbf {\bibinfo {volume} {133}},\ \bibinfo {pages} {170801} (\bibinfo {year} {2024}{\natexlab{a}})}\BibitemShut {NoStop}%
\bibitem [{\citenamefont {Rahim}\ \emph {et~al.}(2024)\citenamefont {Rahim}, \citenamefont {Al-Kuwari},\ and\ \citenamefont {Ali}}]{aux5}%
  \BibitemOpen
  \bibfield  {author} {\bibinfo {author} {\bibfnamefont {M.~T.}\ \bibnamefont {Rahim}}, \bibinfo {author} {\bibfnamefont {S.}~\bibnamefont {Al-Kuwari}}, \ and\ \bibinfo {author} {\bibfnamefont {A.}~\bibnamefont {Ali}},\ }\bibfield  {title} {\enquote {\bibinfo {title} {Entanglement-enhanced optimal quantum metrology},}\ }\href {https://doi.org/10.48550/arXiv.2411.04022} {\bibfield  {journal} {\bibinfo  {journal} {arXiv preprint arXiv:2411.04022}\ } (\bibinfo {year} {2024})}\BibitemShut {NoStop}%
\bibitem [{\citenamefont {Pezz\'e}\ and\ \citenamefont {Smerzi}(2009)}]{Ent2}%
  \BibitemOpen
  \bibfield  {author} {\bibinfo {author} {\bibfnamefont {L.}~\bibnamefont {Pezz\'e}}\ and\ \bibinfo {author} {\bibfnamefont {A.}~\bibnamefont {Smerzi}},\ }\bibfield  {title} {\enquote {\bibinfo {title} {Entanglement, nonlinear dynamics, and the {H}eisenberg limit},}\ }\href {\doibase 10.1103/PhysRevLett.102.100401} {\bibfield  {journal} {\bibinfo  {journal} {Phys. Rev. Lett.}\ }\textbf {\bibinfo {volume} {102}},\ \bibinfo {pages} {100401} (\bibinfo {year} {2009})}\BibitemShut {NoStop}%
\bibitem [{\citenamefont {Wang}\ \emph {et~al.}(2018)\citenamefont {Wang}, \citenamefont {Wu}, \citenamefont {Cui},\ and\ \citenamefont {Wang}}]{cohf}%
  \BibitemOpen
  \bibfield  {author} {\bibinfo {author} {\bibfnamefont {Zhihai}\ \bibnamefont {Wang}}, \bibinfo {author} {\bibfnamefont {Wei}\ \bibnamefont {Wu}}, \bibinfo {author} {\bibfnamefont {Guodong}\ \bibnamefont {Cui}}, \ and\ \bibinfo {author} {\bibfnamefont {Jin}\ \bibnamefont {Wang}},\ }\bibfield  {title} {\enquote {\bibinfo {title} {Coherence enhanced quantum metrology in a nonequilibrium optical molecule},}\ }\href {https://iopscience.iop.org/article/10.1088/1367-2630/aab03a} {\bibfield  {journal} {\bibinfo  {journal} {NJP}\ }\textbf {\bibinfo {volume} {20}},\ \bibinfo {pages} {033034} (\bibinfo {year} {2018})}\BibitemShut {NoStop}%
\bibitem [{\citenamefont {Chin}\ \emph {et~al.}(2012)\citenamefont {Chin}, \citenamefont {Huelga},\ and\ \citenamefont {Plenio}}]{NMR}%
  \BibitemOpen
  \bibfield  {author} {\bibinfo {author} {\bibfnamefont {A.~W.}\ \bibnamefont {Chin}}, \bibinfo {author} {\bibfnamefont {S.~F.}\ \bibnamefont {Huelga}}, \ and\ \bibinfo {author} {\bibfnamefont {M.~B.}\ \bibnamefont {Plenio}},\ }\bibfield  {title} {\enquote {\bibinfo {title} {Quantum metrology in non-{M}arkovian environments},}\ }\href {https://link.aps.org/doi/10.1103/PhysRevLett.109.233601} {\bibfield  {journal} {\bibinfo  {journal} {Phys. Rev. Lett.}\ }\textbf {\bibinfo {volume} {109}},\ \bibinfo {pages} {233601} (\bibinfo {year} {2012})}\BibitemShut {NoStop}%
\bibitem [{\citenamefont {Xie}\ and\ \citenamefont {Wang}(2014)}]{NMarkov}%
  \BibitemOpen
  \bibfield  {author} {\bibinfo {author} {\bibfnamefont {D.}~\bibnamefont {Xie}}\ and\ \bibinfo {author} {\bibfnamefont {A.~M.}\ \bibnamefont {Wang}},\ }\bibfield  {title} {\enquote {\bibinfo {title} {Quantum metrology in correlated environments},}\ }\href {https://doi.org/10.1016/j.physleta.2014.06.006} {\bibfield  {journal} {\bibinfo  {journal} {Physics Letters A}\ }\textbf {\bibinfo {volume} {378}},\ \bibinfo {pages} {2079} (\bibinfo {year} {2014})}\BibitemShut {NoStop}%
\bibitem [{\citenamefont {Huelga}\ \emph {et~al.}(1997{\natexlab{b}})\citenamefont {Huelga}, \citenamefont {Macchiavello}, \citenamefont {Pellizzari}, \citenamefont {Ekert}, \citenamefont {Plenio},\ and\ \citenamefont {Cirac}}]{SQZ1}%
  \BibitemOpen
  \bibfield  {author} {\bibinfo {author} {\bibfnamefont {S.~F.}\ \bibnamefont {Huelga}}, \bibinfo {author} {\bibfnamefont {C.}~\bibnamefont {Macchiavello}}, \bibinfo {author} {\bibfnamefont {T.}~\bibnamefont {Pellizzari}}, \bibinfo {author} {\bibfnamefont {A.~K.}\ \bibnamefont {Ekert}}, \bibinfo {author} {\bibfnamefont {M.~B.}\ \bibnamefont {Plenio}}, \ and\ \bibinfo {author} {\bibfnamefont {J.~I.}\ \bibnamefont {Cirac}},\ }\bibfield  {title} {\enquote {\bibinfo {title} {Improvement of frequency standards with quantum entanglement},}\ }\href {\doibase 10.1103/PhysRevLett.79.3865} {\bibfield  {journal} {\bibinfo  {journal} {Phys. Rev. Lett.}\ }\textbf {\bibinfo {volume} {79}},\ \bibinfo {pages} {3865} (\bibinfo {year} {1997}{\natexlab{b}})}\BibitemShut {NoStop}%
\bibitem [{\citenamefont {Greenberger}\ \emph {et~al.}(1989)\citenamefont {Greenberger}, \citenamefont {Horne},\ and\ \citenamefont {Zeilinger}}]{st1}%
  \BibitemOpen
  \bibfield  {author} {\bibinfo {author} {\bibfnamefont {D.~M}\ \bibnamefont {Greenberger}}, \bibinfo {author} {\bibfnamefont {M.~A}\ \bibnamefont {Horne}}, \ and\ \bibinfo {author} {\bibfnamefont {A.}~\bibnamefont {Zeilinger}},\ }\bibfield  {title} {\enquote {\bibinfo {title} {Going beyond bell’s theorem},}\ \ }(\bibinfo  {publisher} {Springer},\ \bibinfo {year} {1989})\ p.~\bibinfo {pages} {69}\BibitemShut {NoStop}%
\bibitem [{\citenamefont {Greenberger}\ \emph {et~al.}(1990)\citenamefont {Greenberger}, \citenamefont {Horne}, \citenamefont {Shimony},\ and\ \citenamefont {Zeilinger}}]{st2}%
  \BibitemOpen
  \bibfield  {author} {\bibinfo {author} {\bibfnamefont {D.~M.}\ \bibnamefont {Greenberger}}, \bibinfo {author} {\bibfnamefont {M.~A.}\ \bibnamefont {Horne}}, \bibinfo {author} {\bibfnamefont {A.}~\bibnamefont {Shimony}}, \ and\ \bibinfo {author} {\bibfnamefont {A.}~\bibnamefont {Zeilinger}},\ }\bibfield  {title} {\enquote {\bibinfo {title} {Bell’s theorem without inequalities},}\ }\href {https://doi.org/10.1119/1.16243} {\bibfield  {journal} {\bibinfo  {journal} {Am. J. Phys.}\ }\textbf {\bibinfo {volume} {58}},\ \bibinfo {pages} {1131} (\bibinfo {year} {1990})}\BibitemShut {NoStop}%
\bibitem [{\citenamefont {Hyllus}\ \emph {et~al.}(2012)\citenamefont {Hyllus}, \citenamefont {Laskowski}, \citenamefont {Krischek}, \citenamefont {Schwemmer}, \citenamefont {Wieczorek}, \citenamefont {Weinfurter}, \citenamefont {Pezz\'e},\ and\ \citenamefont {Smerzi}}]{MPE}%
  \BibitemOpen
  \bibfield  {author} {\bibinfo {author} {\bibfnamefont {P.}~\bibnamefont {Hyllus}}, \bibinfo {author} {\bibfnamefont {Wies\l{}aw}\ \bibnamefont {Laskowski}}, \bibinfo {author} {\bibfnamefont {R.}~\bibnamefont {Krischek}}, \bibinfo {author} {\bibfnamefont {C.}~\bibnamefont {Schwemmer}}, \bibinfo {author} {\bibfnamefont {W.}~\bibnamefont {Wieczorek}}, \bibinfo {author} {\bibfnamefont {H.}~\bibnamefont {Weinfurter}}, \bibinfo {author} {\bibfnamefont {L.}~\bibnamefont {Pezz\'e}}, \ and\ \bibinfo {author} {\bibfnamefont {A.}~\bibnamefont {Smerzi}},\ }\bibfield  {title} {\enquote {\bibinfo {title} {Fisher information and multiparticle entanglement},}\ }\href {https://link.aps.org/doi/10.1103/PhysRevA.85.022321} {\bibfield  {journal} {\bibinfo  {journal} {Phys. Rev. A}\ }\textbf {\bibinfo {volume} {85}},\ \bibinfo {pages} {022321} (\bibinfo {year} {2012})}\BibitemShut {NoStop}%
\bibitem [{\citenamefont {Shimony}(1995)}]{GM1}%
  \BibitemOpen
  \bibfield  {author} {\bibinfo {author} {\bibfnamefont {A.}~\bibnamefont {Shimony}},\ }\bibfield  {title} {\enquote {\bibinfo {title} {Degree of entanglement a},}\ }\href {https://doi.org/10.1111/j.1749-6632.1995.tb39008.xCitations: 239} {\bibfield  {journal} {\bibinfo  {journal} {Ann. N.Y. Acad. Sci.}\ }\textbf {\bibinfo {volume} {755}},\ \bibinfo {pages} {675} (\bibinfo {year} {1995})}\BibitemShut {NoStop}%
\bibitem [{\citenamefont {Wei}\ and\ \citenamefont {Goldbart}(2003)}]{GM2}%
  \BibitemOpen
  \bibfield  {author} {\bibinfo {author} {\bibfnamefont {T.-C.}\ \bibnamefont {Wei}}\ and\ \bibinfo {author} {\bibfnamefont {P.~M.}\ \bibnamefont {Goldbart}},\ }\bibfield  {title} {\enquote {\bibinfo {title} {Geometric measure of entanglement and applications to bipartite and multipartite quantum states},}\ }\href {\doibase 10.1103/PhysRevA.68.042307} {\bibfield  {journal} {\bibinfo  {journal} {Phys. Rev. A}\ }\textbf {\bibinfo {volume} {68}},\ \bibinfo {pages} {042307} (\bibinfo {year} {2003})}\BibitemShut {NoStop}%
\bibitem [{\citenamefont {Blasone}\ \emph {et~al.}(2008)\citenamefont {Blasone}, \citenamefont {Dell'Anno}, \citenamefont {De~Siena},\ and\ \citenamefont {Illuminati}}]{GGM1}%
  \BibitemOpen
  \bibfield  {author} {\bibinfo {author} {\bibfnamefont {M.}~\bibnamefont {Blasone}}, \bibinfo {author} {\bibfnamefont {F.}~\bibnamefont {Dell'Anno}}, \bibinfo {author} {\bibfnamefont {S.}~\bibnamefont {De~Siena}}, \ and\ \bibinfo {author} {\bibfnamefont {F.}~\bibnamefont {Illuminati}},\ }\bibfield  {title} {\enquote {\bibinfo {title} {Hierarchies of geometric entanglement},}\ }\href {https://link.aps.org/doi/10.1103/PhysRevA.77.062304} {\bibfield  {journal} {\bibinfo  {journal} {Phys. Rev. A}\ }\textbf {\bibinfo {volume} {77}},\ \bibinfo {pages} {062304} (\bibinfo {year} {2008})}\BibitemShut {NoStop}%
\bibitem [{\citenamefont {Sen(De)}\ and\ \citenamefont {Sen}(2010)}]{PhysRevA.81.012308}%
  \BibitemOpen
  \bibfield  {author} {\bibinfo {author} {\bibfnamefont {A.}~\bibnamefont {Sen(De)}}\ and\ \bibinfo {author} {\bibfnamefont {U.}~\bibnamefont {Sen}},\ }\bibfield  {title} {\enquote {\bibinfo {title} {Channel capacities versus entanglement measures in multiparty quantum states},}\ }\href {\doibase 10.1103/PhysRevA.81.012308} {\bibfield  {journal} {\bibinfo  {journal} {Phys. Rev. A}\ }\textbf {\bibinfo {volume} {81}},\ \bibinfo {pages} {012308} (\bibinfo {year} {2010})}\BibitemShut {NoStop}%
\bibitem [{\citenamefont {De}\ and\ \citenamefont {Sen}(2010)}]{GM3}%
  \BibitemOpen
  \bibfield  {author} {\bibinfo {author} {\bibfnamefont {A.~Sen}\ \bibnamefont {De}}\ and\ \bibinfo {author} {\bibfnamefont {U.}~\bibnamefont {Sen}},\ }\bibfield  {title} {\enquote {\bibinfo {title} {Bound genuine multisite entanglement: detector of gapless-gapped quantum transitions in frustrated systems},}\ }\href {https://doi.org/10.48550/arXiv.1002.1253} {\bibfield  {journal} {\bibinfo  {journal} {arXiv preprint arXiv:1002.1253}\ } (\bibinfo {year} {2010})}\BibitemShut {NoStop}%
\bibitem [{\citenamefont {Das}\ \emph {et~al.}(2010)\citenamefont {Das}, \citenamefont {Roy}, \citenamefont {Bagchi}, \citenamefont {Misra}, \citenamefont {Sen(De)},\ and\ \citenamefont {Sen}}]{GGM2}%
  \BibitemOpen
  \bibfield  {author} {\bibinfo {author} {\bibfnamefont {T.}~\bibnamefont {Das}}, \bibinfo {author} {\bibfnamefont {S.~S.}\ \bibnamefont {Roy}}, \bibinfo {author} {\bibfnamefont {S.}~\bibnamefont {Bagchi}}, \bibinfo {author} {\bibfnamefont {A.}~\bibnamefont {Misra}}, \bibinfo {author} {\bibfnamefont {A.}~\bibnamefont {Sen(De)}}, \ and\ \bibinfo {author} {\bibfnamefont {U.}~\bibnamefont {Sen}},\ }\bibfield  {title} {\enquote {\bibinfo {title} {Generalized geometric measure of entanglement for multiparty mixed states},}\ }\href {https://link.aps.org/doi/10.1103/PhysRevA.94.022336} {\bibfield  {journal} {\bibinfo  {journal} {Phys. Rev. A}\ }\textbf {\bibinfo {volume} {94}},\ \bibinfo {pages} {022336} (\bibinfo {year} {2010})}\BibitemShut {NoStop}%
\bibitem [{\citenamefont {Bennett}\ \emph {et~al.}(1996)\citenamefont {Bennett}, \citenamefont {Bernstein}, \citenamefont {Popescu},\ and\ \citenamefont {Schumacher}}]{Entropy}%
  \BibitemOpen
  \bibfield  {author} {\bibinfo {author} {\bibfnamefont {C.~H.}\ \bibnamefont {Bennett}}, \bibinfo {author} {\bibfnamefont {H.~J.}\ \bibnamefont {Bernstein}}, \bibinfo {author} {\bibfnamefont {S.}~\bibnamefont {Popescu}}, \ and\ \bibinfo {author} {\bibfnamefont {B.}~\bibnamefont {Schumacher}},\ }\bibfield  {title} {\enquote {\bibinfo {title} {Concentrating partial entanglement by local operations},}\ }\href {\doibase 10.1103/PhysRevA.53.2046} {\bibfield  {journal} {\bibinfo  {journal} {Phys. Rev. A}\ }\textbf {\bibinfo {volume} {53}},\ \bibinfo {pages} {2046} (\bibinfo {year} {1996})}\BibitemShut {NoStop}%
\bibitem [{\citenamefont {Ac\'{\i}n}\ \emph {et~al.}(2001)\citenamefont {Ac\'{\i}n}, \citenamefont {Bru\ss{}}, \citenamefont {Lewenstein},\ and\ \citenamefont {Sanpera}}]{ksepb}%
  \BibitemOpen
  \bibfield  {author} {\bibinfo {author} {\bibfnamefont {A.}~\bibnamefont {Ac\'{\i}n}}, \bibinfo {author} {\bibfnamefont {D.}~\bibnamefont {Bru\ss{}}}, \bibinfo {author} {\bibfnamefont {M.}~\bibnamefont {Lewenstein}}, \ and\ \bibinfo {author} {\bibfnamefont {A.}~\bibnamefont {Sanpera}},\ }\bibfield  {title} {\enquote {\bibinfo {title} {Classification of mixed three-qubit states},}\ }\href {\doibase 10.1103/PhysRevLett.87.040401} {\bibfield  {journal} {\bibinfo  {journal} {Phys. Rev. Lett.}\ }\textbf {\bibinfo {volume} {87}},\ \bibinfo {pages} {040401} (\bibinfo {year} {2001})}\BibitemShut {NoStop}%
\bibitem [{\citenamefont {G{\"u}hne}\ \emph {et~al.}(2005)\citenamefont {G{\"u}hne}, \citenamefont {T{\'o}th},\ and\ \citenamefont {Briegel}}]{ksepa}%
  \BibitemOpen
  \bibfield  {author} {\bibinfo {author} {\bibfnamefont {O.}~\bibnamefont {G{\"u}hne}}, \bibinfo {author} {\bibfnamefont {G.}~\bibnamefont {T{\'o}th}}, \ and\ \bibinfo {author} {\bibfnamefont {H.~J.}\ \bibnamefont {Briegel}},\ }\bibfield  {title} {\enquote {\bibinfo {title} {Multipartite entanglement in spin chains},}\ }\href {https://iopscience.iop.org/article/10.1088/1367-2630/7/1/229} {\bibfield  {journal} {\bibinfo  {journal} {NJP}\ }\textbf {\bibinfo {volume} {7}},\ \bibinfo {pages} {229} (\bibinfo {year} {2005})}\BibitemShut {NoStop}%
\bibitem [{\citenamefont {Seevinck}\ and\ \citenamefont {Uffink}(2008)}]{ksep}%
  \BibitemOpen
  \bibfield  {author} {\bibinfo {author} {\bibfnamefont {M.}~\bibnamefont {Seevinck}}\ and\ \bibinfo {author} {\bibfnamefont {J.}~\bibnamefont {Uffink}},\ }\bibfield  {title} {\enquote {\bibinfo {title} {Partial separability and entanglement criteria for multiqubit quantum states},}\ }\href {\doibase 10.1103/PhysRevA.78.032101} {\bibfield  {journal} {\bibinfo  {journal} {Phys. Rev. A}\ }\textbf {\bibinfo {volume} {78}},\ \bibinfo {pages} {032101} (\bibinfo {year} {2008})}\BibitemShut {NoStop}%
\bibitem [{\citenamefont {Escher}\ \emph {et~al.}(2011)\citenamefont {Escher}, \citenamefont {de~Matos~Filho},\ and\ \citenamefont {Davidovich}}]{Dcoh}%
  \BibitemOpen
  \bibfield  {author} {\bibinfo {author} {\bibfnamefont {B.~M.}\ \bibnamefont {Escher}}, \bibinfo {author} {\bibfnamefont {R.~L.}\ \bibnamefont {de~Matos~Filho}}, \ and\ \bibinfo {author} {\bibfnamefont {L.}~\bibnamefont {Davidovich}},\ }\bibfield  {title} {\enquote {\bibinfo {title} {General framework for estimating the ultimate precision limit in noisy quantum-enhanced metrology},}\ }\href {https://doi.org/10.1038/nphys1958} {\bibfield  {journal} {\bibinfo  {journal} {Nat. Phys.}\ }\textbf {\bibinfo {volume} {7}},\ \bibinfo {pages} {406} (\bibinfo {year} {2011})}\BibitemShut {NoStop}%
\bibitem [{\citenamefont {Demkowicz-Dobrza{\'n}ski}\ \emph {et~al.}(2012)\citenamefont {Demkowicz-Dobrza{\'n}ski}, \citenamefont {Ko{\l}ody{\'n}ski},\ and\ \citenamefont {Guță}}]{Qch}%
  \BibitemOpen
  \bibfield  {author} {\bibinfo {author} {\bibfnamefont {R.}~\bibnamefont {Demkowicz-Dobrza{\'n}ski}}, \bibinfo {author} {\bibfnamefont {J.}~\bibnamefont {Ko{\l}ody{\'n}ski}}, \ and\ \bibinfo {author} {\bibfnamefont {M.}~\bibnamefont {Guță}},\ }\bibfield  {title} {\enquote {\bibinfo {title} {The elusive {H}eisenberg limit in quantum-enhanced metrology},}\ }\href {https://doi.org/10.1038/ncomms2067} {\bibfield  {journal} {\bibinfo  {journal} {Nat. Commun.}\ }\textbf {\bibinfo {volume} {3}},\ \bibinfo {pages} {1063} (\bibinfo {year} {2012})}\BibitemShut {NoStop}%
\bibitem [{\citenamefont {Correa}\ \emph {et~al.}(2015)\citenamefont {Correa}, \citenamefont {Mehboudi}, \citenamefont {Adesso},\ and\ \citenamefont {Sanpera}}]{Therm}%
  \BibitemOpen
  \bibfield  {author} {\bibinfo {author} {\bibfnamefont {L.~A.}\ \bibnamefont {Correa}}, \bibinfo {author} {\bibfnamefont {M.}~\bibnamefont {Mehboudi}}, \bibinfo {author} {\bibfnamefont {G.}~\bibnamefont {Adesso}}, \ and\ \bibinfo {author} {\bibfnamefont {A.}~\bibnamefont {Sanpera}},\ }\bibfield  {title} {\enquote {\bibinfo {title} {Individual quantum probes for optimal thermometry},}\ }\href {https://link.aps.org/doi/10.1103/PhysRevLett.114.220405} {\bibfield  {journal} {\bibinfo  {journal} {Phys. Rev. Lett.}\ }\textbf {\bibinfo {volume} {114}},\ \bibinfo {pages} {220405} (\bibinfo {year} {2015})}\BibitemShut {NoStop}%
\bibitem [{\citenamefont {Zhou}(2024{\natexlab{b}})}]{op1}%
  \BibitemOpen
  \bibfield  {author} {\bibinfo {author} {\bibfnamefont {S.}~\bibnamefont {Zhou}},\ }\bibfield  {title} {\enquote {\bibinfo {title} {Limits of noisy quantum metrology with restricted quantum controls},}\ }\href {https://link.aps.org/doi/10.1103/PhysRevLett.133.170801} {\bibfield  {journal} {\bibinfo  {journal} {Phys. Rev. Lett.}\ }\textbf {\bibinfo {volume} {133}},\ \bibinfo {pages} {170801} (\bibinfo {year} {2024}{\natexlab{b}})}\BibitemShut {NoStop}%
\bibitem [{\citenamefont {Ragazzi}\ \emph {et~al.}(2024)\citenamefont {Ragazzi}, \citenamefont {Cavazzoni}, \citenamefont {Bordone},\ and\ \citenamefont {Paris}}]{op}%
  \BibitemOpen
  \bibfield  {author} {\bibinfo {author} {\bibfnamefont {G.}~\bibnamefont {Ragazzi}}, \bibinfo {author} {\bibfnamefont {S.}~\bibnamefont {Cavazzoni}}, \bibinfo {author} {\bibfnamefont {P.}~\bibnamefont {Bordone}}, \ and\ \bibinfo {author} {\bibfnamefont {Matteo G.~A.}\ \bibnamefont {Paris}},\ }\bibfield  {title} {\enquote {\bibinfo {title} {Generalized phase estimation in noisy quantum gates},}\ }\href {https://link.aps.org/doi/10.1103/PhysRevA.110.052425} {\bibfield  {journal} {\bibinfo  {journal} {Phys. Rev. A}\ }\textbf {\bibinfo {volume} {110}},\ \bibinfo {pages} {052425} (\bibinfo {year} {2024})}\BibitemShut {NoStop}%
\bibitem [{\citenamefont {Yang}(2024)}]{op2}%
  \BibitemOpen
  \bibfield  {author} {\bibinfo {author} {\bibfnamefont {J.}~\bibnamefont {Yang}},\ }\bibfield  {title} {\enquote {\bibinfo {title} {Quantum measurement encoding for quantum metrology},}\ }\href {https://link.aps.org/doi/10.1103/PhysRevResearch.6.043084} {\bibfield  {journal} {\bibinfo  {journal} {Phys. Rev. Res.}\ }\textbf {\bibinfo {volume} {6}},\ \bibinfo {pages} {043084} (\bibinfo {year} {2024})}\BibitemShut {NoStop}%
\bibitem [{\citenamefont {Berry}\ \emph {et~al.}(2010)\citenamefont {Berry}, \citenamefont {Xiang}, \citenamefont {Higgins}, \citenamefont {Wiseman},\ and\ \citenamefont {Pryde}}]{uni2}%
  \BibitemOpen
  \bibfield  {author} {\bibinfo {author} {\bibfnamefont {D.}~\bibnamefont {Berry}}, \bibinfo {author} {\bibfnamefont {G.}~\bibnamefont {Xiang}}, \bibinfo {author} {\bibfnamefont {B}~\bibnamefont {Higgins}}, \bibinfo {author} {\bibfnamefont {H.}~\bibnamefont {Wiseman}}, \ and\ \bibinfo {author} {\bibfnamefont {G.}~\bibnamefont {Pryde}},\ }\bibfield  {title} {\enquote {\bibinfo {title} {Entanglement-enhanced measurement of a completely unknown phase},}\ }\href {https://doi.org/10.1038/nphoton.2010.268} {\bibfield  {journal} {\bibinfo  {journal} {Nat. Photonics}\ }\textbf {\bibinfo {volume} {5}},\ \bibinfo {pages} {43} (\bibinfo {year} {2010})}\BibitemShut {NoStop}%
\bibitem [{\citenamefont {Ahnefeld}\ \emph {et~al.}(2025)\citenamefont {Ahnefeld}, \citenamefont {Theurer},\ and\ \citenamefont {Plenio}}]{COref}%
  \BibitemOpen
  \bibfield  {author} {\bibinfo {author} {\bibfnamefont {F.}~\bibnamefont {Ahnefeld}}, \bibinfo {author} {\bibfnamefont {T.}~\bibnamefont {Theurer}}, \ and\ \bibinfo {author} {\bibfnamefont {M.~B.}\ \bibnamefont {Plenio}},\ }\bibfield  {title} {\enquote {\bibinfo {title} {Coherence as a resource for phase estimation},}\ }\href {https://doi.org/10.48550/arXiv.2505.18544} {\bibfield  {journal} {\bibinfo  {journal} {arXiv preprint arXiv:2505.18544}\ } (\bibinfo {year} {2025})}\BibitemShut {NoStop}%
\bibitem [{\citenamefont {Fr\"owis}\ \emph {et~al.}(2016)\citenamefont {Fr\"owis}, \citenamefont {Sekatski},\ and\ \citenamefont {D\"ur}}]{uni5}%
  \BibitemOpen
  \bibfield  {author} {\bibinfo {author} {\bibfnamefont {F.}~\bibnamefont {Fr\"owis}}, \bibinfo {author} {\bibfnamefont {P.}~\bibnamefont {Sekatski}}, \ and\ \bibinfo {author} {\bibfnamefont {W.}~\bibnamefont {D\"ur}},\ }\bibfield  {title} {\enquote {\bibinfo {title} {Detecting large quantum fisher information with finite measurement precision},}\ }\href {\doibase 10.1103/PhysRevLett.116.090801} {\bibfield  {journal} {\bibinfo  {journal} {Phys. Rev. Lett.}\ }\textbf {\bibinfo {volume} {116}},\ \bibinfo {pages} {090801} (\bibinfo {year} {2016})}\BibitemShut {NoStop}%
\bibitem [{\citenamefont {Liu}\ \emph {et~al.}(2025)\citenamefont {Liu}, \citenamefont {Yang}, \citenamefont {Shi},\ and\ \citenamefont {Yu}}]{uni4}%
  \BibitemOpen
  \bibfield  {author} {\bibinfo {author} {\bibfnamefont {J.-X.}\ \bibnamefont {Liu}}, \bibinfo {author} {\bibfnamefont {J.}~\bibnamefont {Yang}}, \bibinfo {author} {\bibfnamefont {H.-L.}\ \bibnamefont {Shi}}, \ and\ \bibinfo {author} {\bibfnamefont {S.}~\bibnamefont {Yu}},\ }\bibfield  {title} {\enquote {\bibinfo {title} {Optimal local measurements in single-parameter quantum metrology},}\ }\href {https://link.aps.org/doi/10.1103/PhysRevA.111.022436} {\bibfield  {journal} {\bibinfo  {journal} {Phys. Rev. A}\ }\textbf {\bibinfo {volume} {111}},\ \bibinfo {pages} {022436} (\bibinfo {year} {2025})}\BibitemShut {NoStop}%
\bibitem [{\citenamefont {Wootters}(1981)}]{Wooters}%
  \BibitemOpen
  \bibfield  {author} {\bibinfo {author} {\bibfnamefont {W.~K.}\ \bibnamefont {Wootters}},\ }\bibfield  {title} {\enquote {\bibinfo {title} {Statistical distance and hilbert space},}\ }\href {https://link.aps.org/doi/10.1103/PhysRevD.23.357} {\bibfield  {journal} {\bibinfo  {journal} {Phys. Rev. D}\ }\textbf {\bibinfo {volume} {23}},\ \bibinfo {pages} {357} (\bibinfo {year} {1981})}\BibitemShut {NoStop}%
\bibitem [{\citenamefont {Braunstein}\ and\ \citenamefont {Caves}(1994{\natexlab{b}})}]{Braunstein1}%
  \BibitemOpen
  \bibfield  {author} {\bibinfo {author} {\bibfnamefont {S.~L.}\ \bibnamefont {Braunstein}}\ and\ \bibinfo {author} {\bibfnamefont {C.~M.}\ \bibnamefont {Caves}},\ }\bibfield  {title} {\enquote {\bibinfo {title} {Statistical distance and the geometry of quantum states},}\ }\href {https://link.aps.org/doi/10.1103/PhysRevLett.72.3439} {\bibfield  {journal} {\bibinfo  {journal} {Phys. Rev. Lett.}\ }\textbf {\bibinfo {volume} {72}},\ \bibinfo {pages} {3439} (\bibinfo {year} {1994}{\natexlab{b}})}\BibitemShut {NoStop}%
\bibitem [{\citenamefont {Holevo}(2011)}]{holevo}%
  \BibitemOpen
  \bibfield  {author} {\bibinfo {author} {\bibfnamefont {A.~S.}\ \bibnamefont {Holevo}},\ }\href {https://archive.org/details/probabilisticsta0000hole} {\emph {\bibinfo {title} {Probabilistic and statistical aspects of quantum theory}}},\ Vol.~\bibinfo {volume} {1}\ (\bibinfo  {publisher} {Springer Science \& Business Media},\ \bibinfo {year} {2011})\BibitemShut {NoStop}%
\bibitem [{\citenamefont {Tan}\ and\ \citenamefont {Jeong}(2019)}]{review1}%
  \BibitemOpen
  \bibfield  {author} {\bibinfo {author} {\bibfnamefont {K.~C.}\ \bibnamefont {Tan}}\ and\ \bibinfo {author} {\bibfnamefont {H.}~\bibnamefont {Jeong}},\ }\bibfield  {title} {\enquote {\bibinfo {title} {{Nonclassical light and metrological power: An introductory review}},}\ }\href {https://doi.org/10.1116/1.5126696} {\bibfield  {journal} {\bibinfo  {journal} {AVS Quantum Sci.}\ }\textbf {\bibinfo {volume} {1}},\ \bibinfo {pages} {014701} (\bibinfo {year} {2019})}\BibitemShut {NoStop}%
\bibitem [{\citenamefont {T\'oth}\ and\ \citenamefont {V\'ertesi}(2018)}]{Ent3}%
  \BibitemOpen
  \bibfield  {author} {\bibinfo {author} {\bibfnamefont {G.}~\bibnamefont {T\'oth}}\ and\ \bibinfo {author} {\bibfnamefont {T.}~\bibnamefont {V\'ertesi}},\ }\bibfield  {title} {\enquote {\bibinfo {title} {Quantum states with a positive partial transpose are useful for metrology},}\ }\href {\doibase 10.1103/PhysRevLett.120.020506} {\bibfield  {journal} {\bibinfo  {journal} {Phys. Rev. Lett.}\ }\textbf {\bibinfo {volume} {120}},\ \bibinfo {pages} {020506} (\bibinfo {year} {2018})}\BibitemShut {NoStop}%
\bibitem [{\citenamefont {Kwiat}\ \emph {et~al.}(1995)\citenamefont {Kwiat}, \citenamefont {Mattle}, \citenamefont {Weinfurter}, \citenamefont {Zeilinger}, \citenamefont {Sergienko},\ and\ \citenamefont {Shih}}]{Exst}%
  \BibitemOpen
  \bibfield  {author} {\bibinfo {author} {\bibfnamefont {P.~G.}\ \bibnamefont {Kwiat}}, \bibinfo {author} {\bibfnamefont {K.}~\bibnamefont {Mattle}}, \bibinfo {author} {\bibfnamefont {H.}~\bibnamefont {Weinfurter}}, \bibinfo {author} {\bibfnamefont {A.}~\bibnamefont {Zeilinger}}, \bibinfo {author} {\bibfnamefont {A.~V.}\ \bibnamefont {Sergienko}}, \ and\ \bibinfo {author} {\bibfnamefont {Y.}~\bibnamefont {Shih}},\ }\bibfield  {title} {\enquote {\bibinfo {title} {New high-intensity source of polarization-entangled photon pairs},}\ }\href {\doibase 10.1103/PhysRevLett.75.4337} {\bibfield  {journal} {\bibinfo  {journal} {Phys. Rev. Lett.}\ }\textbf {\bibinfo {volume} {75}},\ \bibinfo {pages} {4337} (\bibinfo {year} {1995})}\BibitemShut {NoStop}%
\bibitem [{\citenamefont {Kim}\ \emph {et~al.}(2003)\citenamefont {Kim}, \citenamefont {Kulik}, \citenamefont {Chekhova}, \citenamefont {Grice},\ and\ \citenamefont {Shih}}]{Exst2}%
  \BibitemOpen
  \bibfield  {author} {\bibinfo {author} {\bibfnamefont {Y.-H.}\ \bibnamefont {Kim}}, \bibinfo {author} {\bibfnamefont {S.~P.}\ \bibnamefont {Kulik}}, \bibinfo {author} {\bibfnamefont {M.~V.}\ \bibnamefont {Chekhova}}, \bibinfo {author} {\bibfnamefont {W.~P.}\ \bibnamefont {Grice}}, \ and\ \bibinfo {author} {\bibfnamefont {Y.}~\bibnamefont {Shih}},\ }\bibfield  {title} {\enquote {\bibinfo {title} {Experimental entanglement concentration and universal bell-state synthesizer},}\ }\href {https://link.aps.org/doi/10.1103/PhysRevA.67.010301} {\bibfield  {journal} {\bibinfo  {journal} {Phys. Rev. A}\ }\textbf {\bibinfo {volume} {67}},\ \bibinfo {pages} {010301} (\bibinfo {year} {2003})}\BibitemShut {NoStop}%
\bibitem [{\citenamefont {Kim}\ \emph {et~al.}(2001)\citenamefont {Kim}, \citenamefont {Chekhova}, \citenamefont {Kulik}, \citenamefont {Rubin},\ and\ \citenamefont {Shih}}]{Exst3}%
  \BibitemOpen
  \bibfield  {author} {\bibinfo {author} {\bibfnamefont {Yoon-Ho}\ \bibnamefont {Kim}}, \bibinfo {author} {\bibfnamefont {Maria~V.}\ \bibnamefont {Chekhova}}, \bibinfo {author} {\bibfnamefont {Sergei~P.}\ \bibnamefont {Kulik}}, \bibinfo {author} {\bibfnamefont {Morton~H.}\ \bibnamefont {Rubin}}, \ and\ \bibinfo {author} {\bibfnamefont {Yanhua}\ \bibnamefont {Shih}},\ }\bibfield  {title} {\enquote {\bibinfo {title} {Interferometric bell-state preparation using femtosecond-pulse-pumped spontaneous parametric down-conversion},}\ }\href {\doibase 10.1103/PhysRevA.63.062301} {\bibfield  {journal} {\bibinfo  {journal} {Phys. Rev. A}\ }\textbf {\bibinfo {volume} {63}},\ \bibinfo {pages} {062301} (\bibinfo {year} {2001})}\BibitemShut {NoStop}%
\end{thebibliography}%
\end{document}